\newtheorem{assump}{Assumption}
\newtheorem{theorem}{Theorem}
\newtheorem{corollary}[theorem]{Corollary}
\newtheorem{lemma}[theorem]{Lemma}
\theoremstyle{remark}
\newtheorem*{remark}{Remark}
\newtheorem{definition}{Definition}
\newcommand{\be}{\begin{equation}}
\newcommand{\ee}{\end{equation}}
\newcommand{\ea}[1]{\begin{align}#1\end{align}}
\newcommand{\bc}{\begin{center}}
\newcommand{\ec}{\end{center}}
\newcommand{\bmt}{\begin{pmatrix}}
\newcommand{\emt}{\end{pmatrix}}
\newcommand{\fd}[1]{\mathbb{#1}}
\newcommand{\bsmt}{\left(\begin{smallmatrix}}
\newcommand{\esmt}{\end{smallmatrix}\right)}
\newcommand{\vpu}[1]{^{\vphantom{#1}}}
\newcommand{\la}{\langle}
\newcommand{\ra}{\rangle}
\newcommand{\bs}{\Delta_{\rm{e}}}
\newcommand{\ps}{\Delta}
\newcommand{\dl}[1]{#1^{*}}
\newcommand{\ddl}[1]{#1^{**}}
\newcommand{\bo}{B_{\rm{o}}}
\newcommand{\bi}{B_{\rm{i}}}
\newcommand{\midball}{B_{\rm m}}
\newcommand{\midsphere}{S_{\rm m}}
\newcommand{\so}{S_{\rm{o}}}
\newcommand{\si}{S_{\rm{i}}}
\newcommand{\ro}{r_{\rm{o}}}
\newcommand{\ri}{r_{\rm{i}}}
\newcommand{\rmid}{r_{\rm m}}
\newcommand{\allqplexes}{\mathcal{Q}}
\newcommand{\allgerms}{\mathcal{A}}
\newcommand{\stem}{stem}
\newcommand{\envelope}{envelope}
\newcommand{\qp}{Q}
\newcommand{\pr}{G}
\newcommand{\gp}{\mathcal{G}}
\newcommand{\qplex}{qplex}
\newcommand{\qplexes}{qplexes}
\newcommand{\Qplexes}{Qplexes}
\newcommand{\qgp}[1]{G_{#1}}
\newcommand{\typ}[1]{\mathcal{T}_{#1}}
\newcommand{\sym}[1]{\mathcal{S}_{#1}}
\newcommand{\ort}[1]{\mathcal{O}_{#1}}
\newcommand{\herm}{B_H}
\newcommand{\dens}{S}
\newcommand{\cP}{\mathcal{P}}
\newcommand{\cR}{\mathcal{R}}
\newcommand{\inprod}[2]{{\left\langle #1 , #2 \right\rangle}}
\newcommand{\ket}[1]{{\left| #1 \right \rangle}}
\DeclareMathOperator{\Ot}{O}
\DeclareMathOperator{\PU}{PU}
\DeclareMathOperator{\EU}{EU}
\DeclareMathOperator{\PEU}{PEU}
\DeclareMathOperator{\M}{M}
\newcommand{\stm}{\mathscr{S}}
\newcommand{\env}{\mathscr{E}}
\DeclareMathOperator{\cc}{cc}
\DeclareMathOperator{\Tr}{tr}
\DeclareMathOperator{\su}{\mathfrak{su}}
\newcommand{\arxiv}[2][]{\ifthenelse{\isempty{#1}}{\href{http://arxiv.org/abs/#2}{{\tt arXiv:\allowbreak{}#2}}} {\href{http://arxiv.org/abs/#2}{{\tt arXiv:\allowbreak{}#2 [#1]}}}}
\newcommand{\booktitle}{\textsl}
\newcommand{\hrefdoi}[2]{\href{https://dx.doi.org/#1}{#2}}
\begin{document}

\title{Introducing the Qplex: A Novel Arena for Quantum Theory}

\author{Marcus Appleby}
\affiliation{Centre for Engineered Quantum Systems, School of Physics, University of Sydney, Sydney, Australia}
\author{Christopher A.\ Fuchs}
\affiliation{Physics Department, University of Massachusetts Boston, Boston, MA 02125, USA}
\affiliation{Max Planck Institute for Quantum Optics, 85748 Garching, Germany}
\author{Blake C.\ Stacey}
\affiliation{Department of Physics, University of Massachusetts Boston, Boston, MA 02125, USA}

\author{Huangjun Zhu}

\affiliation{Institute for Theoretical Physics, University of Cologne, 50937 Cologne, Germany}

\date{\today}

\begin{abstract}
We reconstruct quantum theory starting from the premise that, as Asher Peres remarked, ``Unperformed experiments have no results.''  The tools of quantum information theory, and in particular the symmetric informationally complete (SIC) measurements, provide a concise expression of how exactly Peres's dictum holds true.  That expression is a constraint on how the probability distributions for outcomes of different, hypothetical and mutually exclusive experiments ought to mesh together, a type of constraint not foreseen in classical thinking.  Taking this as our foundational principle, we show how to reconstruct the formalism of quantum theory in finite-dimensional Hilbert spaces.  Along the way, we derive a condition for the existence of a $d$-dimensional SIC.
\end{abstract}

\maketitle

\setcounter{assump}{-1}

\section{Introduction}

%

The arena for standard probability theory is the probability simplex---that is, for a trial of $n$ possible outcomes, the continuous set $\Delta_n$ of all $n$-vectors ${p}\,$ with nonnegative entries $p(i)$ satisfying $\sum_i p(i)=1$.  But what is the arena for quantum theory?  The answer to this question depends upon how one views quantum theory.  If, for instance, one views it as a noncommutative {\it generalization\/} of probability theory, then the arena could be the convex sets of density operators and positive-operator-valued measures over a complex Hilbert space. In contrast, Refs.\ \cite{fuchs2013,fuchs2009,Voldemort} have argued that quantum theory is not so much a generalization of probability theory as an {\it addition\/} to it.  This means that standard probability theory is never invalidated, but that further rules must be added to it when the subject matter concerns measurements on quantum systems.  One implication of this is that behind every application of quantum theory is a more basic simplex, which through a not-yet-completely-understood consistency requirement, gets trimmed or cropped to a convex subset isomorphic to the usual space of quantum states~\cite{fuchs2002}.\footnote{See also \cite[p.\ 487]{fuchs2010} for the historical roots of this idea.} In the specific context formalized below, we call an arena of this sort---a suitably cropped simplex as the starting point for a full-fledged derivation of quantum theory---a {\it qplex}.  In a slogan: If the simplex is the starting point for probability theory, the qplex is the starting point for the quantum.

The introduction of a more basic simplex surrounding the qplex, however, should not be construed as a capitulation to the idea of a hidden-variable theory.  Rather it is an attempt to bring to the front of the formalism a foundational idea nicely captured by Asher Peres's famous quip ``unperformed experiments have no outcomes''~\cite{Peres78}.  Here the simplex stands for the outcomes of an experiment that will never be done, but could have been done.  How is probability theory all by itself to connect the one experiment to the other?  It has no tools for it.  But quantum theory does, through the Born rule, when suitably rewritten in the language of the qplex.  From this point of view, the meaning of the Born rule for probabilities in any actual experiment is that ``behind'' the experiment is a different, hypothetical experiment whose probabilities {\it must be taken into account\/} in the calculation.

To be concrete, let us rewrite quantum theory in a language that would make this apparent {\it were the right mathematical tool available}.  Consider the setting of a finite $d$-level quantum system, and suppose that one of the elusive symmetric informationally complete quantum measurements~\cite{zauner1999,renes2004} exists for it.  We shall call such an object a ``SIC'' for short.  A SIC is a set of $d^2$ rank-one projection operators $\Pi_i=|\psi_i\rangle\langle\psi_i|$ such that
\begin{equation}
\Tr (\Pi_k \Pi_l) = \frac{d\delta_{kl} + 1}{d + 1}\;.
\label{eq:SICOverlaps}
\end{equation}
For such a set of operators, one can prove that if they exist at all, they must be linearly independent, and rescaling each to $\frac{1}{d}\Pi_i$, they collectively give an informationally complete positive-operator-valued measure (POVM), i.e., $\sum_i\frac{1}{d}\Pi_i=I$.  Thus, for any quantum state $\rho$, a SIC can be used to specify a measurement for which the probabilities of outcomes $p(i)$ specify $\rho$ itself.  That is, if
\be
p(i) = \frac{1}{d} \Tr(\rho \Pi_i)\;,
\label{eq:SicProbabilities}
\ee
then
\begin{eqnarray}
\rho
 &=& \sum_{i=1}^{d^2} \left[(d+1) p(i) - \frac{1}{d}\right]\! \Pi_i \label{eq:rhoTermsProbs}\\
 &=& (d + 1) \sum_{i=1}^{d^2} p(i) \Pi_i - I.
\end{eqnarray}

Is it always possible to write a quantum state like this?\footnote{To
  our knowledge the first person to write down this expression was the
  Cornell University undergraduate Gabriel G.\ Plunk in an attachment
  to a 18 June 2002 email to one of us (CAF), though it went
  undiscovered for many years. See
  Ref.~\cite[pp.\ 472--474]{Fuchs2014}.}  Unfortunately, to date,
analytic proofs of SIC existence have only been found in dimensions
2--21, 24, 28, 30, 31, 35, 37, 39, 43 and 48~\cite{appleby2016,
  applebyprivate}.  However, very high-precision numerical
approximations (many to 8,000 and 16,000 digits) have been discovered
for all dimensions 2 to 147 without exception, plus some dimensions
sporadically beyond that---168, 172, 195, 199, 228, 259,
323, at last count~\cite{scott2010, scottprivate, hoangprivate}.  In
general, the mood of the community is that a SIC should exist in every
finite dimension $d$, but we call the SICs ``elusive'' because in more
than 18 years of effort no one has ever proven it.  See
Ref.~\cite{Galois} for an extensive bibliography on the subject.  For
the purpose of the present discussion, let us suppose that at least
one SIC can be found in any finite dimension $d$.

One can now see how to express quantum-state space as a proper subset $\qp$ of a probability simplex $\Delta_{d^2}$ over $d^2$ outcomes.  That it cannot be the full simplex comes about from the following consideration:  For any ${p}\in \Delta_{d^2}$, Eq.~(\ref{eq:rhoTermsProbs}) gives a Hermitian operator $\rho$ with trace~1, but the operator may not be positive-semidefinite as is required of a density operator.  Instead, the density operators correspond to a convex subset specified by its extreme points, the pure states $\rho^2 = \rho$.  Thanks to an observation by Jones, Flammia and Linden~\cite{flammia2004,jones2005}, we can also characterize pure states as those Hermitian matrices satisfying
\begin{equation}
\Tr\rho^2 = \Tr\rho^3 = 1.
\end{equation}
This expression of purity yields two conditions on the probability
distributions ${p}$ \cite{fuchs2009,appleby2011,fuchs2013}.  First,
\begin{equation}
\sum_{i=1}^{d^2} p(i)^2 = \frac{2}{d(d+1)},
\label{eq:purity1}
\end{equation}
and second,
\begin{equation}
\sum_{ijk} c_{ijk}\, p(i) p(j) p(k) = \frac{d+7}{(d+1)^3},
\label{eq:purity2}
\end{equation}
where we have defined the real-valued, completely symmetric three-index tensor
\begin{equation}
c_{ijk} = \hbox{Re}\, \Tr (\Pi_i \Pi_j \Pi_k).
\label{eq:c-tensor}
\end{equation}
The full state space $\qp$ is the convex hull of probability distributions satisfying Eqs.~(\ref{eq:purity1}) and (\ref{eq:purity2}).

So the claim can be made true, but what a strange-looking set the quantum states become when written in these terms!  What could account for it except already knowing the full-blown quantum theory as usually formulated?

Nevertheless, every familiar operation in the textbook quantum formalism has its translation into the language of this underlying probability simplex, properly restricted to the subset $\qp$.  For example, given a quantum state $\rho$, one uses the Born rule to calculate the probabilities an experiment will yield its various outcomes with.  Using the SIC representation, the description of the measuring apparatus becomes an ordinary set of conditional probabilities, $r(j|i)$.  For instance, for a POVM defined by the set of effects
\begin{equation}
\{E_1,\ldots,E_n\},\ \sum_j E_j = I,
\end{equation}
the Born rule tells us the probabilities $q(j)$ for its outcomes are
\begin{equation}
q(j) = \Tr (\rho E_j),
\end{equation}
but this can be reexpressed as
\begin{equation}
q(j) = \sum_i \left[(d+1) p(i) - \frac{1}{d}\right] r(j|i),
\label{eq:SICMeasProbs}
\end{equation}
where
\begin{equation}
r(j|i) = \Tr (E_j\Pi_i)
\end{equation}
meets the criteria for a conditional probability distribution.

In Ref.\ \cite{fuchs2013}, the simple form in Eq.~(\ref{eq:SICMeasProbs}) was considered so evocative of the usual law of total probability from standard probability theory, and seemingly so basic to Peres's ``unperformed experiments have no outcomes'' considerations, that it was dubbed the \emph{urgleichung}---or German for ``primal equation.''

Similarly, if we have a quantum state $\rho$ encoding our expectations for the SIC measurement on some system at time $t = 0$, we can evolve that state forward to deduce what we should expect at a later time, $t = \tau$.  In textbook language, we relate these two quantum states by a quantum channel---in the simplest case, by a unitary operation:
\begin{equation}
\rho' = U \rho U^\dag.
\end{equation}
Let the SIC representation of $\rho$ be $p(i)$, and let the SIC representation of $\rho'$ be $p'(j)$.  We translate the unitary $U$ into SIC language by calculating
\begin{equation}
u(j|i) = \frac{1}{d}\Tr\left(U\Pi_i U^\dag \Pi_j\right).
\end{equation}
The object $u$ is a $d^2 \times d^2$ doubly stochastic matrix~\cite{DStoch}.  But now, something fascinating happens.  The two quantum states $p(i)$ and $p'(j)$ are related according to
\begin{equation}
p'(j) = \sum_i \left[(d+1) p(i) - \frac{1}{d}\right] u(j|i),
\label{eq:SICTimeEvol}
\end{equation}
an expression identical in form to Eq.\ (\ref{eq:SICMeasProbs}).

Formulas (\ref{eq:SICMeasProbs}) and (\ref{eq:SICTimeEvol}) may be compared with {\it what would have been given\/} by the standard law of total probability
\begin{equation}
q(j) = \sum_i p(i) r(j|i),
\label{eq:ClassMeasProbs}
\end{equation}
and the standard rule for stochastic evolution,
\begin{equation}
p'(j) = \sum_i p(i) u(j|i),
\label{eq:ClassTimeEvol}
\end{equation}
were they applicable. This emphasizes again that the quantum laws are different but, in the setting of a SIC-induced simplex, intriguingly similar to their classical counterparts.

This leads one to wonder whether, or to what extent, these very special forms Eqs.\ (\ref{eq:SICMeasProbs}) and (\ref{eq:SICTimeEvol}) might imply the very arena $\qp$ in which they are valid.  This is the program laid out in Refs.\ \cite{fuchs2013,fuchs2009,Voldemort} and a key motivation for the geometric studies of Refs.\ \cite{appleby2011,appleby2011b,GroupAlg}.  Here we will carry the program much further than previously.

Another familiar operation in the standard language of quantum theory is the Hilbert--Schmidt inner product between two quantum states, $\Tr(\rho\sigma)$.  Using the SIC representations of $\rho$ and $\sigma$ as probability vectors ${p}$ and ${s}$, it is straightforward to show that
\begin{equation}
\Tr(\rho\sigma) = d(d+1) \inprod{p}{s} - 1.
\end{equation}
Because the inner product of any two quantum states $\rho$ and $\sigma$ is bounded between 0 and 1, we know that
\begin{equation}
\frac{1}{d(d+1)} \leq \inprod{p}{s}
 \leq \frac{2}{d(d+1)}.
\label{eq:p-dot-s-bounds}
\end{equation}
We designate these the \emph{fundamental inequalities.}  The upper
bound is simply the quadratic constraint we saw already in
Eq.~(\ref{eq:purity1}), but the lower bound imposes new and
surprisingly intricate conditions on the vectors that can be
admissible states.

We will say that two vectors ${p}$ and ${s}$ in the probability
simplex $\Delta_{d^2}$ are \emph{consistent} if their inner product
obeys both inequalities in Eq.~(\ref{eq:p-dot-s-bounds}).  If we have
a subset of the probability simplex in which every pair of vectors
obeys those bounds, we call it a \emph{germ}: It is an entity from
which a larger structure can grow. If including one additional vector
in a germ could make that set inconsistent, then that germ is said to
be a \emph{maximal.}  We will see that a maximal germ is one way to
define a \emph{qplex.}

Any quantum state space in SIC representation is a qplex.  However, the converse is not true:  There exist qplexes that are not equivalent to quantum state space.  That said, any qplex is already a mathematically rich structure.  A primary goal of this paper is to use that richness and identify an extra condition which can be imposed upon a qplex, such that satisfying that constraint will make the qplex into a quantum state space.

In Section~\ref{sec:basic} we see how quantum physics furnishes a new way that probability assignments can mesh together, a way not foreseen in classical thinking.  This will lead us from very general considerations to the specific definition of a qplex. In Section~\ref{sec:polarity} we apply a tool from the theory of polytopes~\cite{grun,zieg} to derive a number of basic results about the geometry of an arbitrary \qplex.  Among other applications, we find a simple, intuitively appealing proof that a polytope embedded in quantum state space cannot contain the in-sphere of quantum state space.

Sections~\ref{sec:intg} and~\ref{sec:qgroups} are the core of the paper.   In almost every geometrical problem, a study of the symmetries of the object or objects of interest plays an essential role.  However, it turns out that \qplexes\ have the unusual property that the symmetry group, instead of having to be imposed from the outside, is contained internally to the structure.  In this they might be compared with elliptic curves~\cite{elliptic3}.  In spite of the extreme simplicity of the defining equation
\be
y^2 = x^3 + a x + b,
\ee
elliptic curves have managed to remain at the cutting edge of mathematics for two millennia, from the work of Diophantus down to the present day. They play an important role in, for example, the recent proof of Fermat's last theorem~\cite{wiles}. One of the reasons for their high degree of mathematical importance is the fact that they carry within themselves a concealed group.  \Qplexes\ have a similar property.  In Sections~\ref{sec:intg} and~\ref{sec:qgroups} we describe this property, and examine its implications.

In Section~\ref{sec:intg} we present our main application.  We apply the results established in the previous section to the SIC existence problem and show that SIC existence in dimension $d$ is equivalent to the existence of a certain kind of subgroup of the real orthogonal group in dimension $d^2-1$.  We  presented this result in a previous publication~\cite{GroupAlg}, where we derived it by more conventional means.  In this paper, we describe the way we originally proved it, using the \qplex\ formulation.  This is because we believe the method of proof is at least as interesting as the result itself.  

In Section~\ref{sec:character} we turn to the problem of identifying the ``missing assumption'' which will serve to pick out quantum state space uniquely from the set of all \qplexes.  Of course, as is usual in such cases, there is more than one possibility.  We identify one such assumption:  the requirement that the symmetry group contain a subgroup isomorphic to the projective unitary group.  This is a useful result because it means that we have a complete characterization of quantum state space in probabilistic terms.  It also has an important corollary:  That SIC existence in dimension $d$ is equivalent to the existence of a certain kind of subgroup of the real orthogonal group in dimension $d^2-1$.

Finally, we wrap up in Section~\ref{sec:future} with list of several possible directions for future investigations.  If this research program is on the right track, it is imperative that a more basic path from qplex to quantum state space be found.  There is plenty of work to do here.

\section{The Basic Scheme}
\label{sec:basic}
The urgleichung (\ref{eq:SICMeasProbs}) and the inequalities
(\ref{eq:p-dot-s-bounds}) are not independent.  In this section, we
will start with a generalized form of the urgleichung and, making a
few additional assumptions, derive the fundamental inequalities.  This
is, strictly speaking, not necessary for the mathematical developments
in the later sections of the paper.  One can assume the fundamental
inequalities as a starting point and then proceed from that premise.
In fact, we will later see that using that approach, one can derive as
consequences the assumptions we will invoke here.  Speaking in general
terms, we can think of this section as proving the ``if'' direction,
and the following section as proving ``only if.''  One benefit of
deriving the fundamental inequalities in this manner is to help
compare and contrast our reconstruction of quantum theory with other
approaches~\cite{transcript, coecke2016, HoehnWever, Masanes, Hardy01,
  Schack03, Barnum}.  These other reconstructions are
\emph{operational} in character: They take, as fundamental conceptual
ingredients, laboratory procedures like ``preparations'' and
``tests.''  Our language in this section will have a similar tone.
However, we will keep Peres' dictum that ``unperformed experiments
have no results'' at the forefront of our considerations.

Our first step is to understand how the urgleichung is an example of
this principle.  To do so, we consider the following
scenario~\cite{Voldemort, transcript}.

\begin{figure}[ht]
\includegraphics[width=8cm]{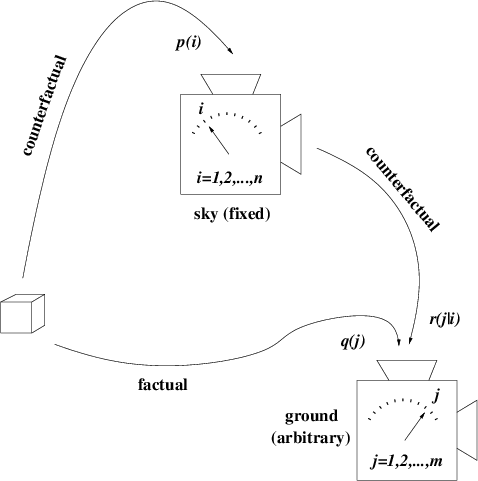}
\caption{\label{fig:ground-versus-sky} Analysing one scenario in terms
  of another: An agent Alice intends to perform an experiment on the
  ground, whose outcomes she labels with the index $j$.  The other
  index, $i$, labels the outcomes of a ``Bureau of Standards''
  measurement which Alice \emph{could} carry out, but which remains
  unperformed.  Classical physics and quantum physics both allow for
  Bureau of Standards measurements, experiments that are
  \emph{informationally complete} in the following sense.  If Alice
  has a set of probabilities $p(i)$ for the Bureau of Standards
  measurement outcomes, she can calculate the proper set of
  probabilities $q(j)$ for the outcomes of the ground measurement,
  using the conditional probabilities $r(j|i)$.}
\end{figure}

Fix a dimension $d \geq 2$, and consider a system to which we will
ascribe a quantum state in $d$-dimensional Hilbert space
$\mathcal{H}_d$.  We will investigate this system by means of two
measuring devices, which we model in the standard way by POVMs.  One
measuring device is a SIC measurement, defined by a set of $d^2$
rank-1 projection operators $\{\Pi_i\}$.  The effects which comprise
this POVM are the operators rescaled by the dimension:
\begin{equation}
E_i = \frac{1}{d} \Pi_i.
\end{equation}
We will refer to this as the ``Bureau of Standards'' measurement.  It
is helpful to imagine this measuring device as being located in some
comparatively inaccessible place: perhaps inside a vault, or secured
in an airship floating through the sky.  An agent \emph{can} take her
system of interest to the Bureau of Standards device, but she has good
reason to want to bypass that step.  The other measurement is an
arbitrary POVM, whose effects we denote by~$F_j$.

As illustrated in Figure~\ref{fig:ground-versus-sky}, we will consider
two experimental scenarios, which we will call the ``ground path'' and
the ``sky path.''  If we follow the ground path, we take our system of
interest directly to the $\{F_j\}$ measuring device, which we will
call the measurement on the ground.  If we instead follow the sky
path, we will take our system to the Bureau of Standards measurement,
physically obtain a result by performing that measurement, and then
come back down for the second stage, where we conduct the measurement
on the ground.

Suppose that Alice follows the sky path in
Figure~\ref{fig:ground-versus-sky}.  That is, she physically takes her
system of interest and performs the Bureau of Standards measurement
upon it.  Then, she returns the system to the ground and conducts the
measurement $\{F_j\}$ there.  Before carrying out the Bureau of
Standards measurement, she has some expectations for what might
happen, which she encodes as a probability distribution $p(i)$.
Obtaining an outcome $i$, she updates her state assignment for the
system to the operator $\Pi_i$.  Her expectations for the outcome of
the ground measurement will then be the conditional probabilities
$r(j|i)$.  Prior to performing the Bureau of Standards measurement,
Alice assigns the probability
\begin{equation}
\hbox{Prob}(j) = \sum_i p(i) r(j|i)
\label{eq:naive-LoTP}
\end{equation}
to the event of obtaining outcome $j$ when she brings the system back
down to the ground and performs the second measurement in the sequence.

Classical intuition suggests that Alice should use the same expression
for computing the probability of outcome $j$ on the ground even if she
goes directly to the ground experiment and does not perform the
measurement in the sky.  If $p(i)$ is the probability that she
\emph{would} obtain outcome $i$ \emph{were she to perform} the sky
measurement, and $r(j|i)$ is the conditional probability for outcome
$j$ \emph{if} the event $i$ \emph{were to occur} in the sky, then it
is almost instinctive to calculate the probability of~$j$ by summing
$p(i) r(j|i)$.  Mathematically, this is \emph{not necessarily
  correct,} because the ground path and the sky path are two different
physical scenarios.  If $C_1$ and $C_2$ are two background conditions,
then nothing in probability theory forces $\hbox{Prob}(j|C_1) =
\hbox{Prob}(j|C_2)$.  Writing $q(j)$ for the probability of obtaining
$j$ by following the ground path, we have that
\begin{equation}
q(j) \hbox{ is not necessarily equal to }
 \sum_i p(i) r(j|i).
\end{equation}
It is merely the assumption that an informationally complete
measurement must be measuring some pre-existing physical property of
the system that leads Alice to use Eq.~(\ref{eq:naive-LoTP}) even when
she does not physically obtain an outcome in the sky.  In other words,
using Eq.~(\ref{eq:naive-LoTP}) to calculate $q(j)$ amounts to
assuming that the measurement outcome $i$ is \emph{as good as
  existing,} even when it remains completely counterfactual.

Probability theory itself does not tell us how to find $q(j)$ in terms
of~$p(i)$ and $r(j|i)$.  Classical intuition suggests one way of
augmenting the abstract formalism of probability theory:  using
Eq.~(\ref{eq:naive-LoTP}).  The crucial point is that \emph{quantum
  theory gives us an alternative.}  It is simply to use the Born rule,
in the form of the urgleichung.

The Born-rule probability for obtaining the outcome with
index $j$ is
\begin{equation}
q(j) = \Tr(\rho F_j) = \sum_i \left[(d+1) p(i) - \frac{1}{d}\right]
r(j|i),
\end{equation}
where
\begin{equation}
r(j|i) = \Tr (\Pi_i F_j).
\end{equation}
Note that $r(j|i)$ is also the probability that the Born rule would
tell us to assign to the outcome $j$ if our quantum state assignment
for the system were $\Pi_i$.

Probability theory is a way to augment our raw experiences of life:
It provides a means to manage our expectations carefully.  In turn,
quantum theory augments the mathematics of probability, furnishing
links between quantities that, considering only the formalism of
probability theory, would be unrelated.  These new relationships are
quantitatively precise, but at variance with classical intuition,
reflecting the principle that unperformed experiments have no outcomes.

We now explore the consequences of relating mutually exclusive
hypothetical scenarios by the urgleichung.  Using seven assumptions, of
which the urgleichung is the most radical, we will arrive at the
fundamental inequalities (\ref{eq:p-dot-s-bounds}).  Because the
constants $d^2$ and $d+1$ and $1/d$ look rather arbitrary at first
glance, we will begin with a more general expression.

\begin{assump}
\label{assump:urgleichung}
The Generalized Urgleichung.  Given a Bureau of Standards probability
distribution $\{p(i): i = 1,\ldots,N\}$, and a matrix of conditional
probabilities $r(j|i)$, we compute the probabilities for an experiment
on the ground by means of
\begin{equation}
q(j) = \sum_{i=1}^N \left[\alpha p(i) - \beta\right] r(j|i).
\label{eq:gen-urgleichung}
\end{equation}
\end{assump}
In what follows, this will be our primary means of relating one
probability distribution to another.  The basic normalization
requirements are
\begin{equation}
\sum_i p(i) = 1,\ \sum_j r(j|i) = 1,\ \sum_j q(j) = 1.
\end{equation}
Normalization relates the constants $\alpha$, $\beta$ and $N$:
\begin{equation}
\alpha = N\beta + 1.
\end{equation}

We denote the set of valid states ${p}$ by~$\cP$, and the set of
valid measurements by~$\cR$.  For any ${p} \in \cP$ and any
$r(j|i) \in \cR$, the vector ${q}$ calculated using the
urgleichung is a proper probability distribution.

If we take any $r \in \cR$ and sum over both indices, we find that
\begin{equation}
\sum_{i,j} r(j|i) = \sum_i 1 = N.
\end{equation}

\begin{assump}
\label{assump:max}
Maximality.  The set of all states $\cP$ and the set of all measurements $\cR$ together have the property that no element can be added to either without introducing an inconsistency, i.e., a pair $(p \in \cP, r \in \cR)$ for which the urgleichung yields an invalid probability.
\end{assump}

It is sometimes helpful to write the urgleichung in vector notation:
\begin{equation}
{q} = rM{p}.
\end{equation}
Here, $r$ is a matrix whose $(j,i)$ entry is given by $r(j|i)$, and
$M$ is a linear combination of the identity matrix $I$ and the matrix
whose elements all equal 1, the so-called Hadamard identity $J$:
\begin{equation}
M = \alpha I - \beta J.
\end{equation}

Assumptions \ref{assump:urgleichung} and \ref{assump:max} imply a fair bit about the structure of~$\cP$ and $\cR$.
\begin{lemma}
\label{lm:convex}
The set $\cP$ of all states and the set $\cR$ of all measurements are both convex and closed.
\end{lemma}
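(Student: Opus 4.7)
The plan is to leverage Assumption~\ref{assump:max} as the main engine. Maximality says that $\cP$ already contains every vector that is consistent with each $r \in \cR$, so convexity and closedness of $\cP$ reduce to the claims that (a) a convex combination of states is consistent with every measurement, and (b) a limit of states is consistent with every measurement. The same reduction works for $\cR$ by symmetry. The key structural observation is that the urgleichung is affine in $p$ (for fixed $r$) and linear in $r$ (for fixed $p$), so both operations commute with the computation of $q$.

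\medskip

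For convexity of $\cP$, take $p_1, p_2 \in \cP$, let $p = \lambda p_1 + (1-\lambda)p_2$, and fix any $r \in \cR$. Since $p_1$ and $p_2$ are normalized, the identity $\lambda + (1-\lambda) = 1$ distributes the offset $\beta$ correctly, so $q = rMp$ comes out equal to $\lambda q_1 + (1-\lambda) q_2$ with $q_a = rMp_a$. Both $q_1$ and $q_2$ are valid probability distributions because $p_1, p_2 \in \cP$, and the probability simplex is convex, so $q$ is valid as well. Thus $p$ is consistent with every $r \in \cR$, and maximality forces $p \in \cP$. The identical calculation in $r$ gives convexity of $\cR$.

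\medskip

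For closedness of $\cP$, suppose $p_n \to p$ with each $p_n \in \cP$. Fix any $r \in \cR$. Then $q_n = rMp_n$ is a valid probability distribution for each $n$, and by continuity $q_n \to q = rMp$. Since the probability simplex is closed in $\mathbb{R}^N$, $q$ itself lies in the simplex, so $p$ is consistent with $r$. As $r$ was arbitrary, maximality yields $p \in \cP$; closedness of $\cR$ follows identically.

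\medskip

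There is no deep obstacle here; the argument is essentially an accounting of what maximality gives us once the affine structure of the urgleichung is recognized. The one subtle point worth double-checking is that ``$p$ yields a valid probability distribution for every $r \in \cR$'' really is equivalent to ``$p$ can be adjoined to $\cP$ without introducing inconsistency'' in the sense of Assumption~\ref{assump:max}, and symmetrically with $\cP$ and $\cR$ swapped. Once that equivalence is granted, both convexity and closedness drop out immediately.
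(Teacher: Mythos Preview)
Your proof is correct and follows essentially the same approach as the paper: linearity of $q = rMp$ in $p$ pushes convex combinations and limits through, the probability simplex is convex and closed, and then Assumption~\ref{assump:max} forces the resulting point back into $\cP$ (respectively $\cR$). You have in fact spelled out the closedness argument more explicitly than the paper, which simply says it works ``analogously.''
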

\begin{proof}
Let $p_1,p_2 \in \cP$, and for any $r \in \cR$, define
\ea{
q_1 &= rMp_1, & q_2 &= rMp_2.
}
By assumption, both $q_1$ and $q_2$ are valid probability vectors (i.e., they are normalized, and all their entries are nonnegative).  Define
\begin{equation}
p_\lambda = \lambda p_1 + (1-\lambda)p_2.
\end{equation}
Then
\begin{equation}
q_\lambda = rMp_\lambda = \lambda q_1 + (1-\lambda)q_2.
\end{equation}
This is a convex combination of points in the probability simplex, and as such it also belongs to the probability simplex.  By assumption, this holds true for every $r \in \cR$, and so by maximality, $p_\lambda \in \cP$.  The proofs of the convexity of~$\cR$ and of closure work analogously.
\end{proof}

Consider the case where the ground and sky measurements are the same.
In that scenario, we have ${q} = {p}$, and so the measurement
matrix must be the inverse of~$M$:
\begin{equation}
r_F = M^{-1} = \frac{1}{\alpha}I + \frac{\beta}{\alpha}J.
\label{eq:r_F}
\end{equation}
Note that we have to include $r_F$ within $\cR$ by the maximality assumption.

The urgleichung is one way that quantum theory builds upon the
mathematics of probability, interconnecting our previsions for
different experiments, previsions that basic probability theory alone
would leave separate.  Quantum theory augments the probability
formalism in another fashion as well, and it is to that which we now
turn.

Our next assumption will establish that the set of measurements $\cR$
can be constructed from the set of states $\cP$.  On a purely
mathematical level, we could justify this by saying that we wish to
build the most parsimonious theory possible upon the urgleichung, and
so we simplify matters by having one fundamental set instead of two.
As far as constructing a mathematical theory goes, this is certainly a
legitimate way to begin.  We can, however, provide a more physical
motivation than that.

Probability theory, intrinsically, assumes very little about the
structure of event spaces.  With it, we can for example discuss
rolling a die and recording the side that lands facing up; we say that
the realm of possible outcomes for this experiment is the set
$\{1,2,3,4,5,6\}$.  In this experiment, the outcome ``1'' is no more
\emph{like} the outcome ``2'' than it is \emph{like} the outcome
``6''.  We can ascribe probabilities to these six potential events
without imposing a similarity metric upon the realm of outcomes.  We
use integers as labels, but we care hardly at all about the
number-theoretic properties of those integers.  When we roll the die,
we are indifferent to the fact that 5 is prime and 6 is perfect.  Nor
is the event of observing a particular integer in this experiment
related, necessarily, to the event of observing that same integer in a
\emph{different} experiment.

When Alice first learns probability theory, she picks up this habit of
tagging events with integers.  If Alice considers a long catalogue of
experiments that she could perform, she might label the possible
outcomes of the first experiment by the integers from 1 to~$N_1$, the
outcomes of the second experiment by the integers $\{1,\ldots,N_2\}$
and so on.  But, in general, Alice has the freedom to permute these
labels as she pleases.  She does not have to regard the experience of
obtaining $j = 17$ in one experiment as similar to the experience of
obtaining $j = 17$ in any other.

But what if Alice wants more structure than this?  When Alice
contemplates an experiment that she might carry out, she considers a
set of possible outcomes for it, \emph{i.e.,} a realm of potential
experiences which that action might elicit.  She can assign each of
those potential experiences a label drawn from whatever mathematical
population she desires.  Her \emph{index set} for a given experiment
can be a subset of whatever population she finds convenient.  When
Alice adopts the urgleichung as an empirically-motivated addition to
the bare fundamentals of probability theory, does she, by that act,
also gain a natural collection of mathematical entities from which to
build index sets?

In fact, she has just such a collection at hand: She can use the set
of valid states, $\cP$!

To consider the matter more deeply, we ask the following question:
Under what conditions would Alice consider two outcomes of two
different experiments to be equivalent?  For example, Alice
contemplates two experiments she might feasibly perform, which she
describes by two matrices $r$ and $r'$.  When would Alice treat an
outcome $j$ of experiment $r$ to be equivalent to an outcome $j'$
of~$r'$~\cite{fuchs2002}? Generally, the tools she has on hand to make
such a judgment are her probability ascriptions for those outcomes.
If her overall mesh of beliefs is that her probability of experiencing
$j$ upon enacting $r$ is the same as her probability for finding $j'$
when enacting $r'$, no matter what her state assignment ${p}$,
then she has good grounds to call $j$ and $j'$ equivalent.  In order
to satisfy $q(j) = q'(j')$ for all $p \in \cP$, the measurement
matrices $r$ and $r'$ must obey
\begin{equation}
r(j|i) = r'(j'|i),\ \forall i.
\end{equation}

The simplest way to ensure that this is possible is to build all
elements $r$ of the set $\cR$ from a common vocabulary.  When we
construct an element $r \in \cR$, we draw each row from a shared pool
of ingredients.  The natural, parsimonious choice we have on hand for
this purpose is the set $\cP$.  This means that, up to scaling,
measurement outcomes are actually identified with points in the
probability simplex.

Let $r \in \cR$ be a valid measurement.  If each row of the matrix
$\{r(j|i)\}$ can also naturally be identified with a vector ${s} \in
\cP$, then we are led to consider the vector ${s}$ sitting inside $r$
in some fashion.  The simplest reasonable relation between ${s}$,
which is a vector with $N$ elements, and the measurement matrix $r$,
whose rows have length $N$, is to have a row of~$r$ be linearly
proportional to~${s}$.

\begin{assump}
\label{assump:R-from-P}
Measurement Matrices are Constructed from States.  Given any $r \in \cR$, we
can write a row $\{r(j|i) : i = 1,\ldots,N\}$ as a vector
${s}_j \in \cP$, up to a normalization factor:
\begin{equation}
r(j|i) = N \gamma_j s_j(i).
\label{eq:R-from-P}
\end{equation}
Furthermore, any state in $\cP$ can be used in this manner.
\end{assump}
For brevity, we will refer to the $s_j$ as ``measurement vectors.''
We will shortly identify the meaning of the constants $\{\gamma_j\}$,
which we have written with the prefactor $N$ for later convenience.

\begin{assump}
\label{assump:ignorance}
Possibility of Maximal Ignorance. The state $c$, defined by
\begin{equation}
c(i) = \frac{1}{N}\ \forall\ i,
\end{equation}
belongs to $\cP$.
\end{assump}
This can be deduced from other postulates, but the state $c$ is a
useful tool, and it is helpful to point its existence out explicitly.
For example, substituting the state of complete ignorance ${c}$ into
the urgleichung, we obtain
\begin{equation}
q(j) = \frac{1}{N} \sum_i r(j|i).
\end{equation}

What is the meaning of the factors $\{\gamma_j\}$?  To find out, we apply
a measurement $r \in \cR$ to the state ${c}$:
\begin{equation}
q(j) = \frac{1}{N} \sum_i r(j|i)
 = \gamma_j \sum_i s_j(i)
 = \gamma_j.
\end{equation}
The factors $\{\gamma_j\}$ indicate the probability of obtaining the
$j^{\rm th}$ outcome on the ground when the agent is completely
indifferent to the potential outcomes of the sky experiment.

If the effect of some $r \in \cR$, when applied via the urgleichung,
is to send ${c}$ to itself, then we have that
\begin{equation}
c(j) = \frac{1}{N} = \frac{1}{N} \sum_i r(j|i)
\Rightarrow \sum_i r(j|i) = 1.
\end{equation}
Combined with the basic normalization requirement for conditional
probabilities, this states that a measurement that preserves
${c}$ is represented by a \emph{doubly stochastic} matrix.

\begin{lemma}
\label{lm:doubly-stochastic}
Measurements that send the state $c$ to itself are represented by doubly stochastic matrices.
\end{lemma}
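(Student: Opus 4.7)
The plan is to read off the statement essentially directly from the computation carried out in the paragraphs immediately preceding the lemma, organizing that computation into a standalone argument. Specifically, I would start from the generalized urgleichung of Assumption~\ref{assump:urgleichung}, apply it to the maximally ignorant state ${c}$ guaranteed to be in $\cP$ by Assumption~\ref{assump:ignorance}, and use the normalization identity $\alpha = N\beta + 1$ to simplify the resulting expression.

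Concretely, if $r \in \cR$ is a measurement that sends ${c}$ to itself, then substituting $p(i) = 1/N$ into
\begin{equation}
q(j) = \sum_i \bigl[\alpha p(i) - \beta\bigr] r(j|i)
\end{equation}
collapses the bracket to the constant $\alpha/N - \beta = 1/N$, so that
\begin{equation}
q(j) = \frac{1}{N}\sum_i r(j|i).
\end{equation}
The hypothesis $q = c$ then forces $\sum_i r(j|i) = 1$ for every $j$, i.e.\ the column sums of the matrix $r$ are all unity. Combined with the basic conditional-probability normalization $\sum_j r(j|i) = 1$ (which forces the row sums to be unity) and the nonnegativity of each entry $r(j|i)$, this is exactly the definition of a doubly stochastic matrix, so the lemma follows.

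There is no real obstacle here: the lemma is essentially a bookkeeping consequence of Assumptions~\ref{assump:urgleichung} and~\ref{assump:ignorance} together with the normalization of conditional probabilities, and all of the substantive work was already done in deriving the relation $q(j) = (1/N)\sum_i r(j|i)$ just above the lemma statement. The only thing to be careful about is to invoke both the column-sum condition (coming from preservation of ${c}$) and the row-sum condition (coming from $r$ being a matrix of conditional probabilities in the first place), since ``doubly stochastic'' requires both.
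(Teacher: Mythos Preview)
Your proposal is correct and follows exactly the same approach as the paper: the paper's proof is simply the computation in the paragraph immediately preceding the lemma, deriving $q(j) = \frac{1}{N}\sum_i r(j|i)$ from the urgleichung applied to $c$, setting $q = c$ to force the column sums to equal $1$, and combining with the row-sum normalization of conditional probabilities.
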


When we postulated the urgleichung, we added structure to the bare
essentials of probability theory, and the structure we added related
one experiment to another in a way above and beyond basic coherence.
With Assumption~\ref{assump:R-from-P}, we are also interrelating
different experiments.  We can appreciate this in another way by
considering what it means for a physical system to be usable as a
scientific instrument.

What conditions must an object meet in order to qualify as a piece of
laboratory apparatus?  Classically, a bare minimum requirement is that
the object has a set of distinguishable configurations in which it can
exist.  These might be positions of a pointer needle, heights of a
mercury column, patterns of glowing lights and so forth.  The
essential point is that the system can be in different configurations
at different times: A thermometer that always reports the same
temperature is useless.  We can label these distinguishable
configurations by an index $j$.  The \emph{calibration} process for a
laboratory instrument is a procedure by which a scientist assigns
conditional probabilities $r(j|i)$ to the instrument, relating the
readout states $j$ to the inputs $i$.  In order to make progress, we
habitually assume that nature is not so perverse that the results of
the calibration phase become completely irrelevant when we proceed to
the next step and apply the instrument to new systems of unknown
character.

But what if nature \emph{is} perverse?  Not enough so to forbid the
possibility of science, but enough to make life interesting.
Quantitatively speaking, what if we must modify the everyday
assumption that one can carry the results of a calibration process
unchanged from one experimental context to another?

\emph{The urgleichung is just such a modification.}  The $\{r(j|i)\}$
do not become irrelevant when we move from the sky context to the
ground, but we do have to use them in a different way.

In quantum physics, we no longer treat ``measurement'' as a passive
reading-off of a specified, pre-existing physical quantity.  However,
we do still have a counterpart for our classical notion of a system
that can qualify as a laboratory apparatus.  Instead of asking whether
the system can exist in one of multiple possible classical states, we
ask whether our overall mesh of beliefs allows us to consistently
assign any one of multiple possible catalogues of expectations.  That
is, if an agent Alice wishes to use a system as a laboratory
apparatus, she must be able to say now that she can conceive of
ascribing any one of several states to it at a later time.  We define
a \emph{discrete apparatus} as a physical system with an associated
set of states,
\begin{equation}
\{s_1,\ldots,s_{m}\} \subset \cP.
\end{equation}
The analogue of classical uncertainty about where a pointer might be
pointing is the convex combination of the states $\{s_j\}$.
Therefore, our basic mental model of a laboratory apparatus is a
polytope in~$\cP$, with the $\{s_j\}$ as its vertices.  Assumption
\ref{assump:R-from-P} says that \emph{Alice can pick up any such
  apparatus and use it as a ``prosthetic hand'' to enrich her
  experience of asking questions of nature.}

We can think of Assumption~\ref{assump:R-from-P} in another way, if we rewrite Eq.~(\ref{eq:R-from-P}) in the following manner:
\begin{equation}
s_j(i) = \frac{\left(\frac{1}{N}\right) r(j|i)}{\gamma_j}.
\end{equation}
Earlier, we noted that $\gamma_j$ is the probability of obtaining the $j^{\rm th}$ outcome on the ground, given complete ignorance about the potential outcomes of the sky experiment.  In addition, $1/N$ is the probability assigned to each outcome of the sky experiment by the state of complete ignorance.  So,
\begin{equation}
s_j(i) = \frac{\hbox{PrCI}(i)\, r(j|i)}{\hbox{PrCI}(j)},
\end{equation}
where the notation ``PrCI'' here indicates a probability assignment given that the state for the sky experiment is $c$.  Note that $\hbox{PrCI}(j|i) = r(j|i)$.  But this means that the expression on the right-hand side above is just the ordinary Bayes formula for inverting conditional probabilities:
\begin{equation}
\hbox{PrCI}(i|j) = \frac{\hbox{PrCI}(i)\, \hbox{PrCI}(j|i)}{\hbox{PrCI}(j)}.
\end{equation}
Therefore, we can interpret the mathematical relation established in Assumption~\ref{assump:R-from-P} as saying that ``posteriors from maximal ignorance are priors''~\cite{fuchs2013}.  For the remainder of this paper, we will not be considering in detail the rules for changing one's probabilities upon new experiences---a rather intricate subject, all things told~\cite{QBist-decoherence, stacey-thesis}.  So, we will not stress the ideas of ``priors'' and ``posteriors,'' but it is good to know that this reading of Assumption~\ref{assump:R-from-P} exists.

Writing the urgleichung in terms of the vector ${s}_j$,
\begin{align}
q(j) &= \sum_i \left[\alpha p(i) - \beta\right] N \gamma_j
        s_j(i) \\
 &= N\alpha \gamma_j \inprod{p}{s_j} - N\beta\gamma_j.
\end{align}
The fact that $q(j)$ must be nonnegative for all $j$ implies a lower
bound on the scalar product $\inprod{p}{s_j}$:
\begin{equation}
\inprod{p}{s_j} \geq \frac{\beta}{\alpha}.
\label{eq:first-lower-bound}
\end{equation}

The measurement described by the matrix $r_F$ in Eq.~(\ref{eq:r_F})
yields, by construction, equal probabilities for all outcomes given
the input state ${c}$.  That is, it is an experiment with $N$
outcomes, and $\gamma_j = 1/N$ for all of them.  Therefore, we can
take the rows of~$r_F$ as specifying $N$ special vectors within~$\cP$.
We have that
\begin{equation}
r_F(j|i) = e_j(i),
\end{equation}
where the vector ${e}_j$ is flat across all but one entries:
\begin{equation}
e_j(i) = \frac{1}{\alpha}(\delta_{ji} + \beta).
\end{equation}
We will refer to the vectors $\{{e}_k\}$ as the \emph{basis
  distributions.}

What happens if we take a measurement $r \in \cR$, and act with it via
the urgleichung upon a basis distribution ${e}_k$?  The result is
straightforwardly computed to be
\begin{align}
q(j) &= \sum_i \left[\alpha\left(\frac{\beta}{\alpha}
                                + \frac{1}{\alpha} \delta_{ik}
                          \right)
                     - \beta
               \right] r(j|i) \\
 &= \beta \sum_i r(j|i) + \sum_i \delta_{ik} r(j|i)
    - \beta \sum_i r(j|i) \\
 &= r(j|k).
\label{eq:r-upon-basis}
\end{align}
This will be useful later.

Note that the basis distributions all have magnitude equal to
\begin{equation}
\inprod{e_k}{e_k} = \frac{1 + 2\beta + N\beta^2}{\alpha^2}.
\label{eq:basis-purity}
\end{equation}
This result singles out a \emph{distinguished length scale} in
probability space, namely, the radius of the sphere on which all the
basis distributions live.

The lower bound (\ref{eq:first-lower-bound}) suggests the following construction.  Let $H$ be the hyperplane of vectors in~$\mathbb{R}^N$ that sum to unity:
\be
H =\left\{v \in \fd{R}^{N} \colon  \inprod{v}{c} = \frac{1}{N}\right\}.
\ee
This hyperplane includes the probability simplex.  For any set $A$ of probability distributions, consider the set
\be
\dl{A} = \left\{ u \in H \colon  \inprod{u}{v} \ge \frac{\beta}{\alpha} \ \forall v \in A\right\}.
\ee
This set includes all the probability distributions that are consistent with each point in~$A$, with respect to the lower bound we derived from the urgleichung.  We will designate the set $\dl{A}$ the \emph{polar} of~$A$, following the terminology for a related concept in geometry~\cite{grun, zieg}.  Let $\cP$ be the set of all valid states.  The set of all measurement vectors that are consistent with these states, with respect to the lower bound, is that portion of the polar of~$\cP$ that lies within the probability simplex:
\be
\dl{\cP} \cap \Delta = \left\{ s : \inprod{s}{p} \geq \frac{\beta}{\alpha} \forall p \in \cP \right\} \cap \Delta.
\ee
If some $s$ in this set is not in the set $\cP$, then some measurement vector does not correspond to a state.  Likewise, if some $p \in \cP$ is not in this set, then that state cannot correspond to a measurement vector.  Both of these cases violate the mapping we have advocated on general conceptual grounds.  Therefore, our first three assumptions imply that we consider sets $\cP$ for which
\be
\cP = \dl{\cP} \cap \Delta.
\ee
We will see momentarily how to simplify this condition, establishing the condition that a state space $\cP$ must be self-polar:
\begin{equation}
\cP = \dl{\cP}.
\end{equation}

In order to prove this proposition, we need to know more about the operation of taking the polar.  We can derive the relations we require by adapting some results from the higher-dimensional geometry literature.  Gr\"unbaum~\cite{grun} defines the polar of $A\subseteq
\fd{R}^{d^2}$ to be the set
\be
A^{\circ} = \{ u \in \fd{R}^{d^2} \colon \inprod{u}{v} \le 1 \ \forall v \in A\}.
\ee
Our definition of the polar $\dl{A}$ is close enough to this definition of~$A^{\circ}$ that many results about the latter can be carried over with little effort.
The properties of the polar $\dl{A}$ are summarized in the following theorem.
\begin{theorem}
\label{tm:polarity}
For all $A\subseteq H$, the polar $\dl{A}$ is a closed, convex  set containing $c$.  Since we will frequently be invoking the concept of convex hulls, we introduce the notation $\cc(A)$ for the closed, convex hull of the set $A$.  We have
\ea{
\dl{A} & =  \dl{\bigl(\cc(A\cup \{c\})\bigr)},
\\
\ddl{A} &= \cc(A\cup \{c\}),
}
for all $A\subseteq H$.  In particular, $A$ is equal to its double polar $\ddl{A}$ if and only if it is closed, convex and contains $c$.

For  all $A$, $B\subseteq H$
\be
A\subseteq B \implies \dl{B} \subseteq \dl{A}.
\ee
If $\mathcal{A}$ is an arbitrary family of subsets of $H$ then
\ea{
\dl{\left(\bigcup_{A\in \mathcal{A}} A \right)} &= \bigcap_{A\in \mathcal{A}} \dl{A}.
\label{eq:dualUnionB}
\\
\intertext{If, in addition, $\ddl{A}=A$ for all $A\in\mathcal{A}$ then}
\dl{\left(\bigcap_{A\in \mathcal{A}} A \right)} &=\cc\left( \bigcup_{A\in \mathcal{A}} \dl{A}\right).
\label{eq:dualIntersectionB}
}
\end{theorem}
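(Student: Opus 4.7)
The plan is to reduce everything to the standard polarity theory of Grünbaum \cite{grun} via an affine bijection that sends $c$ to the origin. Concretely, I would define $\phi\colon H \to H_0$ by $\phi(u) = \sqrt{N\alpha}\,(c-u)$, where $H_0 = \{w\in\mathbb{R}^N : \sum_i w(i) = 0\}$ is the linear hyperplane parallel to $H$. A direct computation using $\inprod{c}{v} = 1/N$ for every $v \in H$ and the identity $\alpha = N\beta+1$ gives
\ea{
\inprod{\phi(u)}{\phi(v)} = N\alpha\bigl(\inprod{u}{v} - \tfrac{1}{N}\bigr),
}
so that the defining condition $\inprod{u}{v} \ge \beta/\alpha$ of our polar is equivalent to $\inprod{\phi(u)}{\phi(v)} \ge -1$. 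Working in $H_0$ regarded as a Euclidean space and writing $B^{\circ}$ for Grünbaum's polar inside $H_0$, a short manipulation (using $(-B)^{\circ} = -B^{\circ}$) shows that $\phi(\dl{A}) = -\phi(A)^{\circ}$, and consequently $\phi(\ddl{A}) = \phi(A)^{\circ\circ}$.

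Once this dictionary is in place, the first few assertions are essentially immediate. For closedness and convexity of $\dl{A}$, note that $\dl{A}$ is an intersection of closed affine half-spaces inside $H$. The containment $c\in\dl{A}$ is the statement $1/N \ge \beta/\alpha$, which follows from $\alpha = N\beta+1 \ge N\beta$. Monotonicity $A\subseteq B \Rightarrow \dl{B}\subseteq\dl{A}$ and the union formula~(\ref{eq:dualUnionB}) are routine from the quantifier structure of the definition. For~$\dl{A} = \dl{(\cc(A\cup\{c\}))}$, the inclusion $\supseteq$ is monotonicity; the reverse direction uses that if $\inprod{u}{v}\ge\beta/\alpha$ holds for all $v\in A$ and for $v = c$, then it is preserved under convex combinations and closures, since the inner product is continuous and linear in $v$.

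The main step is the bipolar identity $\ddl{A} = \cc(A\cup\{c\})$. The inclusion $\cc(A\cup\{c\})\subseteq\ddl{A}$ is straightforward: $A \subseteq \ddl{A}$ by definition, $c\in\ddl{A}$ by the argument above, and $\ddl{A}$ is closed and convex by the first paragraph. For the reverse inclusion, rather than reproving the bipolar theorem I would invoke Grünbaum's result $B^{\circ\circ} = \cc(B\cup\{0\})$ applied to $B = \phi(A)$ inside $H_0$, and then pull the equality back through $\phi^{-1}$, using $\phi(c) = 0$ and that $\phi$ is an affine bijection so it commutes with convex hull and closure. This step is where I expect the main technical nuisance to lie: one must verify carefully that the polar computed inside $H_0$ corresponds to $\dl{\cdot}$ computed inside $H$, and in particular that it is harmless to work in the linear hyperplane rather than in $\mathbb{R}^N$, because any separating functional produced in $\mathbb{R}^N$ can be projected onto $H_0$ without altering inner products against elements of $H_0$.

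The remaining claims follow quickly. The characterization $A = \ddl{A}$ iff $A$ is closed, convex and contains $c$ is immediate from $\ddl{A} = \cc(A\cup\{c\})$, since the right-hand side is always closed, convex and contains $c$, while conversely such an $A$ already equals $\cc(A\cup\{c\})$. Finally, to obtain~(\ref{eq:dualIntersectionB}), I would apply $\dl{\cdot}$ to the union formula~(\ref{eq:dualUnionB}) with $\{\dl{A} : A\in\mathcal{A}\}$ in place of $\mathcal{A}$, then take the double polar of both sides and use $\ddl{A} = A$ (valid for each $A\in\mathcal{A}$ by assumption) together with $\ddl{\cc(\bigcup\dl{A})} = \cc(\bigcup\dl{A})$ from the previous step, giving the desired identity. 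The only delicate point here is confirming that $\cc(\bigcup_{A\in\mathcal{A}}\dl{A})$ automatically contains $c$ so that it equals its own double polar; this holds since each $\dl{A}$ already contains $c$.
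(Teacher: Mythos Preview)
Your proposal is correct and follows essentially the same strategy as the paper: both reduce the custom polar $\dl{(\cdot)}$ to Gr\"unbaum's standard polar $(\cdot)^{\circ}$ via an affine bijection $H \to H_0$ sending $c$ to $0$, and then invoke the textbook bipolar theorem. The paper is terser, writing $f(u) = N\alpha(u-c)$ and stating $\dl{A} = f^{-1}\bigl((-f(A))^{\circ}\cap H_0\bigr)$ without the supporting computations you supply; your choice of scaling $\sqrt{N\alpha}$ and sign makes the inequality $\inprod{\phi(u)}{\phi(v)} \ge -1$ come out cleanly, and your explicit derivations of the union, bipolar, and intersection identities fill in exactly the details the paper leaves to the reader.
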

\begin{proof}
All these properties follow by relating Gr\"unbaum's definition of the
polar with ours.  Let  $f\colon \fd{R}^{N} \to \fd{R}^{N}$ be the affine map defined by
\be
f(u) = N\alpha (u-c),
\ee
and let $H_0$ be the subspace
\be
H_0 = \{ u\in \fd{R}^{N} \colon \la u, c \ra = 0\}.
\ee
One then has
\be
\dl{A} = f^{-1} \Bigl( \bigl(-f(A)\bigr)^{\circ}\cap H_0\Bigr)
\ee
for all $A\subseteq H$.  With this in hand the theorem becomes a
straightforward consequence of textbook results.
\end{proof}

Now, consider the relation $\cP = \dl{\cP} \cap \Delta$, and take the polar of both sides:
\begin{equation}
\dl{\cP} = \dl{\left(\dl{\cP} \cap \Delta\right)} = \cc \left(\ddl{\cP} \cup \dl{\Delta_N}\right).
\end{equation}
We know that $\cP$ is closed and convex, and that it contains the center point $c$.  Therefore,
\begin{equation}
\ddl{\cP} = \cP.
\end{equation}
What is the polar of the probability simplex $\Delta$?  In fact, it is the basis simplex $\bs$.
\begin{lemma}
\label{lm:polar-basis}
The probability simplex and the basis simplex are mutually polar:
\ea{
\dl{\Delta} &= \bs, & \dl{\bs} &= \Delta.
}
\end{lemma}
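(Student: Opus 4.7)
My plan is to exploit the fact that both $\Delta$ and $\bs$ are convex hulls of $N$ explicit extreme points, reducing each polar to a finite family of linear inequalities.  The vertices of $\Delta$ are the standard basis vectors $\hat e_i$ with $\hat e_i(k)=\delta_{ik}$, and the vertices of $\bs$ are the basis distributions $e_k$ with $e_k(i)=\frac{1}{\alpha}(\delta_{ki}+\beta)$.  Since the inner product is linear in its second argument, the defining inequality $\inprod{u}{v}\ge \beta/\alpha$ for $v$ ranging over a convex hull is equivalent to the same inequality for $v$ ranging only over the vertices.  So I only need to check the inequality against the extreme points, not against all convex combinations.

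The first step is to compute $\dl{\Delta}$.  Applying the vertex reduction gives
\begin{equation}
\dl{\Delta} = \left\{ u\in H \colon \inprod{u}{\hat e_i} = u(i) \ge \tfrac{\beta}{\alpha}\ \forall i\right\}.
\end{equation}
Then I would show this set coincides with $\bs$ by parametrizing a general convex combination of the $e_k$: a point $\sum_k \lambda_k e_k$ with $\lambda_k\ge0$, $\sum_k\lambda_k=1$, has $i$th component $\frac{1}{\alpha}(\lambda_i+\beta)$.  Using the identity $\alpha=N\beta+1$, any such point lies in $H$ and has all entries at least $\beta/\alpha$.  Conversely, given $u\in H$ with $u(i)\ge\beta/\alpha$ for each $i$, the numbers $\lambda_i := \alpha u(i) - \beta$ are nonnegative and, because $\sum_i u(i)=1$, sum to $\alpha - N\beta = 1$; this exhibits $u$ as a convex combination of the $e_k$.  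Hence $\dl{\Delta}=\bs$.

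Next I would compute $\dl{\bs}$.  Again by the vertex reduction, $u\in\dl{\bs}$ iff $\inprod{u}{e_k}\ge\beta/\alpha$ for every $k$.  Using $u\in H$ so that $\sum_i u(i)=1$,
\begin{equation}
\inprod{u}{e_k} = \frac{1}{\alpha}\sum_i u(i)(\delta_{ki}+\beta) = \frac{u(k)+\beta}{\alpha},
\end{equation}
so the condition becomes simply $u(k)\ge 0$ for every $k$, i.e.\ $u\in\Delta$.  Thus $\dl{\bs}=\Delta$.

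There is no real obstacle here beyond keeping careful track of which hyperplane $H$ is, and of the relation $\alpha=N\beta+1$ that makes the normalization come out exactly right; everything else follows from the single observation that one need only test polarity against extreme points.  As a sanity check one notes that each of the two computed identities is consistent with Theorem~\ref{tm:polarity}, since $\Delta$ and $\bs$ are both closed, convex, and contain the center point $c=\frac{1}{N}\sum_i \hat e_i = \frac{1}{N}\sum_k e_k$, so applying the polar twice returns the original set.
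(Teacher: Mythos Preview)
Your proof is correct and follows essentially the same route as the paper: both reduce the polar computation to checking the inequality $\langle u,v\rangle\ge\beta/\alpha$ against the vertices of the relevant simplex, and both identify $\dl{\Delta}$ with the set $\{u\in H: u(i)\ge\beta/\alpha\ \forall i\}=\bs$. The only minor difference is that the paper obtains the second identity $\dl{\bs}=\Delta$ by invoking the double-polar property from Theorem~\ref{tm:polarity}, whereas you compute it directly from the vertices $e_k$; this makes your argument slightly longer but entirely self-contained.
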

\begin{proof}
The probability simplex contains normalized vectors, so it lies in the hyperplane $H$, and all of its vectors have wholly nonnegative entries.  Let $v_i$ be the $i^{\rm{th}}$ vertex of $\ps$ (so $v_i(j) = \delta_{ij}$).  Then the probability simplex is
\ea{
\ps = \{ u \in H \colon \la u, v_i\ra \ge 0 \ \forall i\}.
}
Let $f\colon H \to H$ be the affine map defined by
\be
f(u) = \frac{1}{\alpha} u  + \frac{\beta}{\alpha}.
\ee
Then $\bs = f(\ps)$.  It follows that
\ea{
\bs = \left\{u \in H \colon \la u, v_i \ra \ge \frac{\beta}{\alpha}
 \ \forall i \right\}.
}
Taking account of Theorem~\ref{tm:polarity}  we deduce
\ea{
\bs & = \dl{\{v_i \colon i = 1, \dots, N\}}
= \dl{\ps}.
}
The fact that $\dl{\bs} = \ps$ is an immediate consequence of this and the fact that the double polar of a closed convex set is itself (see Theorem~\ref{tm:polarity}).
\end{proof}

\begin{theorem}
A state space $\cP$ satisfying Assumptions \ref{assump:urgleichung}, \ref{assump:max}, \ref{assump:R-from-P} and \ref{assump:ignorance} is self-polar:
\begin{equation}
\cP = \dl{\cP}.
\end{equation}
\end{theorem}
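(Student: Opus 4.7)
The plan is to take the polar of both sides of the relation $\cP = \dl{\cP} \cap \Delta$ and invoke the identity (\ref{eq:dualIntersectionB}) from Theorem~\ref{tm:polarity}. The key observation I will need is that the basis simplex $\bs$ is already contained in $\cP$, which collapses the would-be union $\cP \cup \bs$ back to $\cP$.

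First I would note the hypotheses needed to apply (\ref{eq:dualIntersectionB}): both sets in the intersection must satisfy $\ddl{A} = A$. For $\cP$ this follows from Theorem~\ref{tm:polarity} once we observe that $\cP$ is closed and convex (Lemma~\ref{lm:convex}) and contains $c$ (Assumption~\ref{assump:ignorance}). The probability simplex $\Delta$ is closed, convex, and contains $c$ as well, so $\ddl{\Delta} = \Delta$. Applying polars to $\cP = \dl{\cP} \cap \Delta$ then gives, via (\ref{eq:dualIntersectionB}) and Lemma~\ref{lm:polar-basis},
\begin{equation}
\dl{\cP} = \cc\bigl(\ddl{\cP} \cup \dl{\Delta}\bigr) = \cc(\cP \cup \bs).
\end{equation}

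Next I would establish the containment $\bs \subseteq \cP$. The relevant input is Assumption~\ref{assump:R-from-P} applied to the distinguished measurement $r_F = M^{-1}$. By construction, the rows of $r_F$ are the basis distributions ${e}_k$, and these are already normalized with $\gamma_j = 1/N$, so Assumption~\ref{assump:R-from-P} identifies $s_k = e_k \in \cP$ for every $k$. Since $\cP$ is convex (Lemma~\ref{lm:convex}), the closed convex hull of the ${e}_k$—which is exactly the basis simplex $\bs$—is a subset of $\cP$. Therefore $\cc(\cP \cup \bs) = \cP$, and combining with the displayed equation above yields $\dl{\cP} = \cP$.

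The only real subtlety is verifying the hypotheses that license the use of (\ref{eq:dualIntersectionB}) and that guarantee $\bs \subseteq \cP$; everything else is a one-line manipulation with polars. The main conceptual step—and perhaps the one worth dwelling on—is recognizing that Assumption~\ref{assump:R-from-P} applied to $r_F$ gives the basis distributions as legitimate states for free, which is exactly what is required to promote $\cP = \dl{\cP} \cap \Delta$ to full self-polarity.
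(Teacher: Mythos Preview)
Your argument is correct and follows the paper's proof essentially step for step: take polars of $\cP = \dl{\cP}\cap\Delta$, use Theorem~\ref{tm:polarity} and Lemma~\ref{lm:polar-basis} to obtain $\dl{\cP}=\cc(\cP\cup\bs)$, and then collapse the union using $\bs\subseteq\cP$. One small expositional slip: when invoking (\ref{eq:dualIntersectionB}) the sets in the intersection are $\dl{\cP}$ and $\Delta$, not $\cP$ and $\Delta$; the condition $\ddl{(\dl{\cP})}=\dl{\cP}$ is automatic from the first clause of Theorem~\ref{tm:polarity}, while your verification that $\cP$ is closed, convex, and contains $c$ is exactly what you need a line later to replace $\ddl{\cP}$ by $\cP$.
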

\begin{proof}
We already know that
\begin{equation}
\dl{\cP} = \cc\left(\ddl{\cP} \cup \dl{\Delta}\right),
\end{equation}
and now we can say that
\begin{equation}
\dl{\cP} = \cc(\cP \cup \bs).
\end{equation}
But we established already that $\cP$ always contains the basis distributions, and that $\cP$ is closed and convex.  Therefore, $\cP$ is self-polar.
\end{proof}

The fact that a state space is self-polar implies the existence of two more distinguished length scales.  To see why, it is helpful to work in barycentric coordinates, shifting all our vectors so that the origin lies at the barycenter point of the simplex, the point $c$:
\be
p \to p' = p - c.
\ee
In these coordinates, our lower bound (\ref{eq:first-lower-bound}) becomes
\be
\inprod{p'}{s'} \geq - \frac{1}{N\alpha}.
\ee
Any basis distribution $e_j$ satisfies
\be
\inprod{e_j'}{e_j'} = \frac{N-1}{N\alpha^2}.
\ee
We define the \emph{out-sphere} $\so$ to be the sphere centered on the barycenter with radius
\be
\ro^2 = \frac{N-1}{N\alpha^2}.
\ee
The ball bounded by $\so$ is the \emph{out-ball} $\bo$.  We will see shortly that the polar of the out-ball is a ball centered at the barycenter and having radius
\be
\ri^2 = \frac{1}{N(N-1)}.
\ee
We designate this ball the \emph{in-ball} $\bi$, and its surface is the \emph{in-sphere} $\si$.  Finally, note that if we take
\be
\rmid^2 = \frac{1}{N\alpha},
\ee
any two points both lying within $\rmid$ of the barycenter will be consistent with respect to the bound (\ref{eq:first-lower-bound}).  This defines the \emph{mid-ball} $\midball$ and its surface, the \emph{mid-sphere} $\midsphere$.  It follows that
\be
\ri \ro = \rmid^2.
\ee

We now prove the fact we stated a moment ago.
\begin{lemma}
\label{lm:simpballpolars}
The out- and in-balls are mutually polar:
\ea{
\dl{\bo} &= \bi, & \dl{\bi} &= \bo.
}
\end{lemma}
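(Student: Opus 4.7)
The plan is to work throughout in barycentric coordinates $p' = p-c$, in which the polarity condition reads $\inprod{p'}{s'}\ge -\rmid^2$ where $\rmid^2 = 1/(N\alpha) = \ri\ro$. The two balls $\bo$ and $\bi$ then become the Euclidean balls in the subspace $H_0 = \{v : \inprod{v}{c}=0\}$ of radii $\ro$ and $\ri$ respectively, both centered at the origin. In this picture the claim $\dl{\bo}=\bi$ becomes an almost purely Euclidean statement about the polar of a ball, and the companion statement $\dl{\bi}=\bo$ will follow for free from Theorem~\ref{tm:polarity}.

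First I would prove the inclusion $\bi \subseteq \dl{\bo}$. Take any $u\in\bi$ and any $v\in\bo$, so that $\|u'\|\le\ri$ and $\|v'\|\le\ro$ in $H_0$. By the Cauchy--Schwarz inequality,
\begin{equation}
\inprod{u'}{v'} \ge -\|u'\|\,\|v'\| \ge -\ri\ro = -\rmid^2,
\end{equation}
which is precisely the membership condition for $u\in\dl{\bo}$ in barycentric coordinates. Hence $\bi\subseteq\dl{\bo}$.

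For the reverse inclusion $\dl{\bo}\subseteq \bi$, I would use the standard ``worst direction'' trick. Suppose $u\in\dl{\bo}$ with $u'\ne 0$, and set $v' = -\ro\, u'/\|u'\|$. Since $u'\in H_0$, so is $v'$, and clearly $\|v'\|=\ro$, so $v = c+v'$ lies in $\bo$. The defining inequality applied to this $v$ gives
\begin{equation}
-\ro\,\|u'\| = \inprod{u'}{v'} \ge -\rmid^2,
\end{equation}
whence $\|u'\|\le \rmid^2/\ro = \ri$, i.e.\ $u\in\bi$. Combining the two inclusions yields $\dl{\bo}=\bi$.

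Finally, $\bo$ is closed, convex, and contains $c$, so Theorem~\ref{tm:polarity} gives $\ddl{\bo}=\bo$; applying $\dl{\,\cdot\,}$ to $\dl{\bo}=\bi$ therefore yields $\dl{\bi}=\bo$. The only mildly delicate point is keeping track of the barycentric shift and verifying that the extremal vector $v'$ used above genuinely lies in $H_0$ (and hence $c+v'\in H$), but this is immediate from $u'\in H_0$ and the linearity of the shift; no real obstacle arises.
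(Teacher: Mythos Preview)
Your proof is correct and follows essentially the same route as the paper: pass to barycentric coordinates so that the polar condition becomes $\inprod{u'}{v'}\ge -\ri\ro$, compute the polar of the Euclidean ball $\bo$ directly, and then invoke Theorem~\ref{tm:polarity} (double polar) for the companion statement $\dl{\bi}=\bo$. The only cosmetic difference is in how the ball-polar computation is carried out: the paper introduces the scaling map $f(u)=c+(\ro/\ri)(u-c)$ and argues via a chain of equivalences, whereas you do the two inclusions separately via Cauchy--Schwarz and an explicit extremal direction. Both are standard and equivalent in spirit; there is no substantive divergence.
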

\begin{proof}
Let $f\colon H\to H$ be the affine map defined by
\be
f(u) = c + \frac{\ro}{\ri} (u-c).
\ee
Then $f(\bi) = \bo$.  Consequently, given arbitrary $u\in H$,
\ea{
&   &u &\in \dl{\bo} &&
\\
&\iff &  \la u, f(v) \ra &\ge \frac{\beta}{\alpha} & \forall v& \in \bi
\\
&\iff & \la u-c, f(v)-c\ra &\ge -\ro \ri &\forall v& \in \bi
\\
&\iff & \la u-c , v-c\ra & \ge -\ri^2 &\forall v&\in \bi
\\
&\iff & u & \in \bi &&
}
So $\dl{\bo}=\bi$.  The fact that $\dl{\bi} = \bo$ is an immediate consequence of this and the fact that the double polar of a closed convex set is itself.
\end{proof}

These distinguished length scales suggest another assumption we ought to make about our state space.  Earlier, we stated that the barycenter $c$ must belong to our set of admissible probability distributions.  It is natural to ask how far away from complete ignorance we can go before we encounter complications.  Can our state space $\cP$ contain all the points in a little ball around~$c$?  Intuitively, it is hard to see why not.  How big can we make that ball around the center point $c$ before we run into trouble?  The simplest assumption, in this context, is to postulate that the first complication we encounter is the edge of the probability simplex itself.  Where does a sphere centered at~$c$ touch the faces of the simplex?  The center of a face of the probability simplex is found by
taking the average of $N-1$ of its vertices:
\begin{equation}
\bar{v}_k(i) = \frac{1}{N-1}(1 - \delta_{ik}).
\end{equation}
The sphere centered on $c$ that just touches these points has a radius
given by
\begin{equation}
(\bar{v}_k - c)^2 = \frac{1}{N(N-1)}.
\end{equation}
The in-sphere $\si$ is just the \emph{inscribed} sphere of the probability simplex.

\begin{assump}
\label{assump:upper-bound}
Every state space $\cP$ contains the in-ball.
\end{assump}

Because the polar of the in-ball is the out-ball, and polarity
reverses inclusion, it follows that every self-consistent state space
is bounded by the out-sphere.  This result has the form of an
``uncertainty principle'': It means that our probability distributions
can never become too narrowly focused.  For any two points $p$ and $s$ within our state space $\cP$, we have
\be
L \leq \inprod{p}{s} \leq U,
\ee
where the lower and upper bounds are given by
\begin{align}
L &= -\frac{1}{N\alpha} + \frac{1}{N}, \\
U &= \frac{N-1}{N\alpha^2} + \frac{1}{N}.
\end{align}

Recall from
Lemma~\ref{lm:polar-basis} that the polar of the probability simplex is the simplex defined by the basis distributions $e_k$,
which in barycentric coordinates is seen to be the probability simplex
rescaled:
\be
e_k'(i) = e_k(i) - c(i) = \frac{1}{\alpha}\left(\delta_{ik} - c(i)\right).
\ee

Call two extremal states $p$ and $s$ in a state space \emph{maximally distant} if
they saturate the lower bound:
\be
\inprod{p'}{s'} = -\frac{1}{N\alpha}.
\ee
Let
\begin{equation}
\{p'_k : k = 1,\ldots,m \}
\end{equation}
be a set of Mutually Maximally Distant (MMD) states.  That is, for all $k$,
\begin{equation}
\inprod{p_k'}{p_k'} = \ro^2,
\end{equation}
and for $k \neq l$,
\begin{equation}
\inprod{p_k'}{p_l'} = -\rmid^2.
\end{equation}
Construct the vector quantity
\begin{equation}
V = \sum_k p_k'.
\end{equation}
From the fact that the magnitude $\inprod{V}{V} \geq 0$, it follows that
\begin{equation}
m \leq 1 + \frac{\ro^2}{\rmid^2}.
\end{equation}
Substituting in the definitions of the radii, we arrive at the
relation
\begin{equation}
m \leq 1 + \frac{N-1}{\alpha}.
\end{equation}
Let us now make an assumption:  We want this bound to be attainable.

\begin{assump}
\label{assump:m-max}
A state space $\cP$ contains an MMD set of size
\be
m_{\rm max} = 1 + \frac{N-1}{\alpha}.
\label{eq:m-max}
\end{equation}
\end{assump}
Note that both $N$ and $m_{\rm max}$ are positive integers by
assumption.  This means that $\alpha$ must divide $N-1$ neatly. 

To set the context for our next assumption, switch back to the
original frame.  Recall that any two points $p$ and $s$ within our
state space $\cP$ satisfy
\be
L \leq \inprod{p}{s} \leq U,
\ee
where the lower and upper bounds are given by
\begin{align}
L &= -\frac{1}{N\alpha} + \frac{1}{N}, \\
U &= \frac{N-1}{N\alpha^2} + \frac{1}{N}.
\end{align}
Comparing these two quantities, and using Eq.~(\ref{eq:m-max}) to
simplify, we obtain
\begin{equation}
\frac{U}{L} = 1 + \frac{m_{\rm max}}{\alpha - 1},
\end{equation}
where $m_{\rm max}$ is a positive integer.  This expression makes it
inviting to set the ratio on the right-hand side to unity by fixing
\begin{equation}
m_{\rm max} = \alpha - 1,
\end{equation}
and thus $U/L = 2$ is, in a sense, the natural first option to
explore.

\begin{assump}
\label{assump:upper-and-lower}
The upper and lower bounds in the fundamental inequalities are related by
\begin{equation}
U = 2L.
\end{equation}
\end{assump}

This lets us solve for $N$ in terms of~$\alpha$:
\begin{equation}
N = (\alpha - 1)^2.
\end{equation}
Thanks to our two latest assumptions, we can fix all three parameters
in the generalized urgleichung (\ref{eq:gen-urgleichung}) in terms of
the maximal size of an MMD set:
\be 
N = m_{\rm max}^2,\ \alpha =
m_{\rm max} + 1,\ \beta = \frac{1}{m_{\rm max}}.
\ee
Relabeling $m_{\rm max}$ by $d$ for brevity, we recover the formulas
familiar from the SIC representation of quantum state space.  Here,
the generalized urgleichung takes the specific form
\be 
q(j) = \sum_{i}\left[(d+1)p(i)
  - \frac{1}{d}\right] r(j|i),
\ee
and we arrive at the following pair of
inequalities:
\begin{equation}
\frac{1}{d(d+1)} \leq \inprod{p}{s}
 \leq \frac{2}{d(d+1)}.
\label{eq:germ-defining}
\end{equation}

Consequently, the polar of a set $A$ is
\be
\dl{A} = \left\{ u \in H \colon \inprod{u}{v} \ge \frac{1}{d(d+1)} \ \forall v \in A\right\}.
\label{eq:our-def-polar}
\ee

We now arrive at the definition upon which the rest of our theory will stand.

\begin{definition}
A \emph{qplex} is a self-polar subset of the out-ball in the
probability simplex $\ps_{d^2}$, with the parameters in the
generalized urgleichung set to $\alpha = (d+1)$ and $\beta = 1/d$.
\end{definition}

\section{Fundamental Geometry of Qplexes}
\label{sec:polarity}
In the previous section, we began with the urgleichung and, making a
few assumptions of an operational character, arrived at the double
inequality
\begin{equation}
\frac{1}{d(d+1)} \leq \inprod{p}{s}
 \leq \frac{2}{d(d+1)}.
\label{eq:double-inequality-repeat}
\end{equation}
Here, we will take this as established, and we will demonstrate
several important geometrical properties of the sets that maximally
satisfy it---the qplexes.

A \qplex\ is a subset of $\ps$, the probability simplex in $\fd{R}^{d^2}$ (i.e. the space of probability distributions with $d^2$ outcomes).  $\ps$ is, in turn, a subset of the hyperplane
\be
H =\left\{u \in \fd{R}^{d^2} \colon  \inprod{u}{c} = \frac{1}{d^2}\right\},
\ee
where $\inprod{\cdot}{\cdot}$ denotes the usual scalar product on $\fd{R}^{d^2}$ and
\be
c = \bmt \frac{1}{d^2}  & \dots & \frac{1}{d^2}\emt^{\rm{T}}
\ee
is the barycenter of $\ps$.

It is important to appreciate the geometrical relationships between the four sets $\ps, \bs, \bo, \bi$.  Specializing our results from the previous section, we have
\ea{
e_i - c &= \frac{1}{d+1}(v_i - c),
\\
\ri & = \frac{1}{d-1} \ro .
}
So the basis simplex is obtained from the probability simplex by scaling by a factor $1/(d+1)$, while the in-ball is obtained from the out-ball by scaling by a factor $1/(d-1)$.  In particular $\bi=\bo$ when $d=2$, but is otherwise strictly smaller.
We have
\be
 \inprod{e_j}{e_k} = \frac{d\delta_{jk} + d + 2}{d(d+1)^2}.
\label{eq:basis-purity-special}
\ee

If $d=2$ then
\be
\bs \subseteq \bi=\bo \subseteq \ps.
\ee
If $d>2$ then  one still has
\be
\bs \cup \bi \subseteq \ps \cap \bo
\ee
but
\ea{
\bs &\nsubseteq  \bi, & \bo \nsubseteq \ps.
}
The first of these statements is an immediate consequence of the foregoing.  To prove the second observe that  $e_i\in \bs$ but $\notin \bi$, while  $c + (\ro/\ri)(\bar{e}_i-c) \in \bo$ but $\notin \ps$.

These facts are perhaps most easily appreciated by examining the diagram in Fig.~\ref{figSimpsAndBalls}.  Observe, however, that the metric relations are impossible to reproduce in a 2-dimensional diagram. So, although Fig.~\ref{figSimpsAndBalls}  reproduces the inclusion relations, and points of contact, it badly misrepresents the sizes of the sets $\bs,\bi$ in comparison to the sets $\ps,\bo$.
\begin{figure}[hbt]
\includegraphics[width = 8cm]{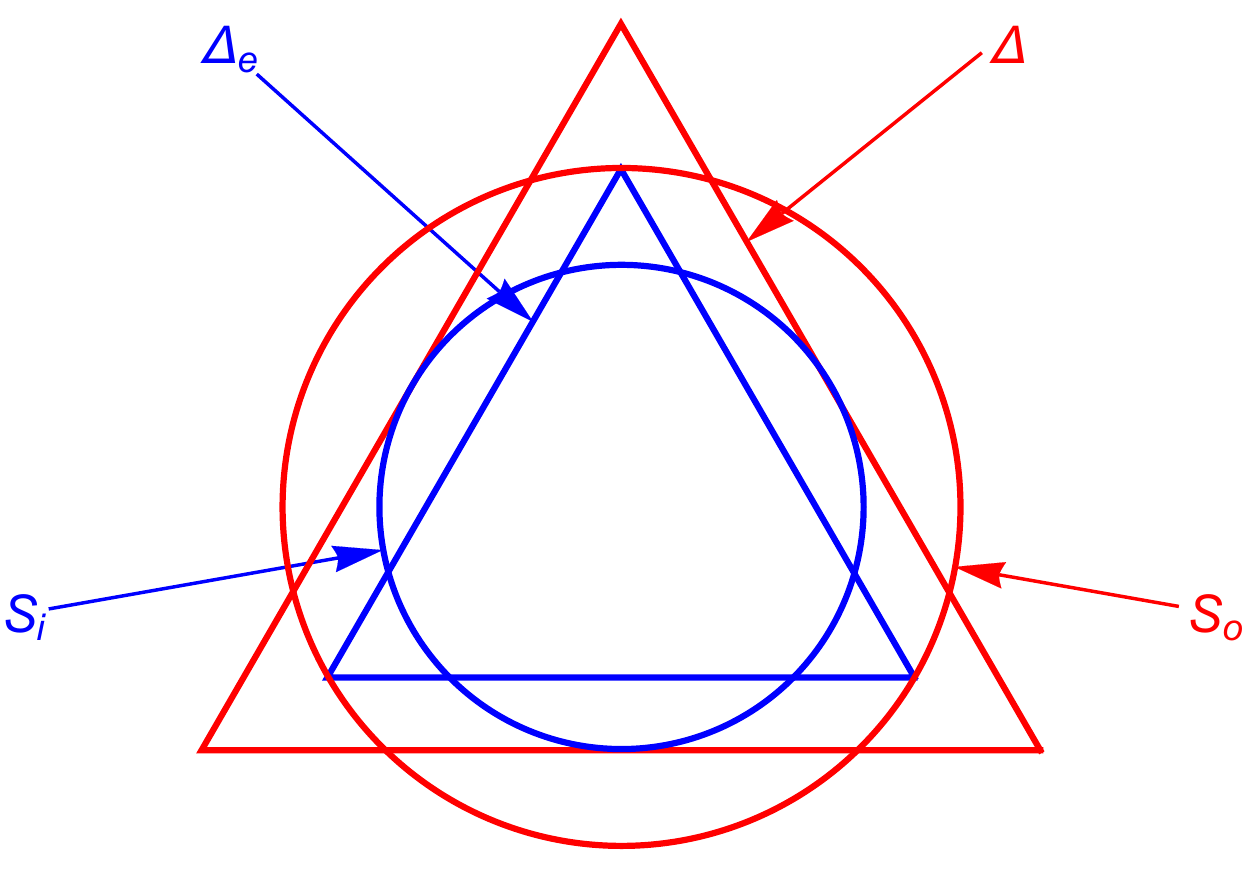}
\caption{\label{figSimpsAndBalls}Relations between the sets $\ps$, $\bs$, $\so$, $\si$ when $d>2$.  The diagram is schematic only.  It shows the inclusion relations, and points of contact, but does not reproduce the metric relations, which are impossible to depict accurately in a 2-dimensional diagram.  In particular, the basis simplex $\bs$ is much smaller in relation to the probability simplex $\ps$ than is shown here.  This is also true of the in-sphere $\si$ and the out-sphere $\so$. }
\end{figure}

General properties of qplexes include the following:
\begin{itemize}
\item Any qplex is convex and closed, and is thus the convex hull of
  its extremal points.

\item Because a qplex is self-polar, it can be thought of as the
  intersection of half-spaces.  Each half-space is defined, per
  Eq.~(\ref{eq:our-def-polar}), by a hyperplane that is composed of
  points all maximally distant from an extreme point of the qplex.

\item For every extreme point of a qplex, there exists at least one
  point that is maximally distant to it, in the sense of saturating
  the lower bound in Eq.~(\ref{eq:double-inequality-repeat}).

\item Call a vector ${p} \in \qp$ a \emph{pure} vector if
  $\inprod{p}{p} = 2/(d(d+1))$.  Any set of pure vectors that pairwise
  saturate the lower bound of the consistency condition
  (\ref{eq:double-inequality-repeat}) contains no more than $d$
  elements.

\item Suppose we have a qplex $\qp$ that is a polytope, \emph{i.e.,} the
  convex hull of a finite set of vertices.  Because all qplexes
  contain the basis distributions, this polytope must have at least
  $d^2$ vertices.  The polar of each extreme point is a half-space
  bounded by a hyperplane, all of the points on which are maximally
  distant from that extreme point.  The intersection of the
  half-spaces defined by all these hyperplanes forms a polytope.  By
  self-polarity, this polytope is identical to~$\qp$.  It follows that
  each extreme point of~$\qp$ must lie on at least $d^2 - 1$ such
  hyperplanes.  Therefore, each vertex of~$\qp$ is maximally distant
  from at least $d^2 - 1$ other extreme points.

\item It follows from the above that a qplex cannot be a simplex.
  Consequently, any point in the interior of a qplex can be written in
  more than one way as a convex combination of points on the boundary.
  This is a generalization of the result that any mixed quantum state
  has multiple convex decompositions into different sets of pure
  states, a theorem that has historically been of some significance in
  interpreting the quantum formalism~\cite{Jaynes1957, Ochs1981,
    CFS2001, stacey-VN}.  Also, a result of Pl\'avala implies that any
  qplex admits incompatible measurements~\cite{Plavala2016}.

\item If $\qp$ is a qplex, then no vector $p \in \qp$ can have an element
  whose value exceeds $1/d$.

\item The total number of zero-valued entries in any vector belonging
  to a qplex is bounded above by $d(d-1)/2$.

\end{itemize}

A SIC representation of a quantum state space is a qplex with a
continuous set of pure points.  All qplexes with this property enjoy
an interesting geometrical relation with the polytopes that can be
inscribed within them.

\begin{theorem}
If $\qp$ is a \qplex\ that contains an infinite number
  of pure points, then any polytope inscribed in~$\qp$ cannot contain
  the in-sphere $\si$.
\end{theorem}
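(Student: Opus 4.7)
The plan is a proof by contradiction that plays self-polarity against the strict convexity of the out-ball. Suppose for contradiction that some polytope $P$ is inscribed in $\qp$ and contains $\si$; by convexity it then contains the entire in-ball $\bi$, so $\bi \subseteq P \subseteq \qp$. Polarity reverses inclusion, giving $\dl{\qp} \subseteq \dl{P} \subseteq \dl{\bi}$. Self-polarity of $\qp$ and Lemma~\ref{lm:simpballpolars} turn this into
\be
\qp \subseteq \dl{P} \subseteq \bo.
\ee

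First I would verify that $\dl{P}$ is itself a polytope. If $p_1,\dots,p_k$ are the vertices of $P$, then by convexity the polarity condition $\inprod{u}{v}\ge 1/(d(d+1))$ for all $v\in P$ reduces to the finite family $\inprod{u}{p_j}\ge 1/(d(d+1))$ for $j=1,\dots,k$, so $\dl{P}$ is a polyhedron cut out by $k$ half-spaces inside the hyperplane $H$. The inclusion in the bounded set $\bo$ promotes it to a polytope with only finitely many extreme points.

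The geometric heart of the argument is to show that every pure vector of $\qp$ must be an extreme point of $\dl{P}$. A pure vector $p$ satisfies $\inprod{p}{p}=2/(d(d+1))$, which in barycentric coordinates becomes $\inprod{p-c}{p-c}=\ro^2$, placing $p$ on the out-sphere $\so$. Since $p\in \qp\subseteq \dl{P}\subseteq \bo$, the point $p$ lies simultaneously in the polytope $\dl{P}$ and on the boundary sphere of the ball containing it. Strict convexity of the squared distance to $c$ then forbids $p$ from being a proper convex combination $\lambda q_1 + (1-\lambda) q_2$ of two distinct points $q_1,q_2 \in \dl{P}\subseteq \bo$: any such representation would give $\|p-c\|^2 < \lambda\|q_1-c\|^2 + (1-\lambda)\|q_2-c\|^2 \le \ro^2$, contradicting $p\in \so$. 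Hence $p$ is an extreme point of $\dl{P}$.

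Combining the two steps yields the contradiction: $\qp$ has infinitely many pure vectors by hypothesis, each is forced to be a vertex of $\dl{P}$, yet $\dl{P}$ has only finitely many vertices. The main obstacle I expect is writing out cleanly the ``sphere-point-inside-polytope-implies-extreme'' step and confirming that pure vectors really live on $\so$; both reduce to routine convexity and the algebraic identity $U - 1/d^2 = \ro^2$ recorded implicitly above. The rest is bookkeeping with Lemma~\ref{lm:simpballpolars}.
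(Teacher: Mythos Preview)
Your proposal is correct and follows essentially the same route as the paper's proof: take polars of the chain $\bi \subseteq P \subseteq \qp$, use self-polarity of $\qp$ and Lemma~\ref{lm:simpballpolars} to get $\qp \subseteq \dl{P} \subseteq \bo$, and conclude that pure points (which live on $\so$) can only be vertices of the polytope $\dl{P}$. The paper compresses the final step into a single sentence, whereas you spell out explicitly why $\dl{P}$ is a polytope and why a point of $\so$ inside it must be extreme; this extra detail is welcome but not a different argument.
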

\begin{proof}
Suppose that $P$ is a polytope inscribed in $\qp$ that contains the
in-sphere $\si$.  Recall that the polarity operation reverses
inclusion (Theorem~\ref{tm:polarity}), so the polar polytope $\dl{P}$
of~$P$ must contain the polar $\dl{\qp}$ of~$\qp$.  But all qplexes are
self-polar, so $\qp \subset \dl{P}$.  Likewise, because the polar of the
in-ball $\bi$ is the out-ball $\bo$, it follows that $\dl{P}$ is
contained within the out-sphere $\so$.  Consequently, $\qp$ can have
only a finite number of pure points.
\end{proof}

Let us consider the two-outcome measurement $r_{{s}}$ defined by
rescaling a state ${s} \in \qp$:
\begin{equation}
r_{{s}}(0|i) = d^2 \gamma_0 s(i),\ \gamma_0 = \frac{1}{d}.
\end{equation}
We fix the other row of the matrix $r_{{s}}(j|i)$ by
normalization:
\begin{equation}
r(0|i) + r(1|i) = 1.
\end{equation}
Does this actually define a legitimate measurement?  Because
$\inprod{p}{s}$ is always bounded above and below for any vector ${p}
\in \cP$, then applying $r_{{s}}$ to any ${p} \in \cP$ via the
urgleichung will yield a valid probability vector ${q}$.  Therefore,
$r_{{s}}$ defined in this way is indeed a member of~$\cR$.

What's more, if we apply $r_{{s}}$ to the state ${s}$ itself,
then we can be \emph{certain} about the outcome, if ${s}$ lies on
the same sphere as the basis distributions.  In such a case, we have
$q(0) = 1$.  If Alice ascribes a state having this magnitude to a
system, she is asserting her confidence that performing a particular
experiment will have a specific result.  But certainty about
\emph{one} experiment does not, and indeed cannot, imply certainty
about \emph{all.}  Even when Alice is certain about what would happen
should she perform the experiment $r_{{s}}$, she is necessarily
uncertain about what would happen if she brought the Bureau of
Standards measurement down to the ground and applied it.

Note that when we apply $r_{{s}}$ to a state ${p}$, we compute
\begin{equation}
q(0) = d(d+1)\inprod{p}{s} - 1.
\label{eq:p-s-dual}
\end{equation}
The bound established by Assumption~\ref{assump:upper-bound} implies
that we can associate the factor $d$ just as well with ${s}$ or
with ${p}$.  That is, both $r_{{s}}$ and $r_{{p}}$ are
valid measurements within $\cR$, and we obtain the same probability
$q(0)$ when we apply $r_{{s}}$ to~${p}$ as we would if we
applied $r_{{p}}$ to the state ${s}$.

This is a point worth considering in depth.  With
Assumption~\ref{assump:R-from-P}, we introduced a relation between the
set of all states and the set of all measurements.  Now, thanks to the
additional assumptions we have invoked since then, we have a more
specific correspondence between the two sets:  For every pure state,
there is a binary measurement for which that state, and no other
state, implies certainty.  This result depends upon our assumption
that departures from complete ignorance are minimally constrained, or
equivalently, that the basis distributions are extremal.  As a
consequence, we know that we can take any valid state $s$ and scale by
a factor $d$ to create a row in a measurement matrix.  In the language
of Asher Peres, the fact that we can interpret Eq.~(\ref{eq:p-s-dual})
as $r_s$ applied to~$p$ or as $r_p$ applied to~$s$, for any states $p$
and $s$, is the reciprocity of ``preparations'' and ``tests''~\cite{PeresBook}.

This reciprocity is an important concept for many mathematical treatments
of quantum physics.  For example, it is one of the primary axioms in
Haag's formulation~\cite{Haag, Araki}.  To those who apply category
theory to quantum mechanics, it is the reason why they construct
``dagger-categories,'' and how the basic idea of an inner product is
introduced into their diagrammatic language~\cite{coecke2016}.

Next, we consider sets which are related to qplexes.

\begin{definition}
A subset $A$ of the probability simplex $\ps$ is a \emph{germ} if it
satisfies the fundamental inequalities (\ref{eq:germ-defining}) for
all $p$, $s\in A$.
\end{definition}

\begin{definition}
A germ is \emph{maximal} if no point can be added to it without
violating the fundamental inequalities (\ref{eq:germ-defining}).
\end{definition}

We start by proving two results about germs that follow from the
Cauchy--Schwarz inequality.  Originally, these theorems were proved
for qplexes~\cite{fuchs2009, appleby2011, fuchs2013}, but they apply
more broadly.

\begin{theorem}
If $\pr$ is a germ, then no vector $p \in \pr$ can have an element
  whose value exceeds $1/d$.
\end{theorem}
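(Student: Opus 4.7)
The plan is to use the upper bound of the fundamental inequality applied to $p$ with itself, then combine it with a Cauchy--Schwarz bound on the sum of squares of the remaining entries.

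First, I would note that since $G$ is a germ, taking $s = p$ in the fundamental inequalities (\ref{eq:germ-defining}) gives
\begin{equation}
\inprod{p}{p} = \sum_{i=1}^{d^2} p(i)^2 \leq \frac{2}{d(d+1)}.
\end{equation}

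Second, fix an index $i_0$ and set $a := p(i_0)$. The remaining $d^2 - 1$ entries are nonnegative and sum to $1 - a$, so by Cauchy--Schwarz (applied to the vector of remaining entries against the all-ones vector of length $d^2 - 1$, or equivalently by convexity of $x \mapsto x^2$),
\begin{equation}
\sum_{i \neq i_0} p(i)^2 \geq \frac{(1-a)^2}{d^2 - 1}.
\end{equation}
Combining these two bounds yields the scalar inequality
\begin{equation}
a^2 + \frac{(1-a)^2}{d^2 - 1} \leq \frac{2}{d(d+1)}.
\end{equation}

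Third, I would clear denominators (using $d^2 - 1 = (d-1)(d+1)$) and rearrange into a quadratic in $a$. The key observation is that $a = 1/d$ saturates this inequality with equality, which means the resulting polynomial has $ad - 1$ as a factor. Dividing out, the remaining linear factor is manifestly positive on $[0,1]$ for $d \geq 2$, so the inequality collapses to $ad - 1 \leq 0$, i.e., $p(i_0) \leq 1/d$. Since $i_0$ was arbitrary, this gives the claim.

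The only mildly delicate step is the factorization in the last paragraph: if one doesn't spot that $a = 1/d$ is the saturating value (which corresponds to a uniform distribution supported on exactly $d$ outcomes, the natural extremal case), one has to solve the quadratic with the discriminant formula and then sort out which root is the relevant upper bound. Noticing that the bound is achieved by a flat distribution on $d$ of the $d^2$ outcomes short-circuits that bookkeeping entirely.
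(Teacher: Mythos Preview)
Your proof is correct and follows essentially the same approach as the paper: both use the upper bound $\inprod{p}{p}\le \frac{2}{d(d+1)}$, apply Cauchy--Schwarz to the remaining $d^2-1$ entries (which sum to $1-a$ by normalization), and read off the bound from the resulting quadratic inequality in $a$. Your write-up is in fact a bit more complete: the paper restricts (harmlessly) to $p$ on the out-sphere and simply asserts ``Thus $p(0)\le 1/d$'' after the quadratic, whereas you explain the factorization $(da-1)(d^2a+d-2)\le 0$ and identify the saturating configuration.
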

\begin{proof}
Let $p \in \pr$ be a point on the out-sphere.  Assume without loss of generality that $p(0) \geq p(i)$.  Then
\begin{equation}
\frac{2}{d(d+1)} = p(0)^2 + \sum_{i=1}^{d^2-1} p(i)^2,
\end{equation}
and using the Cauchy--Schwarz inequality,
\begin{equation}
\frac{2}{d(d+1)} \geq p(0)^2 + \frac{1}{d^2-1} \left(\sum_{i=1}^{d^2-1} p(i)\right)^2.
\end{equation}
By normalization, we can simplify the sum in the last term, yielding
\begin{equation}
\frac{2}{d(d+1)} \geq p(0)^2 + \frac{1}{d^2-1} \left(1 - p(0)\right)^2.
\end{equation}
Thus,
\begin{equation}
p(0) \leq \frac{1}{d},
\end{equation}
with equality if and only if all the other $p(i)$ are equal, in which case, normalization forces them to take the value $1/(d(d+1))$.
\end{proof}
\begin{remark}
If the germ $\pr$ contains the basis distributions, this result also follows from
\begin{equation}
\inprod{p}{e_k} = \frac{1}{d(d+1)} + \frac{p_k}{d+1} \leq \frac{2}{d(d+1)}.
\end{equation}
\end{remark}

\begin{theorem}
The total number of zero-valued entries in any vector belonging
  to a germ is bounded above by $d(d-1)/2$.
\end{theorem}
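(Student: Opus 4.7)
The plan is to deduce the zero-count bound directly from the upper half of the fundamental inequalities by specializing to $p = s$ and then invoking Cauchy--Schwarz. Since a germ $\pr$ satisfies $\inprod{p}{s} \leq 2/(d(d+1))$ for all $p,s \in \pr$, taking $s = p$ yields the ``purity'' bound
\begin{equation}
\sum_{i=1}^{d^2} p(i)^2 \leq \frac{2}{d(d+1)}
\end{equation}
for every $p \in \pr$. This is the only piece of the germ hypothesis I expect to need.

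Next, I would suppose $p$ has exactly $n$ zero entries, so that there are $d^2 - n$ nonzero entries whose values sum to $1$ by normalization. Applied to those nonzero entries, Cauchy--Schwarz gives
\begin{equation}
1 = \left(\sum_{i : p(i) > 0} p(i)\right)^2 \leq (d^2 - n) \sum_{i : p(i) > 0} p(i)^2 \leq (d^2 - n) \cdot \frac{2}{d(d+1)}.
\end{equation}
Rearranging yields $d^2 - n \geq d(d+1)/2$, i.e., $n \leq d^2 - d(d+1)/2 = d(d-1)/2$, which is the claimed bound.

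There is essentially no conceptual obstacle here; the only thing worth noting is that the argument mirrors the previous theorem in style (Cauchy--Schwarz plus normalization plus the purity ceiling), and that equality can only hold when all nonzero entries are equal to $2/(d(d+1)) \cdot 1/(\text{nonzero count}) \cdot (d^2-n) = 2/(d(d+1))$ each --- that is, when the support has size exactly $d(d+1)/2$ and $p$ is flat on its support. I would probably include a brief remark to that effect to round off the proof, but no additional machinery is required.
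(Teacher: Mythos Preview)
Your proof is correct and essentially identical to the paper's: both use the purity bound $\sum_i p(i)^2 \le 2/(d(d+1))$ from the upper fundamental inequality together with Cauchy--Schwarz on the nonzero entries and normalization. Your closing remark on the equality case is a small addition not in the paper, but it is correct.
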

\begin{proof}
Let $\pr$ be a germ and choose $p \in \pr$.  Square the basic normalization condition to find
\begin{equation}
\left( \sum_i p(i) \right)^2 = 1.
\end{equation}
Apply the Cauchy--Schwarz inequality to show, writing $n_0$
for the number of zero-valued elements in $p$,
\begin{equation}
(d^2 - n_0) \sum_{\{i:p(i) >0\}} p(i)^2 \geq
 \left( \sum_{\{i:p(i) >0\}} p(i) \right)^2 = 1.
\end{equation}
Consequently,
\begin{equation}
n_0 \leq d^2 - \frac{d(d+1)}{2} = \frac{d(d-1)}{2}.
\label{eq:weak-zeros-bound}
\end{equation}
\end{proof}

It follows from Zorn's lemma~\cite{Zorn} that every germ is contained
in at least one maximal germ.  In other words, we can extend any germ
in at least one way to form a set that is also a germ, but which
admits no further consistent extension.  Adding any new point to a
maximal germ implies that some pair of points will violate the
inequalities (\ref{eq:germ-defining}).  Every qplex is a germ, but the
converse is not true.  Using the theory of polarity, we will show that
any maximal germ is a self-polar subset of the out-ball.  That is, a
maximal germ is a qplex, and in fact, any qplex is also a maximal
germ.

It is an immediate consequence of the definition that if $\pr$ is an arbitrary germ then
\be
\pr \subseteq \ps \cap \bo,
\label{eq:gmInclude}
\ee
where $\bo$ is the out-ball:
\ea{
\bo &= \left\{u \in H \colon \inprod{u}{u} \le \frac{2}{d(d+1)}
\right\} \\
 & = \left\{ u\in H \colon \| u- c\| \le  \ro \right\}.
}

Taking polars on both sides of Eq.~(\ref{eq:gmInclude}) and taking account of what polarity does to inclusion and intersection (Theorem~\ref{tm:polarity}), we find
\be
\cc (\dl{\ps} \cup \dl{\bo}) \subseteq \dl{\pr}
\label{eq:gmIncludeStar}
\ee
for every germ $\pr$.  Recall from Lemma~\ref{lm:polar-basis} that the polar of~$\Delta$ is the basis simplex $\bs$, and by Lemma~\ref{lm:simpballpolars} we know that the polar of the out-ball $\bo$ is the in-ball $\bi$.  Therefore,
\be
\cc (\bs \cup \bi) \subseteq \dl{\pr}.
\ee

We are now able to prove
\begin{theorem}
\label{tm:qplexPolarity}
Let $A$ be a subset of $\ps\cap \bo$. Then
\begin{enumerate}
\item $A$ is a germ if and only if $A \subseteq \dl{A}$.
\item $A$ is a maximal germ if and only if $A = \dl{A}$.
\end{enumerate}
Therefore, the terms ``maximal germ'' and ``qplex'' are equivalent.
\end{theorem}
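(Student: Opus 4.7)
The plan treats the two statements separately, with the substantive work concentrated in the forward direction of part~2.

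For part~1, if $A$ is a germ, the lower fundamental inequality gives $\inprod{p}{v} \geq 1/(d(d+1))$ for every $p, v \in A$, so $A \subseteq \dl{A}$. Conversely, if $A \subseteq \dl{A}$, the lower bound is built into the definition of $\dl{A}$, and the upper bound $\inprod{p}{s} \leq 2/(d(d+1))$ follows from Cauchy--Schwarz using the hypothesis $A \subseteq \bo$, since $\|p\|^2, \|s\|^2 \leq 2/(d(d+1))$. The backward direction of part~2 is a short contrapositive using part~1: if $A = \dl{A}$ and some $u \in \ps \setminus A$ could be added preserving the germ property, then $u$ would satisfy the lower bound with every $v \in A$, giving $u \in \dl{A} = A$, a contradiction.

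For the forward direction of part~2, part~1 already supplies $A \subseteq \dl{A}$; the task is $\dl{A} \subseteq A$. The obstacle I expect to be the main one is that $\dl{A}$ is a priori only a subset of the ambient hyperplane $H$, so I cannot invoke maximality on a generic $u \in \dl{A}$ until I confine $u$ to $\ps \cap \bo$. My plan is to bridge this gap by first showing that any maximal germ contains both $\bs$ and $\bi$. For $\bi \subseteq A$: a Cauchy--Schwarz estimate in barycentric coordinates, using the identity $\ri \ro = \rmid^2$, shows that every $u \in \bi$ and $v \in \bo$ satisfy $\inprod{u}{v} \in [1/(d(d+1)), 2/(d(d+1))]$; combined with $\bi \subseteq \ps$, maximality forces $\bi \subseteq A$. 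For $\bs \subseteq A$: a direct computation yields $\inprod{e_k}{v} = (v(k) + 1/d)/(d+1)$, whose lower bound reduces to $v(k) \geq 0$ and whose upper bound reduces to the entry-wise bound $v(k) \leq 1/d$ already established earlier in the section for germs; a second direct computation checks that the pairwise inner products $\inprod{e_k}{e_j}$ also lie in the required range, so every $u \in \bs$ is consistent with $A$ and maximality delivers $\bs \subseteq A$.

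With $\bs \cup \bi \subseteq A$, polarity reversal together with Lemmas~\ref{lm:polar-basis} and~\ref{lm:simpballpolars} and Theorem~\ref{tm:polarity} yields $\dl{A} \subseteq \dl{(\bs \cup \bi)} = \dl{\bs} \cap \dl{\bi} = \ps \cap \bo$. Hence any $u \in \dl{A}$ automatically lies in $\ps \cap \bo$, and I verify that $A \cup \{u\}$ is a germ: the lower bound with $v \in A$ is $u \in \dl{A}$; the self-lower-bound $\inprod{u}{u} \geq 1/d^2 \geq 1/(d(d+1))$ follows from $u \in \ps$ by Cauchy--Schwarz on $\sum_i u(i) = 1$; and all upper bounds follow from $u, v \in \bo$ and Cauchy--Schwarz. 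Maximality then forces $u \in A$, establishing $\dl{A} \subseteq A$, and hence $A = \dl{A}$.
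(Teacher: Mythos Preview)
Your proof is correct, but it takes a genuinely different route from the paper's in the forward direction of part~2.

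The paper does not first establish $\bi\subseteq A$ and $\bs\subseteq A$. Instead, for an arbitrary $u\in\dl{A}$ it constructs the single auxiliary point $\tilde u = c - \frac{\ri}{\|u-c\|}(u-c)$ lying on $\si$, checks directly that $\tilde u$ is consistent with $A$ (so $\tilde u\in A$ by maximality), and then uses the lower bound $\la u,\tilde u\ra\ge 1/(d(d+1))$, which unwinds to $\|u-c\|\le\ro$, to force $u\in\bo$. From there $u\in A$ follows by maximality. In effect the paper uses one point of $\si$ as a witness, whereas you absorb the entire in-ball (and the basis simplex) into $A$ up front and then invoke the duality machinery of Theorem~\ref{tm:polarity} and Lemmas~\ref{lm:polar-basis}--\ref{lm:simpballpolars} to confine $\dl{A}$ to $\ps\cap\bo$ in one stroke.

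Your approach is more structural: it makes transparent why the polarity lemmas for $\ps$ and $\bo$ are relevant, and it delivers $\bs\subseteq A$ and $\bi\subseteq A$ as ingredients of the proof rather than as corollaries (the paper notes $\bs\subseteq A$ only after the theorem). It also handles the containment $u\in\ps$ explicitly, which the paper's write-up leaves implicit. The paper's approach, by contrast, is more self-contained---it does not need the set-algebra of Theorem~\ref{tm:polarity} or the two polarity lemmas---and the antipodal construction $\tilde u$ foreshadows the polar-point notion used later in Theorem~\ref{thm:dualOfPtOnSo}.
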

\begin{proof}
The first statement is an immediate consequence of the definition.  To
prove the second statement we need to do a little work.  This is
because it is not immediately apparent that if $A$ is a maximal germ then
$\dl{A}\subseteq \bo$.

Suppose that $A$ is a maximal germ.  We know from the first part of the theorem that $A \subseteq \dl{A}$.  To prove the reverse inclusion  let $u\in \dl{A}$ be arbitrary.
In order to show that $u\in A$ first consider the vector
\be
\tilde{u} = c - \frac{\ri}{\|u-c\|} (u-c).
\ee
We have, for all $v\in A$,
\ea{
-\ri \ro \le \inprod{\tilde{u}-c}{v-c}  \le \ri\ro,
\\
\intertext{implying}
\frac{1}{d(d+1)} \le \inprod{\tilde{u}}{v} \le \frac{2}{d(d+1)}.
}
Also
\ea{
\frac{1}{d(d+1)} \le \inprod{\tilde{u}}{\tilde{u}} = \frac{1}{d^2-1} \le \frac{2}{d(d+1)}.
}
So $\tilde{u} \in A$.  We now use this to show that $u \in A$.  In fact
\ea{
-\ri \|u-c\| &= \inprod{u-c}{\tilde{u}-c}
\nonumber
\\
&= \inprod{u}{\tilde{u}} - \frac{1}{d^2}
\nonumber
\\
& \ge
-\ri \ro
}
implying
$u  \in \bo$.
Consequently
\ea{
\la u,v \ra &= \inprod{u-c}{v-c} + \frac{1}{d^2}
\nonumber
\\
&\le \ro^2 + \frac{1}{d^2}
\nonumber
\\
& = \frac{2}{d(d+1)}
}
for all $v\in A$.  The fact that $u\in \dl{A}$ means
\be
\inprod{u}{v} \ge \frac{1}{d(d+1)}
\ee
for all $v \in A$.  Finally
\be
\frac{1}{d(d+1)} \le   \|u-c\|^2 +\frac{1}{d^2} =   \inprod{u}{u} \le \frac{2}{d(d+1)}.
\ee
So $u\in A$.  This completes the proof that if $A$ is a maximal germ then $A=\dl{A}$.  The  converse statement, that if $A=\dl{A}$ then $A$ is a maximal germ, is an immediate consequence of the definition.
\end{proof}

Let us note that this theorem means, in particular, that every maximal
germ contains the basis simplex.  If we start with the fundamental
inequalities and assume that the set of points satisfying them is
maximal, then that set turns out to be self-polar.  Because the state
space is contained within the probability simplex, the state space
must \emph{contain} the \emph{polar of} the probability simplex, which
by Lemma~\ref{lm:polar-basis} is the basis simplex.  In earlier
papers on germs~\cite{fuchs2009, appleby2011, fuchs2013}, the existence of the
basis distributions was an extra assumption in addition to maximality; here,
using the concept of polarity, we have been able to derive it.

Let us also note, as another consequence of this theorem, that if $\qp$ is a maximal germ, and if $q$ is any element of $\qp$, then there exists a measurement $r$ and index $a$ such that $q=s_a$, where $s_a$ is the distribution
\begin{equation}
s_a(j) = \frac{r(a|j)}{\sum_k r(a|k)}.
\end{equation}
This too was something that was assumed in older work~\cite{fuchs2009, appleby2011, fuchs2013}, but which we are now in a position to derive. To see that it is true observe that the statement is trivial if $q=c$ (simply take $r$ to be the one-outcome measurement).  If, on the other hand,   $q\neq c$ we can define
\be
q' =  c-\frac{\ri}{\|q-c\|}(q-c).
\ee
By construction $q' \in \si$.  So it follows from the theorem that $q'\in \qp$.  Consequently, if we define
\ea{
r(a|i)
=
\begin{cases}
\frac{d^2\ri}{\|q-c\|+\ri} q(i) \qquad & a = 1
\\
\frac{d^2\|q-c\|}{\|q-c\|+\ri} q'(i) \qquad & a=2
\end{cases}
}
then $r$ describes a two-outcome measurement such that
\ea{
\frac{r(1|i)}{\sum_k r(1|k)} &= q(i), & \frac{r(2|i)}{\sum_k r(2|k)} &= q'(i).
}

At this stage, we turn to the question of what germs can have in common, and how they can differ.  In order to develop this topic, we introduce some more
definitions.  Given an arbitrary germ $\pr$, let $\allqplexes_\pr$ denote the set of all \qplexes\ containing $\pr$ (necessarily nonempty, as we noted above).

\begin{definition}
The \emph{stem} of a germ $\pr$ is the set
\begin{equation}
\stm(\pr) = \bigcap_{\qp \in \allqplexes_\pr} \qp,
\end{equation}
and the \emph{envelope} of $\pr$ is the set
\begin{equation}
\env(\pr) = \bigcup_{\qp \in \allqplexes_\pr} \qp.
\end{equation}
\end{definition}
When $\pr$ is the empty set  $\allqplexes_{\emptyset}$ is the set of all \qplexes, without restriction.  In that case we omit the subscript and simply denote it $\allqplexes$.  Similarly we write $\stm(\emptyset) = \stm$ and $\env(\emptyset) = \env$.  We will refer to $\stm$ and $\env$ as the principal \stem\ and \envelope.

\begin{theorem}
Let $\pr$ be a germ.    Then
\ea{
\stm(\pr) &= \cc(\bs \cup \bi \cup \pr),
\label{eq:trnExpn}
\\
\env(\pr) &= \ps \cap \bo \cap \dl{\pr}.
\label{eq:envExpn}
}
In particular, $\stm(\pr)$ and $\env(\pr)$ are mutually polar.
\end{theorem}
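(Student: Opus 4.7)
Proof proposal. The plan is to prove~(\ref{eq:envExpn}) first by a direct double-inclusion argument, then derive~(\ref{eq:trnExpn}) by taking polars, and finally deduce the mutual polarity of $\stm(\pr)$ and $\env(\pr)$ as an immediate corollary.

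For the envelope identity, the easy direction $\env(\pr) \subseteq \ps \cap \bo \cap \dl{\pr}$ follows because any qplex $\qp \in \allqplexes_\pr$ is contained in $\ps \cap \bo$ by definition, and since $\pr \subseteq \qp$ polarity reverses inclusion, giving $\qp = \dl{\qp} \subseteq \dl{\pr}$; the equality uses self-polarity from Theorem~\ref{tm:qplexPolarity}. Taking the union over $\allqplexes_\pr$ yields the containment. For the reverse direction I would take any $u \in \ps \cap \bo \cap \dl{\pr}$ and show that $\pr \cup \{u\}$ is again a germ; Zorn's lemma then extends it to a maximal germ, which by Theorem~\ref{tm:qplexPolarity} is a qplex in $\allqplexes_\pr$ containing $u$. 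For each $v \in \pr$ the lower bound $\inprod{u}{v} \ge 1/(d(d+1))$ holds because $u \in \dl{\pr}$, and the upper bound $\inprod{u}{v} \le 2/(d(d+1))$ follows from Cauchy--Schwarz in barycentric coordinates: both $u-c$ and $v-c$ have norm at most $\ro$, so $\inprod{u}{v} = \inprod{u-c}{v-c} + 1/d^2 \le \ro^2 + 1/d^2 = 2/(d(d+1))$. The self-bounds on $u$ come for free: $u \in \bo$ supplies the upper bound, and $u \in \ps \subseteq H$ together with Cauchy--Schwarz against $c$ gives $\inprod{u}{u} \ge 1/d^2 > 1/(d(d+1))$.

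Once~(\ref{eq:envExpn}) is in hand, self-polarity of qplexes combined with~(\ref{eq:dualUnionB}) yields
\[
\stm(\pr) = \bigcap_{\qp \in \allqplexes_\pr} \qp = \bigcap_{\qp} \dl{\qp} = \dl{\left( \bigcup_{\qp} \qp \right)} = \dl{\env(\pr)}.
\]
Each of $\ps$, $\bo$, $\dl{\pr}$ is closed, convex, and contains $c$ (the last by Theorem~\ref{tm:polarity}), so~(\ref{eq:dualIntersectionB}) applies to the triple intersection $\env(\pr) = \ps \cap \bo \cap \dl{\pr}$:
\[
\dl{\env(\pr)} = \cc\bigl(\dl{\ps} \cup \dl{\bo} \cup \ddl{\pr}\bigr) = \cc\bigl(\bs \cup \bi \cup \ddl{\pr}\bigr),
\]
by Lemmas~\ref{lm:polar-basis} and~\ref{lm:simpballpolars}. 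Since $\ddl{\pr} = \cc(\pr \cup \{c\})$ by Theorem~\ref{tm:polarity} and $c \in \bi$, the extra vertex is absorbed and $\stm(\pr) = \cc(\bs \cup \bi \cup \pr)$. The mutual-polarity claim is then automatic: $\env(\pr)$ is the intersection of three closed convex sets each containing $c$, so $\ddl{\env(\pr)} = \env(\pr)$, whence $\dl{\stm(\pr)} = \ddl{\env(\pr)} = \env(\pr)$.

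The main obstacle is the backward direction of~(\ref{eq:envExpn}), namely verifying that $\pr \cup \{u\}$ really is a germ. Both the upper and the lower ends of the fundamental inequalities must be controlled for every newly introduced pair, and it is the upper bound $\inprod{u}{v} \le 2/(d(d+1))$ against points of $\pr$ that needs the Cauchy--Schwarz step in barycentric coordinates; the rest of the checks are one-line consequences of $u \in \ps \cap \bo \cap \dl{\pr}$. Once that hurdle is cleared, Zorn's lemma completes the envelope argument, and the remainder of the theorem is a clean exercise in the polar calculus of Theorem~\ref{tm:polarity}.
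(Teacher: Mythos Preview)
Your argument is correct and follows essentially the same route as the paper: establish~(\ref{eq:envExpn}) by double inclusion (the reverse direction via showing $\pr\cup\{u\}$ is a germ and extending to a maximal germ), then obtain~(\ref{eq:trnExpn}) by taking polars and invoking Lemmas~\ref{lm:polar-basis} and~\ref{lm:simpballpolars}. The only difference is that you spell out the Cauchy--Schwarz verification of the upper bound and the self-bounds explicitly, whereas the paper simply asserts that $\pr\cup\{p\}$ is a germ and leaves those checks to the reader.
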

\begin{proof}
If $\qp$ is a qplex containing $\pr$ we must have
$\qp =\dl{\qp} \subseteq \ps \cap \bo \cap \dl{\pr}$.  So
\be
\env(\pr) \subseteq \ps\cap \bo \cap \dl{\pr}.
\ee
On the other hand if $p$ is any point in  $\ps \cap \bo \cap \dl{\pr}$ then $\pr \cup \{p\}$ is a germ, and so must be contained in some $\qp\in \allqplexes_\pr$.  The second statement now follows.

To prove the first statement we take duals on both sides of
\be
\bigcup_{\qp \in \allqplexes_\pr} \qp = \ps \cap \bo \cap \dl{\pr}.
\ee
We find
\ea{
\stm (\pr)& = \cc\Bigl(\bs \cup \bi \cup \cc\bigl(\pr\cup \{c\}\bigr)\Bigr)
\nonumber
\\
&= \cc(\bs \cup \bi \cup \pr).
}
\end{proof}
\begin{corollary}
\label{cor:mainTrEnv}
The principal \stem\ and \envelope\ are given by
\ea{
\stm &= \cc(\bs \cup \bi),
\\
\env &= \ps \cap \bo.
}
\end{corollary}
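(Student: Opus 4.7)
The plan is to derive the corollary as the $\pr=\emptyset$ specialization of the preceding theorem, since by the definitions stated just above the corollary the principal \stem\ and \envelope\ are exactly $\stm(\emptyset)$ and $\env(\emptyset)$, with $\allqplexes_{\emptyset}=\allqplexes$ being the collection of all \qplexes.

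For the stem, I would simply substitute $\pr=\emptyset$ into the formula $\stm(\pr)=\cc(\bs\cup\bi\cup\pr)$. Since $\bs\cup\bi\cup\emptyset=\bs\cup\bi$, this immediately yields $\stm=\cc(\bs\cup\bi)$, with no further work required.

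For the envelope, I would substitute $\pr=\emptyset$ into $\env(\pr)=\ps\cap\bo\cap\dl{\pr}$. The only mildly delicate step is unpacking $\dl{\emptyset}$: from the explicit definition $\dl{A}=\{u\in H\colon \inprod{u}{v}\ge 1/(d(d+1))\ \forall v\in A\}$, the universal quantifier is vacuous when $A=\emptyset$, so $\dl{\emptyset}=H$. Since $\ps\subseteq H$, it follows that $\env=\ps\cap\bo\cap H=\ps\cap\bo$.

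I do not anticipate any real obstacle. Should one wish to avoid even the trivial appeal to $\dl{\emptyset}=H$, one can reproduce the two inclusions of the preceding theorem's proof directly: every qplex $\qp$ obeys $\qp=\dl{\qp}\subseteq \ps\cap \bo$ (self-polarity and the ambient constraints built into the definition), giving $\env\subseteq \ps\cap \bo$; and conversely any $p\in\ps\cap\bo$ by itself already forms a germ (the only consistency condition to check is $\inprod{p}{p}\le 2/(d(d+1))$, which is precisely $p\in\bo$), so $\{p\}$ extends to a maximal germ, i.e.\ a qplex, via Zorn's lemma as invoked earlier in the section, giving $\ps\cap\bo\subseteq\env$.
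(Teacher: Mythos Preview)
Your proposal is correct and takes essentially the same approach as the paper, whose proof is the single word ``Immediate.'' Your explicit unpacking of $\dl{\emptyset}=H$ and the alternative direct argument are both sound and in fact more detailed than what the paper provides.
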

\begin{proof}
Immediate.
\end{proof}
This result is illustrated schematically in Figure~\ref{figCoreEnvelope}.

\begin{figure}[hbt]
\includegraphics[width = 8cm]{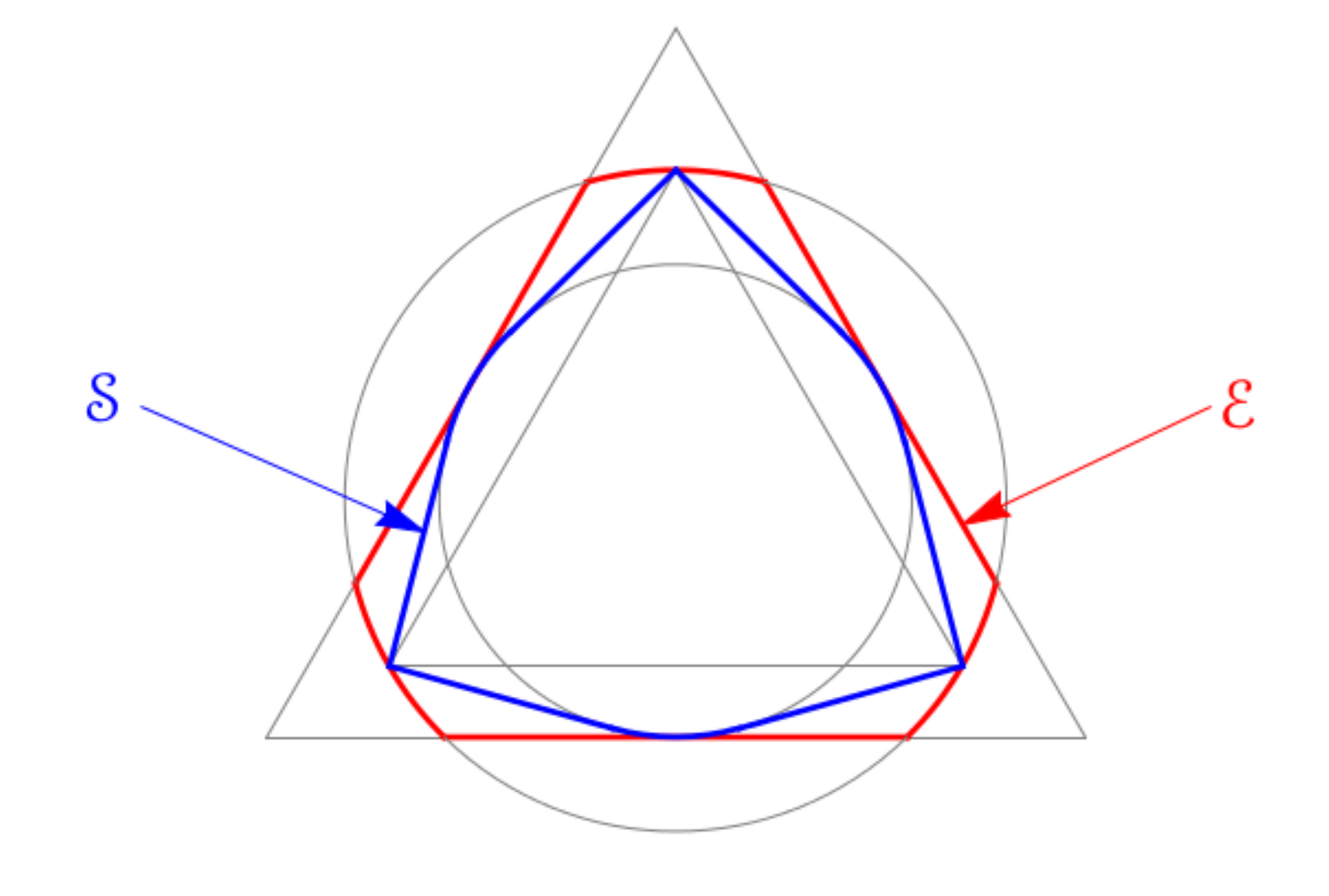}
\caption{\label{figCoreEnvelope}The principal stem and envelope when $d>2$.  The sets $\ps$, $\bs$, $\so$, $\si$ are shown in  gray.  The surface of every qplex lies between the blue and red surfaces.  As with Fig.~\ref{figSimpsAndBalls} the diagram is schematic only.  In particular, the total measure of the set $\stm$ is much smaller in comparison  with the total measure of the set $\env$ than the diagram suggests.}
\end{figure}

\begin{corollary}
\label{cr:germExtend}
Let $\pr$ be a closed, convex germ containing $\stm$.  Then
\begin{align}
\stm(\pr) &  = \pr & \env(\pr) & = \dl{\pr}
\end{align}
Moreover, given arbitrary $p\in \dl{\pr}$ such that $p \notin \pr$ there exist \qplexes\ $\qp_1$, $\qp_2$ containing $\pr$ such that
\begin{align}
p &\notin \qp_1 & p& \in \qp_2
\end{align}
\end{corollary}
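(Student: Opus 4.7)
The plan is to lean on the expressions $\stm(\pr) = \cc(\bs \cup \bi \cup \pr)$ and $\env(\pr) = \ps \cap \bo \cap \dl{\pr}$ established in the preceding theorem, together with the polarities $\dl{\bs}=\ps$ and $\dl{\bi}=\bo$ (Lemmas \ref{lm:polar-basis} and \ref{lm:simpballpolars}) and the involutive property $\ddl{A}=A$ for closed convex sets containing $c$ (Theorem \ref{tm:polarity}). Everything collapses quickly once these tools are in hand.

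For the first identity, since $\pr$ is closed and convex and already contains $\stm \supseteq \bs \cup \bi$, the union $\bs \cup \bi \cup \pr$ is just $\pr$, and its closed convex hull is $\pr$ again; so $\stm(\pr)=\pr$. For the second identity, $\pr \supseteq \bi$ combined with inclusion-reversal under polarity gives $\dl{\pr}\subseteq \dl{\bi}=\bo$, and $\pr \supseteq \bs$ gives $\dl{\pr}\subseteq \dl{\bs}=\ps$. Hence $\dl{\pr}\subseteq \ps \cap \bo$, so the intersection $\ps \cap \bo \cap \dl{\pr}$ collapses to $\dl{\pr}$ and $\env(\pr)=\dl{\pr}$.

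For the \emph{moreover} clause, fix $p\in \dl{\pr}\setminus \pr$. To produce $\qp_2$ I would show that $\pr \cup \{p\}$ is itself a germ, then invoke Zorn's lemma followed by Theorem \ref{tm:qplexPolarity} to extend it to a qplex containing $p$ and $\pr$. Consistency of $p$ with any $w\in \pr$ requires both fundamental inequalities: the lower bound is exactly the defining property of $p\in \dl{\pr}$, while the upper bound follows from the fact that $p,w\in \bo$ (the former by the preceding paragraph) via Cauchy--Schwarz in barycentric coordinates,
\begin{equation}
\inprod{p}{w} = \inprod{p-c}{w-c} + \tfrac{1}{d^2} \le \ro^2 + \tfrac{1}{d^2} = \tfrac{2}{d(d+1)}.
\end{equation}
The self-pair $(p,p)$ is bounded above by the same inequality and below by $1/d^2 \ge 1/(d(d+1))$, and $p\in \ps$ because $\dl{\pr}\subseteq \ps$.

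To build $\qp_1$ \emph{excluding} $p$, the key is the involutive polar: because $\pr$ is closed, convex, and contains $c\in \bi\subseteq \pr$, Theorem \ref{tm:polarity} yields $\ddl{\pr}=\pr$. Since $p\notin \pr=\ddl{\pr}$, there must exist some $v\in \dl{\pr}$ with $\inprod{v}{p} < 1/(d(d+1))$. By the same Cauchy--Schwarz argument as before, $\pr \cup \{v\}$ is a germ; extend it to a qplex $\qp_1$. Then $v\in \qp_1 \supseteq \pr$, and if $p$ also belonged to $\qp_1$ the germ condition would force $\inprod{p}{v}\ge 1/(d(d+1))$, contradicting the choice of $v$. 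The only mildly delicate step is the separation argument producing $v$; everything else is routine bookkeeping with the polarity machinery already in place.
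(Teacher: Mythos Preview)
Your argument is correct. For the first two identities your reasoning matches exactly what the paper intends by ``Immediate'': plug the hypotheses $\pr\supseteq \stm =\cc(\bs\cup\bi)$ and $\pr$ closed convex into the formulas from the preceding theorem, and use inclusion-reversal of polarity.

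For the \emph{moreover} clause your route differs slightly from the paper's. The paper's one-word proof is meant to be read as follows: once $\stm(\pr)=\pr$ and $\env(\pr)=\dl{\pr}$ are established, recall the \emph{definitions} $\stm(\pr)=\bigcap_{\qp\in\allqplexes_\pr}\qp$ and $\env(\pr)=\bigcup_{\qp\in\allqplexes_\pr}\qp$. Then $p\in\dl{\pr}=\env(\pr)$ says precisely that $p\in\qp_2$ for some $\qp_2\in\allqplexes_\pr$, while $p\notin\pr=\stm(\pr)$ says precisely that $p\notin\qp_1$ for some $\qp_1\in\allqplexes_\pr$. No separation argument or fresh invocation of Zorn is needed; the heavy lifting was already absorbed into the preceding theorem. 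Your construction of $\qp_2$ essentially re-derives a step from that theorem's proof, and your construction of $\qp_1$ via a separating $v\in\dl{\pr}$ with $\inprod{v}{p}<1/(d(d+1))$ is a correct and somewhat more constructive alternative, but it is more work than the corollary requires.
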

\begin{proof}
Immediate.
\end{proof}
Every germ can be extended to a \qplex.  It is natural to ask how many ways there are of performing the extension.  The following theorem provides a partial answer to that question.
\begin{theorem}
Let $\pr$ be a closed, convex germ containing $\stm$.  If $\pr$ is not already a \qplex, then there are uncountably many \qplexes\ containing $\pr$.
\end{theorem}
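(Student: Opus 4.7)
The plan is to produce a continuum of distinct closed convex germs strictly extending $\pr$ and then to leverage the identity $\stm(\pr')=\pr'$ from Corollary~\ref{cr:germExtend} to force uncountably many qplexes above $\pr$.

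First, since $\pr$ is a closed convex germ with $\pr\subseteq\dl{\pr}$ but is not a qplex, Theorem~\ref{tm:qplexPolarity} gives $\pr\subsetneq\dl{\pr}$. Fix $p_0\in\dl{\pr}\setminus\pr$ and, for $\lambda\in[0,1]$, set
\be
p_0(\lambda)=(1-\lambda)c+\lambda p_0.
\ee
This segment lies in $\dl{\pr}$ by convexity, and in $\bo$ because $\pr\supseteq\bi$ forces $\dl{\pr}\subseteq\bo$. Closedness and convexity of $\pr$, together with $c\in\pr$ and $p_0\notin\pr$, furnish a threshold $\lambda^*\in[0,1)$ with $p_0(\lambda)\in\pr$ iff $\lambda\le\lambda^*$.

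Next, for each $\lambda\in(\lambda^*,1]$, I would set $\pr_\lambda=\cc(\pr\cup\{p_0(\lambda)\})$ and verify that $\pr_\lambda$ is a germ by writing a general pair of its elements as convex combinations of points of $\pr$ with $p_0(\lambda)$ and checking the fundamental inequalities term by term: the relevant bounds on $\inprod{p'}{p_0(\lambda)}$ follow from $p_0(\lambda)\in\dl{\pr}\cap\bo$ via Cauchy--Schwarz, and those on $\inprod{p_0(\lambda)}{p_0(\lambda)}$ from $p_0(\lambda)\in\bo$. Since $\pr_\lambda$ is then a closed convex germ containing $\stm$, Corollary~\ref{cr:germExtend} gives $\stm(\pr_\lambda)=\pr_\lambda$. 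A short algebraic check on the affine line through $c$ and $p_0$ then shows $\pr_{\lambda_1}\neq\pr_{\lambda_2}$ for $\lambda_1\neq\lambda_2$ in $(\lambda^*,1]$: if $p_0(\lambda_2)=\gamma p'+(1-\gamma)p_0(\lambda_1)$ with $p'\in\pr$ and $\gamma\in(0,1]$, then $p'$ must lie on the line through $c$ and $p_0$, hence $p'=p_0(t)$ with $t\le\lambda^*$, while direct computation gives $t=\lambda_1+(\lambda_2-\lambda_1)/\gamma>\lambda_1>\lambda^*$. So $\{\pr_\lambda\}_{\lambda\in(\lambda^*,1]}$ is an uncountable family of pairwise distinct stems.

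Finally, I would suppose for contradiction that only countably many qplexes $\qp_1,\qp_2,\dots$ contain $\pr$. For each $i$, the set $\{\lambda:p_0(\lambda)\in\qp_i\}$ is an interval $[0,t_i]$ because $\qp_i$ is closed, convex, and contains $c$. A qplex $\qp_i$ contains $\pr_\lambda$ exactly when $t_i\ge\lambda$, so as $\lambda$ traverses $(\lambda^*,1]$ the index set $\{i:\qp_i\supseteq\pr_\lambda\}$ changes only when $\lambda$ crosses one of the countably many thresholds $t_i$. Hence $\stm(\pr_\lambda)=\bigcap_{i:\,t_i\ge\lambda}\qp_i$ assumes at most countably many distinct values, contradicting the uncountability of $\{\pr_\lambda\}$. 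The main obstacle is recognizing that the stems $\stm(\pr_\lambda)$ are the right invariants for the counting: extending a germ to a qplex may collapse distinctness, but the stem identity preserves it and converts the one-dimensional segment through $c$ and $p_0$ directly into the desired uncountability.
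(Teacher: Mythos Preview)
Your steps leading up to the contradiction argument are fine: the sets $\pr_\lambda=\cc(\pr\cup\{p_0(\lambda)\})$ are indeed closed convex germs containing $\stm$, the identity $\stm(\pr_\lambda)=\pr_\lambda$ holds, and they are pairwise distinct. The problem is the final counting step.

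You assert that if only countably many qplexes $\qp_1,\qp_2,\dots$ contain $\pr$, then the index set $S_\lambda=\{i:t_i\ge\lambda\}$ ``changes only when $\lambda$ crosses one of the countably many thresholds $t_i$,'' and hence $\stm(\pr_\lambda)=\bigcap_{i\in S_\lambda}\qp_i$ takes only countably many values. This inference is false. The map $\lambda\mapsto S_\lambda$ is a decreasing chain in $2^{\mathbb N}$, and such chains can be uncountable: if the $t_i$ happen to enumerate the rationals in $(\lambda^*,1]$, then for every $\lambda_1<\lambda_2$ there is a rational $t_i\in[\lambda_1,\lambda_2)$, so $S_{\lambda_1}\supsetneq S_{\lambda_2}$ and $\lambda\mapsto S_\lambda$ is injective. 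Nothing you have proved rules this out. Tracing your own argument through, what the identity $\pr_\lambda=\bigcap_{i\in S_\lambda}\qp_i$ actually forces (by intersecting with the line through $c$ and $p_0$) is only that $\inf\{t_i:t_i\ge\lambda\}=\lambda$ for every $\lambda$, i.e.\ that the thresholds are dense --- which is perfectly compatible with there being countably many of them.

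The paper's proof closes exactly this gap by constructing, for each $s$ on the open segment $(q,p)$, a qplex $\qp_s$ whose intersection with that segment is \emph{precisely} $[c,s]$. The device is a separating hyperplane through $s$ missing $\pr$, from which one manufactures a point $u'\in\dl{(\cc(\pr\cup\{s\}))}$ satisfying $\langle u',t\rangle<\frac{1}{d(d+1)}$ for all $t\in(s,p]$; any qplex containing both $\cc(\pr\cup\{s\})$ and $u'$ then has threshold exactly at $s$. This direct construction is what your indirect counting argument is missing, and it appears to be genuinely needed: to make your approach work you would have to show that for every $\lambda$ some qplex has $t_i=\lambda$ exactly, which is essentially the paper's construction.
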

\begin{proof}
It will be convenient to begin by introducing some notation.  Given any two  points $p_1$, $p_2\in H$ we define
\ea{
[p_1,p_2]&=\{\lambda p_1 + (1-\lambda)p_2 \colon 0 \le \lambda \le 1\}
\\
(p_1,p_2]&=\{\lambda p_1 + (1-\lambda)p_2 \colon 0 < \lambda \le 1\}
\\
[p_1,p_2)&=\{\lambda p_1 + (1-\lambda)p_2 \colon 0 \le \lambda < 1\}
\\
(p_1,p_2)&=\{\lambda p_1 + (1-\lambda)p_2 \colon 0 < \lambda <1\}
}

Turning to the proof, suppose that
 $\pr$ is not a \qplex.  Then we can choose $p \in \dl{\pr}$ such that $p \notin \pr$.  Let $q$ be the point where $[c,p]$ meets the boundary of $\pr$.  We will show that, for each  $s \in (q,p)$ there exists a \qplex\ $\qp_s$ such that $\pr \cup [c,s] \subseteq \qp_s$ and $(s,p] \cap \qp_s = \emptyset$.  The result will then follow since $\qp_s \neq \qp_{s'}$ if $s\neq s'$.

To construct the \qplex\ $\qp_s$ for given $s\in (q,p)$ observe that it follows from the basic theory of convex sets~\cite{grun} that there exists a hyperplane through $s$ and not intersecting $\pr$.   This means we can choose  $u\in H$ such that
\ea{
\la u, v \ra & <   \la u, s\ra   =1
}
for all $v\in\pr$. Observe that for all $t \in (s,p)$ we have
\ea{
t = \lambda s + (1-\lambda) c
}
for some $\lambda >1$ and, consequently,
\ea{
\la u, t \ra =
\frac{(d^2-1)\lambda  + 1}{d^2} > 1.
}
Let
\ea{
u' = \left(1 + \frac{1}{(d+1)(d^2-1)}\right) c - \frac{1}{(d+1)(d^2-1)} u
}
and let $A=\cc\bigl( \pr \cup \{s\}\bigr)$.  Then it is easily seen that $u'\in \dl{A}$ while
\ea{
\la u' , t \ra & < \frac{1}{d(d+1)}
}
for all $t \in (s,p)$.  $A$ is a closed, convex germ containing $\stm$, so it follows from Corollary~\ref{cr:germExtend} that there exists a \qplex\ $\qp_s$ containing $A$ and $u'$.  By construction $t \notin \qp_s$ for all $t\in (s,p)$, so $\qp_s$ has the  required properties.
\end{proof}
The result just proved shows that there exist uncountably many \qplexes.  However, we would like to know a little more:  namely, how many \qplexes\ there are which are geometrically distinct.  We now prove a series of results leading to Theorem~\ref{thm:infNonIsomorphic}, which states that there are uncountably many \qplexes\ which are not isomorphic to each other, or to quantum state space.

\begin{definition}
Let $s\in H$ be arbitrary.  We define the polar point of $s$ to be the point
\ea{
\dl{s} &= c - \frac{\ro\ri}{\|s-c\|^2} (s-c)
\\
\intertext{and the polar hyperplane of $s$ to be the set}
H_s &=\left\{u \in H \colon \la u, s\ra = \frac{1}{d(d+1)}\right\}.
}
\end{definition}Observe that $\ddl{s} = s$ for all $s$, and
\ea{
\la \dl{s} , s\ra = \frac{1}{d(d+1)}
}
(so $\dl{s} \in H_s$).
The relations between the polar $\dl{\{s\}}$, the polar point $\dl{s}$ and the polar hyperplane $H_s$ are depicted in Fig.~\ref{fig:PolarPoint}.
\begin{figure}[hbt]
\includegraphics[width = 6cm]{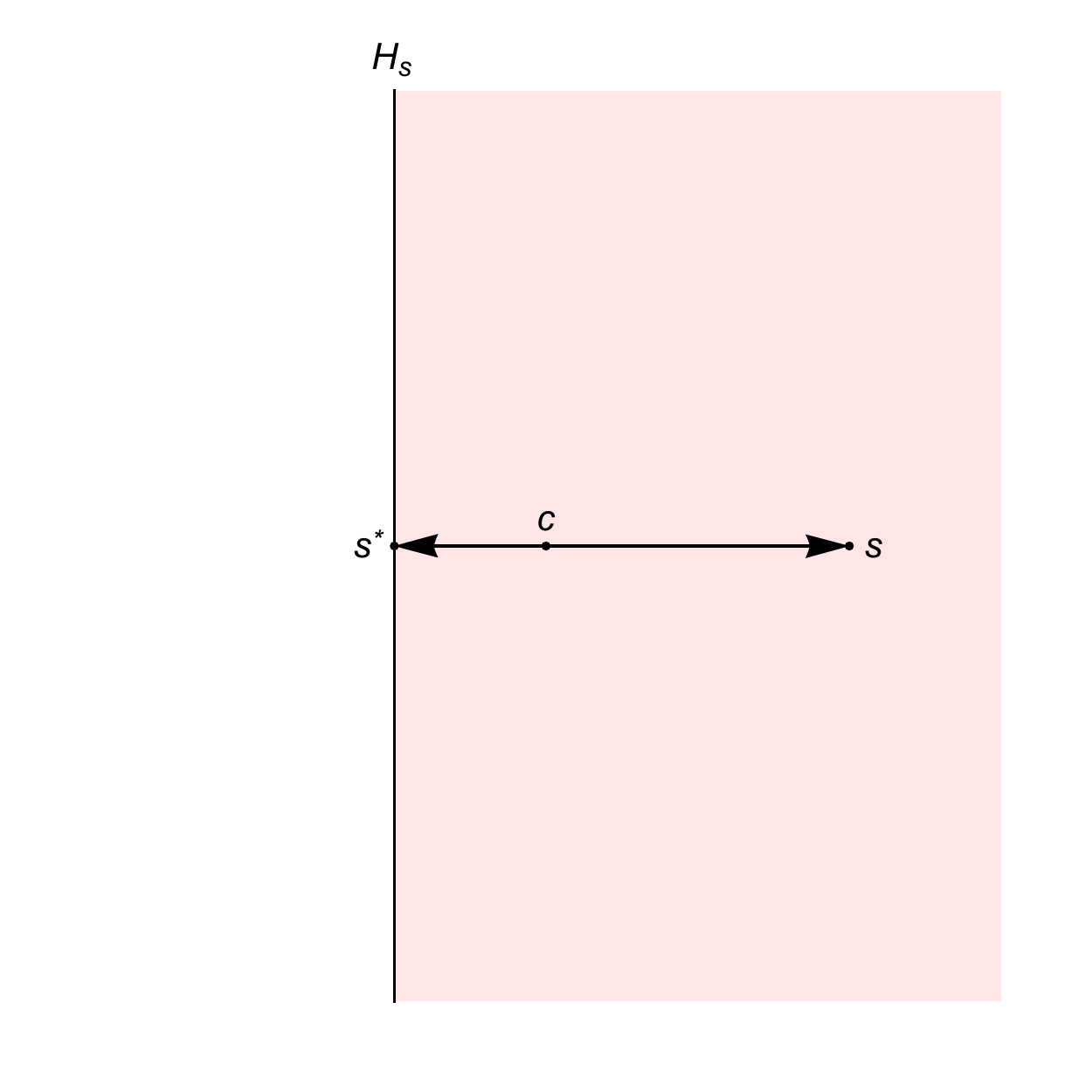}
\caption{\label{fig:PolarPoint} Diagram to illustrate the relationships between the polar $\dl{\{s\}}$, the polar point $\dl{s}$ and the polar hyperplane $H_s$.  $\dl{\{s\}}$ is the pink-shaded region to the right of $H_s$.}
\end{figure}
It follows from these definitions that if $s$ is any point on $\so$ (respectively $\si$), then $\dl{s}$ is on $\si$ (respectively $\so$).

\begin{theorem}
\label{thm:dualOfPtOnSo}
Let $\pr$ be a closed germ, and let $s$ be any point on $\so$.  Then $s\in \pr$ if and only if $\dl{s}$  is on the boundary of $\dl{\pr}$.
\end{theorem}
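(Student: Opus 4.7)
The plan is to use the fact that $s\in \so$ to give an explicit formula for $\la \dl{s}, v\ra$ in terms of $\la s,v\ra$, and then translate the boundary condition on $\dl{s}$ into a statement about inner products of $s$ with points of $\pr$. Since $\|s-c\|=\ro$, the definition of $\dl{s}$ reduces to $\dl{s}=c-(\ri/\ro)(s-c)$, and using $\ri/\ro=1/(d-1)$ together with $\la c,v\ra=1/d^2$ for every $v\in H$, a direct expansion yields
\be
\la \dl{s}, v\ra - \frac{1}{d(d+1)} = \frac{1}{d-1}\!\left( \frac{2}{d(d+1)} - \la s,v\ra \right).
\ee
Thus $\la \dl{s},v\ra \ge 1/(d(d+1))$ iff $\la s,v\ra \le 2/(d(d+1))$, with equality on one side corresponding to equality on the other. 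This is the pivotal identity both directions will be read off of.

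For the forward direction, suppose $s\in \pr$. The germ condition then gives $\la s,v\ra\le 2/(d(d+1))$ for every $v\in \pr$, hence $\dl{s}\in \dl{\pr}$. Taking $v=s$, the fact that $s\in \so$ forces $\la s,s\ra=2/(d(d+1))$, so $\la \dl{s},s\ra = 1/(d(d+1))$. Hence $\dl{s}$ lies on the support hyperplane $H_s$ bounding the half-space $\{u: \la u,s\ra\ge 1/(d(d+1))\}$, which contains $\dl{\pr}$; combined with $\dl{s}\in \dl{\pr}$ this places $\dl{s}\in \partial\dl{\pr}$, since one can perturb $\dl{s}$ slightly within $H$ in a direction decreasing $\la \cdot,s\ra$ to leave $\dl{\pr}$.

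For the converse, I would use that $\pr$ is closed and contained in the bounded set $\ps\cap \bo$, hence compact, so the continuous function $v\mapsto \la \dl{s},v\ra$ attains its minimum on $\pr$. If this minimum strictly exceeded $1/(d(d+1))$, a small ball around $\dl{s}$ within $H$ would still lie in $\dl{\pr}$, contradicting $\dl{s}\in \partial \dl{\pr}$. Therefore some $v\in \pr$ satisfies $\la \dl{s}, v\ra = 1/(d(d+1))$, equivalently $\la s, v\ra = 2/(d(d+1))$. The hard part, and the place the hypothesis $s\in \so$ is really used, is to extract $v=s$ from this equality: in barycentric coordinates it reads $\la s-c,v-c\ra = \ro^2$, while $\|s-c\|=\ro$ and $\|v-c\|\le \ro$ (since $\pr\subseteq \bo$), so the Cauchy--Schwarz inequality must be saturated, forcing $v-c = s-c$ and hence $s=v\in \pr$. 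A small subtlety to be flagged in the compactness step is that only constraints coming from some $v\ne c$ can bind, which is automatic because $\la u,c\ra=1/d^2 > 1/(d(d+1))$ for every $u\in H$.
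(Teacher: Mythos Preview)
Your proof is correct and in fact cleaner than the one in the paper. The key move that distinguishes your argument is the explicit identity
\[
\la \dl{s}, v\ra - \frac{1}{d(d+1)} = \frac{1}{d-1}\!\left( \frac{2}{d(d+1)} - \la s,v\ra \right),
\]
which converts the lower-bound condition defining $\dl{\pr}$ into the upper-bound condition from the germ inequalities. Once this is in hand, both directions become short: the forward direction is a perturbation off the supporting hyperplane $H_s$, and the converse is compactness plus equality in Cauchy--Schwarz. The paper instead argues more topologically: in the forward direction it uses $\dl{s}\in\si\subseteq\stm\subseteq\dl{\pr}$ and an explicit sequence $t_n=\frac{n+1}{n}\dl{s}-\frac{1}{n}c$ escaping $\dl{\pr}$; in the converse it picks a sequence $t_n\notin\dl{\pr}$ converging to $\dl{s}$, chooses witnesses $p_n\in\pr$ with $\la t_n,p_n\ra<1/(d(d+1))$, passes to a convergent subsequence $p_{n_j}\to p$, and then bounds $\|p-s\|^2\le 0$ by expanding the norm. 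Your route avoids the subsequence extraction entirely and makes the role of $s\in\so$ completely transparent via the saturation of Cauchy--Schwarz. One small point worth making explicit in your write-up: in the converse you are implicitly using that $\dl{\pr}$ is closed (Theorem~\ref{tm:polarity}), so that $\dl{s}\in\partial\dl{\pr}$ forces $\dl{s}\in\dl{\pr}$ and hence the minimum of $\la\dl{s},\cdot\ra$ over $\pr$ is at least $1/(d(d+1))$ before you argue it cannot be strictly greater.
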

\begin{remark}
Specializing to the case when $\pr$ is a \qplex, the theorem says that the points where the boundary of $\pr$ touches the out-sphere are antipodal to the points where it touches the in-sphere.  This is a subtle property of quantum state space~\cite{Kimura2005, Appleby2007}.
\end{remark}
\begin{proof}
Suppose $s\in \pr$.  Then it follows that $\dl{s} \in \si$.  The fact that $\pr$ is a germ means $\pr \subseteq\env$, implying $\stm \subseteq \dl{\pr}$.  So $\dl{s} \in \dl{\pr}$.  Moreover, if we define
\ea{
t_n = \frac{n+1}{n} \dl{s} -\frac{1}{n} c,
}
then
\ea{
\la t_n , s\ra
&= \frac{1}{d(d+1)} - \frac{1}{nd^2(d+1)} < \frac{1}{d(d+1)}
}
for all $n$.  So $t_n$ is a sequence outside $\dl{\pr}$ converging to $\dl{s}$.  We conclude that $\dl{s}$ is on the boundary of $\dl{\pr}$.

Conversely, suppose $\dl{s}$ is on the boundary of $\dl{\pr}$.  Then we can choose a sequence $t_n \notin \dl{\pr}$ such that $t_n \to \dl{s}$.  For each $n$ there must exist $p_n\in \pr$ such that
\ea{
\la t_n , p_n \ra < \frac{1}{d(d+1)}.
}
Since $\pr$ is closed and bounded it is compact.  A theorem of point
set topology has it that in a compact set, every sequence contains a
convergent subsequence.  Therefore, we can choose a convergent
subsequence $p_{n_j} \to p\in \pr$.  Also, the fact that $t_{n_j} \to
\dl{s}$ means $\dl{t}_{n_j} \to s$.  So
\be
\| p - s\|^2 = \lim_j \left( \| p_{n_j} - \dl{t}_{n_j}\|^2 \right).
\ee
We can expand the quantity inside the limit as
\be
\| p_{n_j} - \dl{t}_{n_j}\|^2
 =
\|p_{n_j}-c\|^2 + \|\dl{t}_{n_j}-c\|^2 - 2 \inprod{\dl{t}_{n_j}-c}{p_{n_j}-c}.
\ee
In turn, we have that
\be
\lim_j \left( \| p_{n_j} - \dl{t}_{n_j}\|^2 \right)
 \leq
\|p-c\|^2 - \ro^2.
\ee
Because $p$ is contained in the out-ball, its distance from $c$ has to be less than $\ro$, meaning that
\be
\| p - s\|^2 \leq \|p-c\|^2 - \ro^2 \leq 0.
\ee
So $p$ coincides with $s$, which consequently belongs to $\pr$.
\end{proof}
We next prove two results which show that we can restrict our attention to the out-sphere when trying to establish the existence of non-isomorphic \qplexes.  The first of these, Lemma~\ref{lm:extendSoconst}, is a technical result which will also be used in Section~\ref{sec:qgroups}.
\begin{lemma}
\label{lm:extendSoconst}
Let $\pr$ be a closed germ containing $\stm$, and let
\ea{
C = \pr \cup (\dl{\pr}\cap \midball)
}
where $\midball$ is the mid-ball.  Then $C$ is a closed germ such that  $C'\cap \so = \pr \cap \so$ for every germ $C'$ containing $C$.
\end{lemma}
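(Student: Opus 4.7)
My plan is to handle the lemma's two claims in succession: first verify that $C$ is closed and is a germ, then show that enlarging to any bigger germ $C'$ cannot introduce new points on $\so$ beyond those already in $\pr$.

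Closedness is immediate from the closedness of $\pr$ (by hypothesis), of $\dl{\pr}$ (by Theorem~\ref{tm:polarity}), and of $\midball$. For the germ condition I would check the three cases for pairs $(p,s)\in C\times C$. When both lie in $\pr$ it is already given. When $p\in \pr$ and $s\in \dl{\pr}\cap\midball$, the lower bound $\inprod{p}{s}\ge 1/(d(d+1))$ follows from the definition of $\dl{\pr}$, and the upper bound reduces, via the decomposition $\inprod{p}{s}=\inprod{p-c}{s-c}+1/d^2$ and Cauchy--Schwarz, to $\ro\rmid\le \ro^2$, which holds because $\rmid^2=\ri\ro$ and $\ri\le\ro$. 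When both points lie in $\dl{\pr}\cap\midball$, the same decomposition combined with $|\inprod{p-c}{s-c}|\le \rmid^2=\ri\ro$ yields both fundamental inequalities after adding $1/d^2$ on each side; the lower one is saturated, which is precisely the defining property of the mid-ball.

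For the surface equality, the inclusion $\pr\cap\so\subseteq C'\cap\so$ is trivial. For the reverse, I would argue by contradiction: suppose some $p\in\so$ belongs to $C'\setminus\pr$. Since $\pr$ is a closed germ, Theorem~\ref{thm:dualOfPtOnSo} tells us that the polar point $\dl{p}$ is not on the boundary of $\dl{\pr}$. Because $\pr\supseteq \stm\supseteq\bi$ and $\pr\subseteq\dl{\pr}$ (the latter since $\pr$ is a germ), we have $\dl{p}\in\si\subseteq\bi\subseteq\dl{\pr}$, so $\dl{p}$ actually sits in the interior of $\dl{\pr}$ relative to $H$. I would then push $\dl{p}$ slightly in the direction $-(p-c)/\|p-c\|$: set $q_\epsilon=\dl{p}-\epsilon(p-c)/\|p-c\|$ for small $\epsilon>0$. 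The vector $p-c$ is orthogonal to $c$, so $q_\epsilon$ stays in $H$; because $\dl{p}$ is interior to $\dl{\pr}$, $q_\epsilon\in\dl{\pr}$ for all sufficiently small $\epsilon$; and $\|q_\epsilon - c\|=\ri+\epsilon$, so choosing $\epsilon\le \rmid-\ri$ keeps $q_\epsilon\in\midball$. Hence $q_\epsilon\in\dl{\pr}\cap\midball\subseteq C\subseteq C'$. A short computation using $\inprod{p}{\dl{p}}=1/(d(d+1))$ and $\inprod{p}{p-c}=\|p-c\|^2=\ro^2$ yields $\inprod{p}{q_\epsilon}=1/(d(d+1))-\epsilon\ro<1/(d(d+1))$, contradicting the germ property of $C'$.

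The main obstacle is that the displacement step requires strict slack $\rmid>\ri$, which fails exactly when $d=2$. In that case, however, $\ri=\rmid=\ro$ forces $\bi=\bo=\midball$, and by Corollary~\ref{cor:mainTrEnv} one finds $\stm=\env=\bo$, so the only germ containing $\stm$ is $\bo$ itself and the lemma is trivial. I would dispose of this degenerate case by a preliminary remark before running the main argument.
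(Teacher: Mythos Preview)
Your proof is correct. The core idea in the second half---push the polar point $\dl{p}$ a little further along the ray from $p$ through $\dl{p}$ to produce a point of $C$ whose inner product with $p$ drops strictly below $1/(d(d+1))$---is exactly what the paper does, though the paper packages it as $t=\lambda\dl{s}+(1-\lambda)s$ with $\lambda>1$ and argues that $t$ lies in $C$ because $\dl{s}$ is in the \emph{interior of $C$} (after first checking that $C$ and $\dl{\pr}$ share the same boundary on $\si$). You instead land the pushed point directly in $\dl{\pr}\cap\midball\subseteq C$, which is a clean shortcut that avoids the boundary-matching step, at the price of needing the slack $\rmid>\ri$ and hence a separate remark for $d=2$; the paper's version is vacuous there because no $s\in\so\setminus\pr$ exists. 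Where you genuinely diverge from the paper is in verifying that $C$ is a germ: the paper computes $\dl{C}=\dl{\pr}\cap\cc(\pr\cup\midball)$ via the polarity calculus of Theorem~\ref{tm:polarity} (together with $\dl{\midball}=\midball$) and reads off $C\subseteq\dl{C}$, whereas you do a hands-on three-case check of the fundamental inequalities with Cauchy--Schwarz. Your route is more elementary and self-contained; the paper's is slicker and makes the structural reason ($C\subseteq\dl{C}$) explicit. One small omission: to be a germ $C$ must also lie in $\ps$, which you don't state; it follows at once from $\bs\subseteq\stm\subseteq\pr$, giving $\dl{\pr}\subseteq\dl{\bs}=\ps$.
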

\begin{proof}
It follows from Theorem~\ref{tm:polarity} and Lemma~\ref{lm:simpballpolars} that
\ea{
\dl{C} &=  \dl{\pr} \cap \cc(\pr \cup \midball),
}
from which one  sees that $C\subseteq \dl{C}$. Moreover, the fact that $\stm \subseteq \pr$ means $\dl{\pr} \subseteq \ps \cap \bo$, implying $C\subseteq \ps\cap\bo$.  So $C$ is a closed germ containing $\pr$.  Let $C'$ be any germ containing $C$.   It is immediate that $\pr\cap \so = C \cap \so \subseteq C'\cap\so$.  Suppose, on the other hand, that $s$ is a point on $\so$ not belonging to $\pr$.  Let $\dl{\pr}_{\rm{b}}$,  $C_{\rm{b}}$ be the boundaries of $\dl{\pr}$, $C$ respectively.  The fact that $C\cap \midball = \dl{\pr}\cap \midball$ is easily seen to imply  $\dl{\pr}_{\rm{b}}\cap \si = C_{\rm{b}}\cap\si$.  So  it follows from  Theorem~\ref{thm:dualOfPtOnSo} that $\dl{s}\notin C_{\rm{b}}$. Since $\si \subseteq C$ this means  $\dl{s}$ must lie in the interior of $C$.  So there exists $\lambda > 1$ such that
\be
t = \lambda \dl{s} + (1-\lambda) s
\ee
is in $C$.  Since
\ea{
\la t, s \ra  =\frac{2-\lambda}{d(d+1)} < \frac{1}{d(d+1)},
}
it follows that  $s \notin \dl{C}$.  Consequently $s\notin C'$.
\end{proof}
\begin{theorem}
\label{thm:bdryTheoremA}
Let $\pr$ be a closed germ containing the vertices of the basis simplex.  Then there exists a \qplex\ $\qp$ such that $\qp\cap \so=\pr\cap \so$.
\end{theorem}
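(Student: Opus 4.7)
The plan is to extend $\pr$ in two stages into a \qplex\ that has the same intersection with $\so$.  The hypothesis that $\pr$ contains the vertices of $\bs$ is used to enlarge $\pr$ into a set satisfying the hypothesis of Lemma~\ref{lm:extendSoconst}, after which that lemma combined with Zorn's lemma finishes the job.

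\emph{Stage one.}  Set $\tilde{\pr} = \cc(\pr \cup \bi)$ and verify three things: $\tilde{\pr}$ is a closed germ, it contains $\stm$, and $\tilde{\pr} \cap \so = \pr \cap \so$.  Closedness is automatic since $\pr$ and $\bi$ are compact and the convex hull of a compact set in finite dimensions is compact.  The germ property reduces, by bilinearity of the inner product together with convexity of the admissible window $[L,U]$, to checking the fundamental inequalities on pairs drawn from $\pr \cup \bi$.  Pairs in $\pr^{2}$ are immediate; mixed and $\bi^{2}$ pairs follow from the identity $\inprod{x}{y} - 1/d^{2} = \inprod{x-c}{y-c}$ together with Cauchy--Schwarz, which bounds the right-hand side in absolute value by $\ro \ri = \rmid^{2}$ (or by $\ri^{2} \le \rmid^{2}$ in the $\bi^{2}$ case), keeping $\inprod{x}{y}$ in $[1/d^{2} - \rmid^{2},\,1/d^{2} + \rmid^{2}] \subseteq [L,U]$.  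The inclusion $\stm \subseteq \tilde{\pr}$ holds because $\tilde{\pr}$ is convex and contains both $\bi$ and every basis vertex $e_{i}$, hence contains $\cc(\bs \cup \bi) = \stm$.  For the intersection with $\so$, I would invoke strict convexity of $\bo$: any $q \in \tilde{\pr} \cap \so$ is an extreme point of $\bo$, so every convex decomposition of $q$ in terms of points of $\pr \cup \bi \subseteq \bo$ must be trivial; since no point of $\bi$ lies on $\so$, the supporting point must belong to $\pr$.

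\emph{Stage two.}  Apply Lemma~\ref{lm:extendSoconst} to the closed germ $\tilde{\pr}$ (which contains $\stm$ by stage one) to obtain the closed germ $C = \tilde{\pr} \cup \bigl(\dl{\tilde{\pr}} \cap \midball\bigr)$ with the property that every germ $C' \supseteq C$ satisfies $C' \cap \so = \tilde{\pr} \cap \so = \pr \cap \so$.  By Zorn's lemma $C$ extends to a maximal germ $\qp$, which is a \qplex\ by Theorem~\ref{tm:qplexPolarity}, and by the stage-one identification $\qp \cap \so = \pr \cap \so$, which is exactly the desired conclusion.

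The main obstacle I anticipate is the bookkeeping in stage one: verifying simultaneously that the enlargement $\pr \rightsquigarrow \tilde{\pr}$ respects the fundamental inequalities and creates no new points on the out-sphere.  Once that is done the rest is a direct appeal to previously established material, and the genuine geometric content of the theorem already lives inside Lemma~\ref{lm:extendSoconst} and the polarity theory of Section~\ref{sec:polarity}.
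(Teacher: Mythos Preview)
Your proposal is correct and follows essentially the same route as the paper: enlarge $\pr$ to a closed germ $\tilde{\pr}$ containing $\stm$ without adding points on $\so$, then invoke Lemma~\ref{lm:extendSoconst} and extend to a \qplex. The paper writes $\tilde{\pr} = \cc(\pr \cup \bs \cup \bi)$ whereas you write $\cc(\pr \cup \bi)$, but since $\pr$ already contains the basis vertices these coincide; your verification of the germ property and of $\tilde{\pr}\cap\so = \pr\cap\so$ via strict convexity of $\bo$ simply spells out what the paper asserts without proof.
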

\begin{proof}
Let
\ea{
\tilde{\pr} = \cc(\pr\cup \bs \cup \bi).
}
Then $\tilde{\pr}$ is a closed germ containing $\stm$.  Moreover $\tilde{\pr}\cap \so = \pr \cap \so$.  Let
\ea{
C &= \tilde{\pr} \cup \left(\dl{\tilde{\pr}}\cap \midball\right).
}
Then it follows from  Lemma~\ref{lm:extendSoconst} that $C$ is a germ and that $\qp\cap \so = \pr\cap \so$ for any \qplex\ $\qp$ containing $C$.
\end{proof}
Before proving Theorem~\ref{thm:infNonIsomorphic} we need to give a sharp definition of what it means for two \qplexes\ to be isomorphic.
\begin{definition}
\label{def:qplexIsomorphism}
We say that two \qplexes\ $\qp$ and $\qp'$ are isomorphic if and only if there exists a linear bijection $f\colon \fd{R}^{d^2} \to \fd{R}^{d^2}$ such that
\begin{enumerate}
\item $\qp'=f(\qp)$.
\item For all $q_1$, $q_2 \in \qp$
\ea{
\la f(q_1),f(q_2) \ra = \la q_1, q_2\ra.
}
\end{enumerate}
\end{definition}

We are now ready to prove the final result of this section.
\begin{theorem}
\label{thm:infNonIsomorphic}
There exist uncountably many \qplexes\ which are not isomorphic to each other.
\end{theorem}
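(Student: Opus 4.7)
The plan is to produce a continuously parametrized family $\{\qp_\lambda\}_{\lambda\in I}$ of qplexes and to separate them using an isomorphism invariant built from the inner products of their pure points. First I record the invariant: any isomorphism $f$ in the sense of Definition~\ref{def:qplexIsomorphism} preserves inner products and hence norms, so it carries $\qp\cap\so$ bijectively onto $\qp'\cap\so$, and consequently the multiset
\be
\mathcal{I}(\qp) = \{\la x,y\ra \colon x,y\in\qp\cap\so\}
\ee
is an invariant of the isomorphism type. Two qplexes with distinct multisets cannot be isomorphic.

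Next I build the family. Fix $d\ge 2$ and pick a point $p_0 \in \so\cap\ps$ whose $d^2$ coordinates are all distinct; such points exist because the ``two coordinates coincide'' relations cut out a finite union of proper hyperplanes in the ambient hyperplane $H$, while $\so\cap\ps$ has dimension $d^2-2\ge 1$ and easily admits asymmetric examples. Let $\lambda\mapsto p_\lambda$ be a short smooth curve in $\so\cap\ps$ through $p_0$ along which the first coordinate $p_\lambda(1)$ is strictly monotonic; for small $\lambda$ the coordinates remain distinct, so the unordered coordinate multiset of $p_\lambda$ depends injectively on $\lambda$. Form the germ $\pr_\lambda = \bs \cup \{p_\lambda\}$; for each $j$,
\be
\la e_j, p_\lambda\ra = \frac{1}{d(d+1)} + \frac{p_\lambda(j)}{d+1} \in \left[\frac{1}{d(d+1)},\frac{2}{d(d+1)}\right]
\ee
since $0\le p_\lambda(j)\le 1/d$, and the remaining pairwise inner products among vertices of $\pr_\lambda$ lie in range by Eq.~(\ref{eq:basis-purity-special}) and $\|p_\lambda\|^2=2/(d(d+1))$. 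So $\pr_\lambda$ is a closed germ, and Theorem~\ref{thm:bdryTheoremA} furnishes a qplex $\qp_\lambda$ with $\qp_\lambda\cap\so = \{e_1,\ldots,e_{d^2}, p_\lambda\}$.

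To finish I observe that $\mathcal{I}(\qp_\lambda)$ contains the submultiset $\{\la p_\lambda, e_j\ra\}_{j=1}^{d^2}$, which is the image of the coordinate multiset $\{p_\lambda(j)\}_j$ under the affine bijection $x\mapsto 1/(d(d+1)) + x/(d+1)$. Since the coordinate multiset depends injectively on $\lambda$, so does $\mathcal{I}(\qp_\lambda)$, and therefore the qplexes $\qp_\lambda$ are pairwise non-isomorphic. The main obstacle is the existence claim in the second paragraph: verifying that $\so\cap\ps$ contains a point with all $d^2$ coordinates distinct, and that such a point admits a tangent direction with nonzero first component in the $(d^2-2)$-dimensional tangent space to $\so\cap\ps$. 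Both facts come down to codimension counts in the pair of linear constraints $\sum v_i = 0$ and $\sum p_0(i) v_i = 0$, but the bookkeeping is most delicate in the smallest case $d=2$, where $\so\cap\ps$ is the Bloch 2-sphere and the existence of asymmetric pure points must be exhibited explicitly.
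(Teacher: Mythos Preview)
Your proposal is correct and follows essentially the same route as the paper: both adjoin a moving pure point to the basis distributions, invoke Theorem~\ref{thm:bdryTheoremA} to obtain qplexes whose intersection with $\so$ is exactly that finite set, and separate the resulting qplexes via the inner products $\la p_\lambda, e_j\ra$. The only notable difference is that the paper writes down the explicit curve $p_\theta = c + \cos\theta\,(e_1-c) + \sin\theta\,(e_2-c)$, which dispenses with the genericity and tangent-direction issues you flag as ``the main obstacle'' in your final paragraph.
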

\begin{proof}
We have
\ea{
\la e_i , e_j\ra = \frac{d\delta_{ij} + d + 2}{d(d+1)^2}
}
for all $i$, $j$ (c.f. Eq.~(\ref{eq:basis-purity-special})).  So if we define
\ea{
p_{\theta}& = c + \cos \theta (e_1-c) + \sin \theta (e_2-c)
\\
\pr_{\theta} &= \{p_{\theta} , e_1, \dots, e_{d^2}\}
}
then, for sufficiently small $\epsilon$,  the set $\pr_{\theta}$ is a germ for all $\theta \in (0,\epsilon)$.  It follows from Theorem~\ref{thm:bdryTheoremA} that we can choose \qplexes\ $\qp_{\theta}$ such that $\qp_{\theta} \cap \so = \pr_{\theta} \cap \so = \pr_{\theta}$.  By construction, the scalar products $\inprod{p_{\theta}}{e_j}$ are different for different choices of~$\theta$, and so \qplexes\ $\qp_{\theta}$ corresponding to different values of $\theta$ are non-isomorphic.  Moreover, the fact that the intersection with $\so$ is finite means that $\qp_{\theta}$ is non-isomorphic to quantum state space for all $\theta$.
\end{proof}
So far we have been focussing on  \qplexes\ in general.  However, it seems to us that the method of analysis employed is a potentially insightful way of thinking about the geometry of quantum state space.

\section{Type-preserving measurements}
\label{sec:intg}
We now come to the central result of this paper.   We will show that the symmetry group of a \qplex\ can be identified with a set of measurements, which in turn can be identified with a set of regular simplices within the qplex whose vertices all lie on the out-sphere.

Let $\qp$ be a \qplex\ and $r$ a measurement with $n$ outcomes.  For each $q\in \qp$, let $q_r$ be the distribution given by the urgleichung, Eq.~(\ref{eq:SICMeasProbs}).  Then the map $q \to q_r$ takes $\qp$ to
\be
\qp_r = \{q_r \colon q\in \qp\} \subseteq \Delta_n
\ee
(where  $\Delta_n$ is the $n-1$ dimensional probability simplex). We refer to $\qp_r$ as the measurement set, and the map $q\to q_r$ as the measurement map.

We are interested in  measurements having $d^2$ outcomes for which the measurement set is another \qplex.  We will refer to such measurements as type-preserving.  We are particularly interested in the case when the measurement set is $\qp$ itself, in which case we will say that the measurement is $\qp$-preserving.

Let $r$ be an arbitrary measurement.  Then it is easily seen that the urgleichung can  be written in the alternative  form
\be
q_r(i) = \sum_j R_{ij} q(j),
\ee
where
\be
R_{ij} = (d+1) r(i|j) -\frac{1}{d} \sum_k r(i|k).
\label{eq:RmtDef}
\ee
We refer to $R$ as the stretched measurement matrix. Note that Eq.~(\ref{eq:RmtDef}) can be inverted:
\be
r(i|j) = \frac{1}{d+1} \left( R_{ij} + \frac{1}{d} \sum_k R_{ik}\right).
\ee
So the stretched measurement matrix uniquely specifies the measurement.

Now specialize to the case of a type-preserving measurement.  In that case it turns out that   $R$ must be an orthogonal matrix.  To see this we begin by observing that, since the basis simplex belongs to both $\qp$ and $\qp_r$, there must exist  $s_i \in \qp$, $s'_i\in \qp_r$ such that
\ea{
Rs_i &= e_i, &  Re_i &= s'_i.
\label{eq:sspdef}
}
We then have
\begin{lemma}
\label{lm:typPMeasDetREq1}
Let $R$, $s_i$, $s'_i$ be as above.  Then
\begin{enumerate}
\item $\det R  = \pm 1$.
\item $s_i$, $s'_i\in \so$ for all $i$.
\item $\cc(\{s_i\})$ and $\cc(\{s'_i\})$ are regular simplices.
\end{enumerate}
\end{lemma}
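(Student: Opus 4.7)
The plan is to reduce all three conclusions to a single stronger statement: \emph{$R$ fixes the barycenter $c$, and its restriction to $H-c$ is orthogonal.} Granting this, conclusion (2) follows because an isometry of $H$ fixing $c$ maps the out-sphere $\so$ to itself, so $e_i \in \so$ implies that both $s'_i = R e_i$ and $s_i = R^{-1} e_i$ lie on $\so$. Conclusion (3) follows because orthogonality preserves inner products, so $\la s'_i-c, s'_j-c\ra = \la e_i-c, e_j-c\ra$, and analogously for the $s_i$; by Eq.~(\ref{eq:basis-purity-special}) these inner products depend only on $\delta_{ij}$, so $\cc(\{s_i\})$ and $\cc(\{s'_i\})$ are regular simplices congruent to $\bs$. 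For (1), a direct computation from Eq.~(\ref{eq:RmtDef}) using $\sum_i r(i|j) = 1$ shows that the columns of $R$ all sum to $1$, so $R$ preserves the subspace $(\vec{1})^\perp = H-c$; decomposing $\fd{R}^{d^2} = \fd{R}\vec{1}\oplus(\vec{1})^\perp$ then puts $R$ in block triangular form with a $1$ in the scalar block and $R|_{H-c}$ in the other block, so $\det R = \det(R|_{H-c}) = \pm 1$.

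To establish the key claim, my plan is a John ellipsoid argument. Type-preservation makes $R$ an affine bijection $\qp \to \qp_r$ (its image is full-dimensional in $H$ because every qplex contains $\bi$), so both $R(\bi)$ and $R^{-1}(\bi)$ are ellipsoids contained in $\ps$. The symmetric group $S_{d^2}$ acts on $\ps$ by permuting its vertices, and its action on $H-c$ is the standard (irreducible) representation; hence by Schur's lemma every $S_{d^2}$-invariant ellipsoid is a round ball centered at $c$, and the largest such ball inscribed in $\ps$ is $\bi$. Uniqueness in John's theorem identifies $\bi$ as the unique maximum-volume ellipsoid inscribed in $\ps$. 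Writing $R_\perp = R|_{H-c}$ and comparing volumes yields $|\det R_\perp| \le 1$ from $R(\bi) \subseteq \ps$ and $|\det R_\perp| \ge 1$ from $R^{-1}(\bi) \subseteq \ps$, so $|\det R_\perp| = 1$. The equality case of John's theorem then forces $R(\bi) = \bi$; since $\bi$ is a round ball centered at $c$, this is possible only when $Rc = c$ and $R_\perp$ is orthogonal, completing the reduction.

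The main obstacle is the appeal to John's theorem with its uniqueness and equality case, which is a result from convex geometry not developed in the excerpt. It can either be imported as a standard result or reproved ad hoc using the $S_{d^2}$-symmetry argument just sketched together with a simple volume comparison. An alternative route would work directly with the self-polar structure of the two qplexes and the fundamental inequalities, but the ellipsoid argument seems cleaner and more direct.
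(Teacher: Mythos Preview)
Your argument is correct, but it follows a different extremal principle than the paper's. The paper works with the out-ball and simplices: since both $\qp$ and $\qp_r$ are qplexes, the points $s_i$ and $s'_i$ all lie in $\bo$, so the $(d^2-1)$-simplices $\cc(\{s_i\})$ and $\cc(\{s'_i\})$ are inscribed in $\bo$; the cited fact that the maximal-volume simplex in a ball is a regular one with vertices on the bounding sphere, together with the relations $RS=E$ and $RE=S'$ (hence $|\det R|\,|\det S|=|\det E|$ and $|\det R|\,|\det E|=|\det S'|$), squeezes $|\det R|$ to $1$, and the equality case yields (2) and (3) directly. Your route is dual in spirit: instead of ``largest simplex in the out-ball'' you use ``largest ellipsoid in the probability simplex'' (John's theorem), noting that $\bi$ is the John ellipsoid of $\ps$ by the $S_{d^2}$-symmetry, and comparing volumes of $R(\bi)$ and $R^{-1}(\bi)$ to force $R(\bi)=\bi$. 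This has the advantage of delivering $Rc=c$ and orthogonality of $R|_{(\vec{1})^\perp}$ in one stroke, which is really the content of Theorem~\ref{thm:TPreserveMapTheorem1}; the paper instead deduces orthogonality afterwards from the fact that $R$ carries one regular simplex to another. The cost is that you invoke John's theorem (with its uniqueness and equality case), a heavier external input than the elementary simplex-in-ball lemma the paper cites. Your block-triangular computation of $\det R=\det(R|_{(\vec{1})^\perp})$ from $\sum_i R_{ij}=1$ is fine, and the injectivity of $R|_H$ (needed to speak of $R^{-1}(\bi)$) indeed follows from $\qp_r$ being full-dimensional in $H$.
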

\begin{proof}
The proof is based on the fact~\cite{VolSimp} that the simplices of maximal volume within a ball are precisely the regular simplices with vertices on the sphere that bounds the ball.  The desired result follows from considering the simplex formed by the $s_i$ and the origin (and the corresponding simplex formed by the $s'_i$ and the origin).
\end{proof}

To complete the proof that $R$ is an orthogonal matrix, we observe that maps from regular simplices to regular simplices are orthogonal.  From this, we can derive the following theorem.
\begin{theorem}
\label{thm:TPreserveMapTheorem1}
Let $\qp$ be a \qplex, and let $R$ be the stretched measurement matrix of a type-preserving measurement.  Then $R$ is an orthogonal matrix such that $Rc=c$.   Moreover there exists a regular simplex with vertices $s_i \in \qp\cap\so$ such that
\ea{
R_{ij} &= (d+1)s_i(j) -\frac{1}{d},
\label{eq:MeasMatTermsSimp}
\\
Rs_i &= e_i,
\\
(Re_i)(j) &= s_j(i).
}
\end{theorem}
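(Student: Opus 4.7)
The plan is to combine Lemma~\ref{lm:typPMeasDetREq1} with the geometric rigidity of regular $(d^2-1)$-simplices inscribed in the out-sphere $\so$. The lemma already places each $s_i$ on $\so$, tells us $\{s_i\}$ is regular, and gives $|\det R|=1$. A short direct computation from Eq.~(\ref{eq:basis-purity-special}) shows that $\{e_i\}$ is \emph{also} a regular simplex inscribed in $\so$, with $\|e_i-c\|^2=\ro^2$ and $\inprod{e_i-c}{e_j-c}=-\ro^2/(d^2-1)$ for $i\ne j$; the same Gram relations are forced on $\{s_i-c\}$ by the regular-simplex property. Thus $R$ matches two congruent labelled vertex configurations.

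The orthogonality of $R$ and the identity $Rc=c$ then come out together. Since the centroid of any regular $(d^2-1)$-simplex inscribed in $\so$ is $c$, linearity yields $Rc=\tfrac{1}{d^2}\sum_i Rs_i = \tfrac{1}{d^2}\sum_i e_i = c$. Choose the basis $\{c,\,s_1-c,\,\ldots,\,s_{d^2-1}-c\}$ of $\mathbb{R}^{d^2}$; its Gram matrix is entirely determined by $\|c\|^2=1/d^2$, by the orthogonality $\inprod{c}{s_i-c}=0$, and by the regular-simplex values just noted. The image basis $\{c,\,e_1-c,\,\ldots,\,e_{d^2-1}-c\}$ has the \emph{identical} Gram matrix, so $R$ preserves inner products on a spanning set and is therefore orthogonal on $\mathbb{R}^{d^2}$.

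The final step is to extract the explicit formula for $R_{ij}$. Writing $Rc=c$ componentwise against Eq.~(\ref{eq:RmtDef}) forces $\sum_j r(i|j)=1$ for every outcome $i$, which in turn forces the prefactor $\gamma_i$ from Assumption~\ref{assump:R-from-P} to equal $1/d^2$ uniformly. Consequently $r(i|j)=s_i(j)$ for some $s_i\in\cP=\qp$, and substituting back into Eq.~(\ref{eq:RmtDef}) gives $R_{ij}=(d+1)s_i(j)-1/d$. The remaining identities $Rs_i=e_i$ and $(Re_i)(j)=s_j(i)$ then drop out either by direct verification from this formula using Eq.~(\ref{eq:basis-purity-special}), or from $R^{-1}=R^{T}$ combined with the earlier computation showing $(Re_j)(i)=r(i|j)$.

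The chief obstacle I anticipate is reconciling the two a priori different usages of ``$s_i$'' that appear along the way: the vector $R^{-1}e_i$ produced by Lemma~\ref{lm:typPMeasDetREq1}, and the measurement vector supplied by Assumption~\ref{assump:R-from-P}. These must be identified so that the formula $R_{ij}=(d+1)s_i(j)-1/d$ really involves the same points that the lemma has located in $\qp\cap\so$. Once $\gamma_i=1/d^2$ is in hand the identification is a short calculation with $R^{T}$, but some care in the order of deductions is needed to avoid circularity, since the orthogonality of $R$ is itself being extracted from the regular-simplex structure produced by the lemma.
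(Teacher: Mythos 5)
Your proposal is correct and follows essentially the same route as the paper: invoke Lemma~\ref{lm:typPMeasDetREq1} to place the preimages $s_i=R^{-1}e_i$ on $\so$ as vertices of a regular simplex, deduce orthogonality from the fact that a linear map carrying one regular simplex inscribed in $\so$ onto another is orthogonal (your Gram-matrix computation is just the explicit form of the paper's one-line appeal to this fact), and then extract Eq.~(\ref{eq:MeasMatTermsSimp}) by comparing $R^{\rm T}e_i$ with Eq.~(\ref{eq:RmtDef}). The extra care you take with $Rc=c$ via centroids, with $\gamma_i=1/d^2$, and with identifying the two a priori different $s_i$'s fills in details the paper leaves implicit, but does not change the argument.
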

\begin{remark}
We will refer to  $\cc(\{s_i\})$ as the measurement simplex.
\end{remark}

For a given \qplex\ $\qp$ define
\begin{enumerate}
\item $\typ{\qp}$ to be the class of type-preserving measurements.
\item $\sym{\qp}$ to be the class of regular simplices with vertices in $\qp\cap \so$.
\item $\ort{\qp}$ to be the class of orthogonal matrices $R$ such that $R\qp$ is a \qplex.
\end{enumerate}
The previous theorem states that to each element of $\typ{\qp}$ there corresponds an element of $\sym{\qp}$ and an element of $\ort{\qp}$.  The next theorem we prove states that the  correspondences are in fact bijective, so that we can identify the three classes $\typ{\qp}$, $\sym{\qp}$ and $\ort{\qp}$.
\begin{theorem}
\label{thm:RegSimpIsTPreserve}
Let $\qp$ be a $\qplex$ and let $s_i\in \qp\cap \so$ be the vertices of a regular simplex $\Delta_s$.  Then  $\Delta_s$ is the measurement simplex of a type-preserving measurement.  Likewise, if $R$ is an orthogonal matrix such that $R\qp$ is also a qplex, then $R$ is the stretched measurement matrix for a type-preserving measurement.
\end{theorem}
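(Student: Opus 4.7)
Given vertices $s_i\in\qp\cap\so$ of a regular simplex $\Delta_s$, I would define the candidate conditional probability matrix $r(i|j) := s_i(j)$. Nonnegativity is immediate since each $s_i$ is a probability vector. For the column-sum condition, observe that a regular simplex with $d^2$ vertices in the $(d^2-1)$-dimensional hyperplane $H$ fills it affinely, so its centroid is the unique point equidistant from all vertices; since the $s_i$ all sit on $\so$, that point is $c$. Hence $\sum_i s_i = d^2 c$, i.e., $\sum_i r(i|j)=1$. Inserting this $r$ into (\ref{eq:RmtDef}) gives the stretched measurement matrix $R_{ij} = (d+1)s_i(j) - 1/d$. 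Using $\|s_i-c\|^2 = \ro^2$ and $\sum_i(s_i-c)=0$ (which forces $\langle s_i-c, s_j-c\rangle = -\ro^2/(d^2-1)$ for $i\neq j$), a direct computation verifies $Rs_i = e_i$ and $Rc = c$.

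The key structural point is that $R$ is orthogonal. Since $R(s_i-c)=e_i-c$ on a spanning set $\{s_i-c\}$ of $H-c$, and since the frames $\{s_i-c\}$ and $\{e_i-c\}$ share the same Gram matrix (both are regular-simplex frames of common circumradius $\ro$ and centroid $0$ in dimension $d^2-1$), $R$ preserves inner products on $H-c$. Combined with $Rc=c$ on the orthogonal line $\mathbb{R}c$, this makes $R$ orthogonal on all of $\mathbb{R}^{d^2}$. To see the measurement is type-preserving, I then check that $R\qp$ is a qplex: the urgleichung output is $q(i) = (d+1)\langle p,s_i\rangle - 1/d$, which is nonnegative by the lower fundamental inequality applied to $p,s_i\in\qp$ (available because $\qp=\dl{\qp}$), and sums to $1$. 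So $R\qp\subseteq\ps$. Orthogonality together with $Rc=c$ gives $R\qp\subseteq\bo$ and $R\dl{\qp} = \dl{(R\qp)}$, whence $R\qp$ is self-polar, i.e., a qplex.

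\textbf{Claim 2 (orthogonal map gives a measurement).} Suppose now $R$ is orthogonal with $R\qp$ a qplex. The central preliminary is to prove $Rc=c$. Every qplex contains the basis simplex $\bs$ (from Theorem~\ref{tm:qplexPolarity}), so $e_i\in R\qp$, hence $s_i := R^T e_i \in \qp$. These satisfy $\|s_i - R^Tc\| = \|e_i - c\| = \ro$, while $\qp\subseteq\bo$ forces $\|s_i - c\|\leq\ro$. Summing the identity $\|s_i-c\|^2 = \|s_i - R^Tc\|^2 + 2\langle s_i-R^Tc, R^Tc-c\rangle + \|R^Tc-c\|^2$ over $i$, the cross term vanishes because $\sum_i(s_i - R^Tc) = R^T(d^2c) - d^2R^Tc = 0$, giving $d^2\ro^2 + d^2\|R^Tc - c\|^2 \leq d^2\ro^2$. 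Therefore $R^Tc = c$, hence $Rc = c$. Now the $s_i$ lie on $\so\cap\qp$ and form a regular simplex as the orthogonal image of $\bs$. Applying Claim~1 to $\{s_i\}$ yields a type-preserving measurement; a short calculation using the row- and column-sum-$1$ properties of $R$ (both following from $Rc=c$) shows that its stretched measurement matrix is exactly $R$.

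\textbf{Main obstacle.} I expect the most delicate step to be the proof of $Rc=c$ in Claim~2: orthogonality by itself does not pin down a fixed point, and the argument only works because the basis simplex sits tightly on the out-sphere, so any nontrivial displacement would push $R^T\bs$ outside the out-ball, contradicting $\qp\subseteq\bo$. The orthogonality of $R$ in Claim~1 is a secondary concern, but follows cleanly from the coincidence of Gram matrices for the two regular-simplex frames.
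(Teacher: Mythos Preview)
Your proof is correct and, for Claim~1, essentially identical to the paper's: define $r(i|j)=s_i(j)$, verify it is a measurement, show $R$ is orthogonal via the Gram-matrix coincidence of the two regular-simplex frames, and check $R\qp$ is a qplex. Your self-polarity argument ($R\dl{\qp}=\dl{(R\qp)}$) is in fact more explicit than the paper's one-line appeal to invariance of the germ condition.

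For Claim~2 you take a genuinely different route. The paper does not prove $Rc=c$ first. Instead it argues as in Lemma~\ref{lm:typPMeasDetREq1}: since $|\det R|=1$ and $Rs_i=e_i$, the $d^2$-simplex $\{0,s_1,\dots,s_{d^2}\}$ has the same volume as $\{0,e_1,\dots,e_{d^2}\}$; because the $s_i$ lie in $\bo$ and the $e_i$ realise the maximal-volume simplex inscribed in $\bo$, the $s_i$ must themselves form a regular simplex on $\so$. Only at the end, by expanding $s_i(j)=\sum_k R^{\rm T}_{jk}e_i(k)$ and summing over $i$, does the paper obtain $\sum_k R_{kj}=1$ and hence $R=R'$. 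Your averaging argument---using $\sum_i(s_i-R^{\rm T}c)=0$ to kill the cross term and force $R^{\rm T}c=c$---achieves the same conclusion more directly and without invoking the extremal-volume characterisation of regular simplices. The paper's approach has the virtue of recycling Lemma~\ref{lm:typPMeasDetREq1}; yours is more self-contained and makes transparent exactly why the out-sphere constraint $\qp\subseteq\bo$ is what pins down the centre.
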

\begin{proof}
Define
\ea{
r(i|j) = s_i(j).
}
It is immediate that the $r(i|j)$ are the conditional probabilities defining a measurement with stretched measurement matrix
\be
R_{ij} = (d+1)s_i(j) -\frac{1}{d}.
\ee
We need to show that the measurement is type-preserving.  In other words, we need to show that the set $R\qp$ is a qplex.  For all $q\in \qp$
\ea{
(Rq)(i) &= (d+1) \la s_i , q\ra - \frac{1}{d},
}
from which it follows
\ea{
(Rq)(i) &\ge 0 & \sum_i (R q)(i) &= 1.
}
So $R\qp\subseteq \ps$.  Also,
it follows from the same considerations that led to Theorem~\ref{thm:TPreserveMapTheorem1} that $R$ is orthogonal.  The defining condition of a germ, Eq.~(\ref{eq:germ-defining}), is invariant under orthogonal transformations.  Therefore, $R\qp$ is a qplex.

We now prove the other direction of the correspondence.  Let $R$ be an orthogonal matrix such that $R\qp$ is a qplex.  We know that the basis distributions $e_i$ must belong to $R\qp$.  So, there exist $s_i\in \qp$ such that
\be
e_i = Rs_i.
\label{eq:biRsi}
\ee
Since $\det R = \pm 1$ we have, by the same argument used to prove Lemma~\ref{lm:typPMeasDetREq1}, that the $s_i\in \so$ and are the vertices of a regular simplex.  It now follows from the considerations above that $\cc(\{ s_i\})$ is the measurement simplex of a type-preserving measurement, with stretched measurement matrix
\be
R'_{ij} = (d+1) s_i(j) -\frac{1}{d}.
\ee
By multiplying both sides of Eq.~(\ref{eq:biRsi}) by $R^{\rm T}$, we find that
\ea{
s_i(j) &= \sum_k R^{\rm{T}}_{jk} e_i(k)
\nonumber
\\
&=\frac{1}{d+1} R_{ij} + \frac{1}{d(d+1)} \sum_{k} R_{kj}.
\label{eq:MeasMatTermsSimpCalc}
}
Summing over $i$ on both sides of this equation we find
$
\sum_k R_{kj} = 1
$
for all $j$ and, consequently, $R=R'$.
\end{proof}

At this stage, we recall our definition of an isomorphism between
qplexes: Two \qplexes\ $\qp$ and $\qp'$ are isomorphic if and only if
there exists an inner-product-preserving map $f\colon \fd{R}^{d^2} \to
\fd{R}^{d^2}$ that sends $\qp$ to $\qp'$.

\begin{theorem}
Let $\qp$, $\qp'$ be \qplexes.  Then $\qp$ and $\qp'$ are isomorphic if and only if there is a type-preserving measurement on $\qp$ such that $\qp' = R\qp$, where $R$ is the stretched measurement matrix.
\end{theorem}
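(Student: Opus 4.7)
The plan is to break the biconditional into its two directions and handle each via the structural results already established in Theorems \ref{thm:TPreserveMapTheorem1} and \ref{thm:RegSimpIsTPreserve}, with the only nontrivial geometric input being that a \qplex\ linearly spans $\fd{R}^{d^2}$.

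For the ``if'' direction, suppose $R$ is the stretched measurement matrix of a type-preserving measurement on $\qp$ with $\qp'=R\qp$. Theorem~\ref{thm:TPreserveMapTheorem1} tells me $R$ is an orthogonal matrix, and so the assignment $f=R$ is a linear bijection of $\fd{R}^{d^2}$ satisfying $\la f(q_1),f(q_2)\ra = \la q_1,q_2\ra$ for all vectors, hence in particular for $q_1,q_2\in\qp$. Since $f(\qp)=\qp'$ by hypothesis, the two conditions of Definition~\ref{def:qplexIsomorphism} are verified and $\qp\cong\qp'$.

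For the ``only if'' direction, suppose $f\colon\fd{R}^{d^2}\to\fd{R}^{d^2}$ is an isomorphism as in Definition~\ref{def:qplexIsomorphism}. The key step is to promote inner-product preservation from $\qp\times\qp$ to all of $\fd{R}^{d^2}\times\fd{R}^{d^2}$. I would do this by observing that the basis distributions $e_1,\dots,e_{d^2}$, which we know belong to every \qplex, are linearly independent: indeed, the matrix with rows $e_i$ is $\tfrac{1}{d+1}I+\tfrac{1}{d(d+1)}J$, whose eigenvalues $1$ and $\tfrac{1}{d+1}$ are nonzero. Therefore $\qp$ linearly spans $\fd{R}^{d^2}$, and the bilinear identity $\la f(v),f(w)\ra=\la v,w\ra$, which holds on a spanning set, extends by bilinearity of both sides to the whole space. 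Thus $R:=f$ is an orthogonal matrix.

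Now $R\qp=f(\qp)=\qp'$ is by assumption a \qplex, so $R$ satisfies the hypothesis of the second half of Theorem~\ref{thm:RegSimpIsTPreserve}. That theorem then delivers the conclusion directly: $R$ is the stretched measurement matrix of a type-preserving measurement on $\qp$, with measurement simplex having vertices $s_i\in\qp\cap\so$ determined by $e_i=Rs_i$. The main point where something could go wrong is the linear-span argument, but the explicit eigenvalue computation above rules that out, so the proof reduces to invoking the two preceding theorems on the two sides of the equivalence.
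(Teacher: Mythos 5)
Your proof is correct and follows essentially the same route as the paper's: orthogonality in the ``if'' direction via Theorem~\ref{thm:TPreserveMapTheorem1}, and in the ``only if'' direction the observation that $f$ preserves inner products on a spanning subset of $\fd{R}^{d^2}$ (hence is orthogonal), followed by an appeal to Theorem~\ref{thm:RegSimpIsTPreserve}. The explicit eigenvalue check that the basis distributions are linearly independent is a detail the paper leaves implicit, and it is a welcome addition.
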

\begin{proof}
Sufficiency is immediate.  To prove necessity suppose that $f\colon \qp \to \qp'$ is an isomorphism.  The fact that $f$ preserves scalar products on a set which spans $\fd{R}^{d^2}$ means that it must be represented by an orthogonal matrix.  The claim now follows from Theorem~\ref{thm:RegSimpIsTPreserve}.
\end{proof}
So far we have been looking at type-preserving measurements in general.  Let us now focus on the special case of $\qp$-preserving measurements.  Suppose that we have two such measurements, with measurement matrices $R$, $R'$.  Then $RR'$ is also an orthogonal matrix, with the property that $RR'\qp = \qp$.  So it follows from Theorem~\ref{thm:RegSimpIsTPreserve} that $RR'$ is the stretched measurement matrix for a $\qp$-preserving measurement.  Similarly with $R^{\rm{T}}$, the inverse.  In short,  the $\qp$-preserving measurement maps form a group.  For ease of reference let us give it a name:
\begin{definition}
Let $\qp$ be a \qplex.  The preservation group of $\qp$, denoted $\qgp{\qp}$, is the group of type-preserving measurement maps between $\qp$ and itself.
\end{definition}

The elements of $\qgp{\qp}$ are symmetries of  $\qp$.  The question naturally arises, whether they comprise \emph{all} the symmetries.  The above considerations are not sufficient to answer that question because they leave open the possibility that $\qp$ is invariant under  orthogonal transformations which do not fix the origin of $\fd{R}^{d^2}$.  The following theorem eliminates that possibility.
\begin{theorem}
\label{tm:preservation-symmetry}
Let $\qp$ be a \qplex.  Then the preservation group  is the symmetry group of $\qp$.
\end{theorem}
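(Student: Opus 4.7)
The plan is to reduce this theorem to Theorem~\ref{thm:RegSimpIsTPreserve}, which already shows that every orthogonal linear self-map of $\qp$ is the stretched measurement matrix of a $\qp$-preserving measurement. A symmetry of $\qp$ is a Euclidean isometry $\phi$ of $\fd{R}^{d^2}$ (equivalently, of the ambient hyperplane $H$) with $\phi(\qp) = \qp$, so an affine orthogonal map of the form $\phi(x)=U(x-c_0)+c_1$. All I need to add is that any such $\phi$ must fix the barycenter $c$; the rest follows by passing to barycentric coordinates and invoking the earlier theorem.

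To prove $\phi(c)=c$, I would show that $\bo$ is the unique minimum enclosing ball of $\qp$. On one hand $\bo$ is a ball of radius $\ro$ centred at $c$ that contains $\qp$, so the minimum enclosing radius of $\qp$ is at most $\ro$. On the other hand $\qp$ contains the basis distributions $e_1,\ldots,e_{d^2}$, whose pairwise inner products in Eq.~(\ref{eq:basis-purity-special}) are permutation-symmetric, so in barycentric coordinates they form a regular simplex with centroid $c$ and circumradius $\ro$. Since the minimum enclosing ball of the vertices of a regular simplex is its circumscribed ball, every ball containing $\qp$ has radius at least $\ro$. Combined with the standard uniqueness of the minimum enclosing ball of a compact set---proved by a one-line parallelogram argument showing that two distinct optimal centres would allow one to shrink the common radius---this identifies $\bo$ as \emph{the} minimum enclosing ball of $\qp$. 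Any symmetry must then preserve this ball, and hence its centre $c$.

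With $\phi(c) = c$ in hand, write $\phi(x) = \tilde U(x-c) + c$ with $\tilde U$ orthogonal on the linear hyperplane $H_0 = H - c$, and extend $\tilde U$ to all of $\fd{R}^{d^2}$ by decreeing $\tilde U c = c$ on the orthogonal complement. Then $\tilde U$ is an orthogonal linear map of $\fd{R}^{d^2}$ satisfying $\tilde U(\qp)=\qp$, so by Theorem~\ref{thm:RegSimpIsTPreserve} it is the stretched measurement matrix of a $\qp$-preserving measurement, placing $\phi$ in $\qgp{\qp}$. The only real obstacle is the clean identification of $\bo$ with the minimum enclosing ball of $\qp$, which rests on the observation that the basis simplex inscribes in $\so$; given that, and the routine fact about unique minimum enclosing balls, the rest is bookkeeping.
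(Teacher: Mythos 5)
Your proposal is correct and follows essentially the same route as the paper: show that any isometry of $\qp$ fixes the barycenter $c$, pass to an orthogonal linear map on $H-c$ extended by $Rc=c$, and invoke Theorem~\ref{thm:RegSimpIsTPreserve}. Your minimum-enclosing-ball argument for $\phi(c)=c$ is in fact more explicit than the paper's, which disposes of that step with a terse appeal to Theorem~\ref{thm:TPreserveMapTheorem1}; the observation that the basis distributions form a regular simplex inscribed in $\so$, so that $\bo$ is the unique minimum enclosing ball of $\qp$ and its centre $c$ is metrically distinguished, cleanly supplies the detail the paper leaves implicit.
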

\begin{proof}
The symmetry group of a subset of a normed vector space is defined to be the group of isometries of the set.  It has been shown above that every $\qp$-preserving measurement map is an isometry of $\qp$. We need to show the converse.  Let $f$ be an isometry of $\qp$.  It follows from Theorem~\ref{thm:TPreserveMapTheorem1} that $f(c) = c$.

Now define a map $\tilde{f}\colon \qp-c \to \qp-c$ by
\ea{
\tilde{f}(u) = f(u+c) - c
}
One easily sees that $\|\tilde{f}(u)\| = \|u\|$ for all $u \in \qp-c$.  Consequently
\be
\tilde{f}(u) = T u
\ee
for some orthogonal transformation $T$ of the subspace $H-c$.  We may extend $T$ to an orthogonal transformation $R$ of the whole space $\fd{R}^{d^2}$ by defining $Rc = c$.  It is then immediate that $R\qp = \qp$.  The result now follows by Theorem~\ref{thm:RegSimpIsTPreserve}.

\end{proof}

\section{From preservation group to \qplex}
\label{sec:qgroups}
In this section we ask what conditions a subgroup of $\Ot(d^2)$ must
satisfy in order to be the preservation group of some \qplex.  This
will lead us to the question of when symmetries are powerful enough to
determine a qplex essentially uniquely.  Let $\qp$ be a qplex and $\gp$ be
its preservation group.  Under what conditions can $\qp$ be maximally
symmetric, in the sense that $\gp$ is not a proper subgroup of the
symmetry group of any qplex?  The answer will turn out to depend upon
how the group $\gp$ acts on the basis simplex.

Quantum state space has the property that any pure state can be mapped
to any other pure state by some unitary operation, that is, by some
symmetry of the state space.  Indeed, given any pure state, the set of
all pure states is the orbit of the original state under the action of
the symmetry group.  This leads us to consider the general question of
qplexes whose extremal points form a single orbit under the action of
the qplex's symmetries.  One can prove that if $\qp$ is such a qplex,
then the symmetry group of~$\qp$ is maximal, and furthermore, any other
qplex $\qp'$ with the same symmetry group is identical to~$\qp$.

Given a group $\gp \subseteq \Ot(d^2)$, can $\gp$ be the preservation group
of a qplex? It is easy to find a necessary condition.  Following our
previous paper \cite{GroupAlg}, we introduce the concept of a
stochastic subgroup:
\begin{definition}
A subgroup $\gp\subseteq \Ot(d^2)$ is stochastic if, for all $R\in \gp$,
\be
R_{ij} \ge -\frac{1}{d} \quad \forall i,j
\quad\quad \hbox{ and }  \quad\quad
R c = c.
\ee
\end{definition}
Equivalently, we may say that a subgroup $\gp\subseteq \Ot(d^2)$ is stochastic if every matrix in $\gp$ is of the form
\be
R_{ij} = (d+1) S_{ij} - \frac{1}{d},
\ee
where $S_{ij} = s_i(j)$ is a doubly-stochastic matrix (hence the name).
It can  then be seen from Theorem~\ref{thm:TPreserveMapTheorem1} that every preservation group is a stochastic subgroup of $\Ot(d^2)$.

It is natural to ask whether the condition is sufficient as well as necessary, so that every stochastic subgroup of $\Ot(d^2)$  is the preservation group of some \qplex.  We have not been able to answer this question in full generality.  However, we have obtained some partial results.  We can show that any stochastic subgroup $\gp \subseteq \Ot(d^2)$ is at least contained in the preservation group of some qplex.  To see why, we start with a preliminary result.
\begin{lemma}
\label{thm:assocgerm}
Let $\gp$ be a stochastic subgroup of $\Ot(d^2)$.  For each $R\in \gp$ define the vectors $s^R_i$ by applying $R$ to the basis distributions:
\ea{
s^R_i(j) = \frac{1}{d(d+1)}\left( dR_{ij} + 1\right).
}
Then  $s^R_i\in \ps\cap \so$ for all $i$ and $\cc(\{s^R_i\})$ is a regular simplex.  Moreover
\ea{
\pr = \{s^R_i \colon R \in \gp, \ i = 1, \dots, d^2\}
}
is a germ.
\end{lemma}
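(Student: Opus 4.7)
The plan is to identify the vectors $s_i^R$ as images of the basis distributions under the transformation $R^T$, and then read off all three claims from that identification together with the stochastic-subgroup property of $\gp$.

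First I would establish the key identity $s_i^R = R^T e_i$. Using the explicit formula $e_i(j) = (\delta_{ij}+1/d)/(d+1)$ derived earlier, a direct expansion gives $(R^T e_i)(j) = (R_{ij} + (1/d)\sum_k R_{kj})/(d+1)$. The column sums $\sum_k R_{kj}$ all equal $1$: since $Rc=c$ and $R$ is orthogonal, $R^T c = c$, which forces each column of $R$ to sum to unity. The resulting expression matches the definition of $s_i^R(j)$ exactly.

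With $s_i^R = R^T e_i$ in hand, claims (1) and (2) are essentially immediate. Nonnegativity of $s_i^R$ is the stochastic inequality $R_{ij}\ge -1/d$; normalization $\sum_j s_i^R(j)=1$ is the row-sum identity $\sum_j R_{ij}=1$, which is again just $Rc=c$. Membership in $\so$ follows because $R^T$ is an orthogonal transformation fixing the barycenter $c$, so it preserves the distance $\|e_i - c\|=\ro$. For the regular-simplex claim, the same orthogonality preserves every pairwise distance $\|e_i - e_j\|$, and the basis simplex $\bs$, being a rescaled copy of the probability simplex, is itself regular; therefore so is its image under $R^T$.

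The substantive step is claim (3). I would compute $\inprod{s_i^R}{s_j^{R'}}$ directly from the definition; after collecting the cross terms $\sum_k R_{ik}R'_{jk}=(RR'^{T})_{ij}$ and simplifying, one arrives at $\inprod{s_i^R}{s_j^{R'}} = \bigl((RR'^{T})_{ij} + 2/d + 1\bigr)/(d+1)^2$. The upper bound $\le 2/(d(d+1))$ is equivalent to $(RR'^{T})_{ij}\le 1$, which holds automatically because $RR'^{T}$ is orthogonal (every entry of an orthogonal matrix is bounded in absolute value by $1$). The main obstacle, and the place where the \emph{group} structure of $\gp$ is essential, is the lower bound $\ge 1/(d(d+1))$: it is equivalent to $(RR'^{T})_{ij}\ge -1/d$, and this is secured by noting that $R'^{T} = (R')^{-1}\in\gp$ by closure under inverses, and hence $RR'^{T}\in\gp$ by closure under multiplication, so the stochastic hypothesis applies to $RR'^T$ directly. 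The only thing to watch is the row/column-sum conventions and the fact that orthogonality plus $Rc=c$ gives $R^T c=c$ automatically, which I would want to flag once at the start and then use freely.
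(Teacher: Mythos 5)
Your proof is correct, and it fills in exactly the details that the paper dismisses with ``straightforward consequence of the definitions'': the identification $s_i^R = R^{\rm T}e_i$ (using $Rc=c$ plus orthogonality to get unit column sums), the isometry argument for membership in $\so$ and regularity, and the computation $\inprod{s_i^R}{s_j^{R'}} = \bigl((RR'^{\rm T})_{ij}+2/d+1\bigr)/(d+1)^2$ with the lower bound secured by closure of $\gp$ under products and inverses. Nothing to add.
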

\begin{proof}
Straightforward consequence of the definitions.
\end{proof}
\begin{definition}
Let $\gp$ be a stochastic subgroup of $\Ot(d^2)$.  The orbital germ
is the orbit of the basis distributions under the action of~$\gp$, that
is, the set $\pr$ specified in the statement of
Lemma~\ref{thm:assocgerm}.
\end{definition}
\begin{theorem}
\label{thm:StochSgpContQgp}
Let $\gp$ be a stochastic subgroup of $\Ot(d^2)$.  Then there exists a \qplex\ $\qp$ such that $\gp \subseteq \qgp{\qp}$.
\end{theorem}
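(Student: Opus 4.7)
The plan is to apply Zorn's lemma to a poset of $\gp$-invariant germs and then argue that a maximal element is necessarily a qplex.  Without loss of generality I take $\gp$ to be closed in $\Ot(d^2)$, since the stochastic conditions $R_{ij}\ge -1/d$ and $Rc=c$ are preserved under limits; then $\gp$ is compact and carries a normalized Haar measure $\mu$, which I will use to produce $\gp$-invariant points by averaging.  The orbital germ $\pr$ is $\gp$-invariant by construction, and $\stm(\pr)=\cc(\bs\cup\bi\cup\pr)=\cc(\bi\cup\pr)$ (since the basis distributions $e_i=s^I_i$ already lie in $\pr$) is a $\gp$-invariant closed convex germ, because $\pr$ is $\gp$-invariant and $\bi$ is fixed by every orthogonal $R$ with $Rc=c$.

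Let $\mathcal{F}$ be the collection of $\gp$-invariant closed convex germs in $\bo\cap\ps$ containing $\stm(\pr)$, ordered by inclusion.  Chains have upper bounds (closure of the union preserves $\gp$-invariance, convexity, closedness, and the germ property by bilinearity of the inner product), so Zorn yields a maximal element $\qp\in\mathcal{F}$.  By Theorem~\ref{tm:qplexPolarity} it suffices to show $\qp=\dl{\qp}$.  Since $\qp\supseteq\stm$ we have $\dl{\qp}\subseteq\dl{\stm}=\bo\cap\ps$, so any hypothetical $p\in\dl{\qp}\setminus\qp$ lies in $\bo\cap\ps$.  Form the Haar average $\bar p=\int_{\gp}Rp\,d\mu(R)$; by convexity and $\gp$-invariance of $\dl{\qp}\cap\bo\cap\ps$, this is a $\gp$-fixed element of that set.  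If $\bar p\notin\qp$, then $\cc(\qp\cup\{\bar p\})$ is a strictly larger member of $\mathcal{F}$: the inner product $\inprod{q}{\bar p}$ lies in $[L,U]$ for every $q\in\qp$ because $\bar p\in\dl{\qp}$ and both points lie in $\bo$, while $\inprod{\bar p}{\bar p}\in[L,U]$ because $\bar p\in H\cap\bo$.  This contradicts maximality of $\qp$.

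The remaining case $\bar p\in\qp$ is the main obstacle.  Here the $\gp$-average has re-entered $\qp$ and one cannot simply adjoin a new fixed point, so I would instead try to adjoin the full orbit $\gp\cdot p'$ of a carefully chosen perturbation $p'$ of $p$.  The convex hull $\cc(\qp\cup\gp\cdot p')$ is a germ provided $\inprod{p'}{Rp'}\ge L$ holds for every $R\in\gp$; the mixed inner-product bounds between $\qp$ and $\gp\cdot p'$ follow automatically from $\gp$-invariance of $\qp$ together with $p'\in\dl{\qp}$, and the upper bound is automatic for points in $\bo$.  My plan is to take $p'$ on the segment from $\bar p$ to $p$, just outside $\qp$: as $p'\to\bar p$, continuity and compactness of $\gp$ force $\inprod{p'}{Rp'}\to\|\bar p\|^2\ge 1/d^2>L$ uniformly in $R$, so the required lower bound survives provided $p'$ can be chosen sufficiently close to $\bar p$.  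The delicate step is producing such a $p'$: the ray from $\bar p$ through $p$ must exit $\qp$ at a finite positive parameter (since $p\notin\qp$ while $\bar p\in\qp$), and if $\bar p$ lies on $\partial\qp$ rather than in the interior one first has to move $\bar p$ into the relative interior by a small $\gp$-equivariant perturbation before the estimate applies.  Dealing with the subcase where $\bar p$ lies deep in the interior of $\qp$ so that the segment $[\bar p,p]$ only exits $\qp$ far from $\bar p$ is the crux of the obstacle, and presumably requires exploiting more of the structure of stochastic subgroups than the argument uses in the easy case.
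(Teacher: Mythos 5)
Your overall strategy---Zorn's lemma applied to $\gp$-invariant germs containing the orbital germ, followed by showing that a maximal element must be self-polar by adjoining the orbit of a point of $\dl{\qp}\setminus\qp$---is exactly the paper's, but the execution stalls at precisely the point you flag, and that case cannot be waved away. The root of the trouble is that your segment is based at the Haar average $\bar p$, which gives you no quantitative control: at the point $q^*$ where $[\bar p,p]$ exits $\qp$, the inequality $\inprod{q^*}{Rq^*}\ge \tfrac{1}{d(d+1)}$ holds with no margin, uniform or otherwise, so there is nothing to let you push past $q^*$ to a point $p'$ \emph{outside} $\qp$ whose entire orbit still satisfies the lower bound. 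Your limiting estimate ($\inprod{p'}{Rp'}\to\|\bar p\|^2$ as $p'\to\bar p$) is beside the point, because $p'$ must lie beyond $q^*$, which may be far from $\bar p$; and the appeal to ``exploiting more of the structure of stochastic subgroups'' in some unspecified way is not a proof.

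The paper's resolution is to discard the Haar average entirely and base the segment at the barycenter $c$, which is already a canonical fixed point of every stochastic subgroup ($Rc=c$) and lies in the interior of the maximal invariant germ (because $c$ is interior to each simplex $\cc(\{s^R_i\})$). Writing $p_\lambda=\lambda p+(1-\lambda)c$, one has, for every $R\in\gp$ and every $\lambda\in[0,\lambda_0]$ for which $p_\lambda\in\qp$,
\begin{equation*}
\inprod{p_\lambda}{Rp_\lambda}=\lambda\inprod{p}{Rp_\lambda}+(1-\lambda)\inprod{c}{Rp_\lambda}\ge \frac{\lambda}{d(d+1)}+\frac{1-\lambda}{d^2}=\frac{1}{d(d+1)}+\frac{1-\lambda}{d^2(d+1)},
\end{equation*}
since $Rp_\lambda\in\qp$ and $p\in\dl{\qp}$. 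The term $\tfrac{1-\lambda}{d^2(d+1)}$ is exactly the uniform strict margin your construction lacks; combined with compactness of the closure $\bar{\gp}$ and continuity, it yields a $\mu$ strictly beyond the exit parameter $\lambda_0$ with $\inprod{p_\mu}{Rp_\mu}\ge\tfrac{1}{d(d+1)}$ for all $R\in\gp$, so that $\qp\cup\{Rp_\mu\colon R\in\gp\}$ is a strictly larger invariant germ, contradicting maximality. No Haar measure and no case split on the location of an averaged point are needed; your ``easy case'' is fine as far as it goes, but the theorem is carried entirely by the case you could not close.
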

\begin{proof}
Let $\pr$ be the orbital germ of $\gp$, and let $\allgerms_\pr$ be the set of all germs $P$ such that
\begin{enumerate}
\item $P$ contains $\pr$.
\item $RP=P$ for all $R \in \gp$.
\end{enumerate}
It follows from Zorn's lemma that $\allgerms_\pr$ contains at least one maximal element.  Let $\qp$ be such a maximal element.  Observe that if $P$ is in $\allgerms_\pr$ then its convex closure is also in $\allgerms_\pr$; consequently $\qp$ must be convex and closed.  Observe, also, that if $R$ is any element of $\gp$, then $c$ is in the interior of the simplex $\cc(\{s^R_i\})$; consequently $c$ is in the interior of $\qp$.

 We claim that $\qp$ is in fact a \qplex.  For suppose it were not.  Then we could choose $p \in \ps\cap \bo \cap \dl{\qp}$ such that $p\notin \qp$.  For each $\lambda$ in the closed interval $[0,1]$ define $p_{\lambda}=\lambda p + (1-\lambda) c$.  The fact that $\qp$ is closed, convex together with the fact that $c$ is in the interior of $\qp$ means that there exists $\lambda_0 \in (0,1)$ such that  $p_{\lambda} \in \qp$ if and only if $\lambda \in [0,\lambda_0]$.  We have
\ea{
\la p , Rp_{\lambda} \ra \ge \frac{1}{d(d+1)}
}
for all $R \in \gp$, $\lambda \in [0,\lambda_0]$. Consequently
\ea{
\la  p_{\lambda} , R p_{\lambda} \ra  \ge \frac{1}{d(d+1)} + \frac{1-\lambda}{d^2(d+1)}
}
for all $\lambda \in [0,\lambda_0]$, $R\in \gp$.  By continuity this inequality must hold for all $\lambda \in [0,\lambda_0]$, $R\in \bar{\gp}$, where $\bar{\gp}$ is the closure of $\gp$ in $\Ot(d^2)$.   It follows that there must exist a fixed number $\mu \in (\lambda_0, 1]$ such that
\ea{
\la   p_{\mu} , R p_{\mu}\ra \ge \frac{1}{d(d+1)}
}
for all $R \in \gp$.  For suppose that were not the case.  Then we could choose a sequence $\nu_n \downarrow \lambda_0$, and a sequence $R_n \in \gp$, such that
\ea{
\la  p_{\nu_n} , R_n p_{\nu_n}\ra < \frac{1}{d(d+1)}
}
for all $n$.  The group $\bar{\gp}$ is compact (because $\Ot(d^2)$ is compact~\cite{CompactGroup}) as is the closed interval $\left[0,\frac{1}{d(d+1)}\right]$.  Consequently we can choose a subsequence $n_j$ such that $R_{n_j} \to \bar{R}\in \bar{\gp}$ and
\ea{
\la p_{\nu_{n_j}} , R_{n_j} p_{\nu_{n_j}}\ra \to a
}
for some $a \in \left[0,\frac{1}{d(d+1)}\right]$.  But this would imply that
\ea{
\la p_{\lambda_0} , \bar{R} p_{\lambda_0} \ra  = a \le \frac{1}{d(d+1)}
}
---which is a contradiction.

Now consider the set
\ea{
\qp' = \qp \cup \{R p_{\mu} \colon R \in \gp\}.
}
Observe that \ea{
(Rp_{\mu})(i) &= (d+1)\la s^R_i , p_{\mu}\ra - \frac{1}{d} \ge 0
}
for all $i$ and all $R\in \gp$ (because $p_{\mu} \in \dl{\qp}\subseteq \dl{\pr}$).  So $\qp' \subseteq \ps$.  It is immediate that $\qp'\subseteq \bo$ and $\qp'\subseteq \qp^{\prime *}$.  So $\qp'$ is a germ such that $R\qp' = \qp'$ for all $R\in \gp$, and which is strictly larger than $\qp$---which is a contradiction.

It is now immediate that $\gp$ is a subgroup of $\qgp{\qp}$.
\end{proof}
We can make  stronger statements if we introduce some new concepts.
\begin{definition}
A stochastic subgroup $\gp\subseteq \Ot(d^2)$ is maximal if it is not contained in any larger stochastic subgroup.
\end{definition}
\begin{definition}
A stochastic subgroup $\gp\subseteq \Ot(d^2)$ is strongly maximal if it is maximal and if, in addition, the closed convex hull of the orbital germ is a \qplex.
\end{definition}
We then have the following results.
\begin{corollary}
\label{cor:maxStochQplex}
Let $\gp$ be a maximal stochastic subgroup of $\Ot(d^2)$.  Then there exists a \qplex\ $\qp$ such that $\gp=\qgp{\qp}$.
\end{corollary}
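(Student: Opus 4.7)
The plan is to exploit Theorem~\ref{thm:StochSgpContQgp} together with the fact that the preservation group of any qplex is automatically a stochastic subgroup of $\Ot(d^2)$, and then let maximality close the inclusion.

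Concretely, I would first apply Theorem~\ref{thm:StochSgpContQgp} to the given maximal stochastic subgroup $\gp$, which immediately yields a qplex $\qp$ with $\gp \subseteq \qgp{\qp}$. The second step is to check that $\qgp{\qp}$ itself lies in the class of stochastic subgroups. This is essentially read off from Theorem~\ref{thm:TPreserveMapTheorem1}: every element $R$ of $\qgp{\qp}$ is an orthogonal matrix fixing $c$ and admits the representation $R_{ij} = (d+1) s_i(j) - 1/d$ for some $s_i \in \qp \cap \so$. Since the $s_i$ are probability vectors, the entries satisfy $s_i(j) \ge 0$, hence $R_{ij} \ge -1/d$. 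Combined with $Rc = c$, this is exactly the definition of a stochastic subgroup, so $\qgp{\qp}$ is itself stochastic.

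With both of these in hand the corollary is immediate: $\gp \subseteq \qgp{\qp}$ and $\qgp{\qp}$ is a stochastic subgroup of $\Ot(d^2)$, so maximality of $\gp$ forces equality, $\gp = \qgp{\qp}$. There is no real obstacle to overcome; the corollary is essentially a bookkeeping consequence of the two facts that (i) stochasticity is a necessary condition for being a preservation group (Theorem~\ref{thm:TPreserveMapTheorem1}) and (ii) every stochastic subgroup is contained in at least one preservation group (Theorem~\ref{thm:StochSgpContQgp}). If I were worried about anything subtle, it would only be the verification in step two, but the explicit formula from Theorem~\ref{thm:TPreserveMapTheorem1} makes the nonnegativity bound $R_{ij} \ge -1/d$ transparent.
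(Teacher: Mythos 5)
Your proposal is correct and is exactly the argument the paper intends: the paper's proof reads ``Immediate consequence of Theorem~\ref{thm:StochSgpContQgp},'' relying on the earlier observation (via Theorem~\ref{thm:TPreserveMapTheorem1}) that every preservation group is a stochastic subgroup, so that maximality of $\gp$ closes the inclusion $\gp \subseteq \qgp{\qp}$. You have simply spelled out the bookkeeping that the paper leaves implicit.
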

\begin{proof}
Immediate consequence of Theorem~\ref{thm:StochSgpContQgp}.
\end{proof}
\begin{theorem}
Let $\gp$ be a strongly maximal stochastic subgroup of $\Ot(d^2)$ and let $\pr$ be the orbital germ.  Then $\cc(\pr)$ is the unique \qplex\ $\qp$ such that $\gp=\qgp{\qp}$.
\end{theorem}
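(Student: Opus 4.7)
The plan is to split the proof into three steps: first show that $\qp_0:=\cc(\pr)$ is a qplex whose preservation group \emph{contains} $\gp$; second, upgrade ``contains'' to ``equals'' using the maximality clause of strong maximality; and third, establish uniqueness by showing that \emph{any} qplex $\qp$ with $\qgp{\qp}=\gp$ must contain $\pr$, and hence must coincide with $\qp_0$ because $\qp_0$ is already a maximal germ.

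For the first step, strong maximality tells us directly that $\qp_0=\cc(\pr)$ is a qplex, so there is nothing to prove there.  By the very definition of the orbital germ, $R\pr=\pr$ for every $R\in\gp$; since each element of $\gp$ is an orthogonal linear map that fixes $c$, it preserves closed convex hulls, so $R\qp_0=\qp_0$ for all $R\in\gp$.  In view of Theorem~\ref{thm:RegSimpIsTPreserve}, each such $R$ is then the stretched measurement matrix of a $\qp_0$-preserving measurement, and so $\gp\subseteq \qgp{\qp_0}$.

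For the second step, recall from the remark following Theorem~\ref{thm:TPreserveMapTheorem1} that every preservation group is a stochastic subgroup of $\Ot(d^2)$.  Hence $\qgp{\qp_0}$ is a stochastic subgroup of $\Ot(d^2)$ containing $\gp$.  By maximality of $\gp$ (the first clause in the definition of strong maximality), this forces $\gp=\qgp{\qp_0}$.

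For the third step (uniqueness), suppose $\qp$ is any qplex with $\qgp{\qp}=\gp$.  Every qplex contains the basis simplex, so the basis distributions $e_i$ lie in $\qp$, and since $R\qp=\qp$ for all $R\in\gp$ the whole orbit $\pr=\{s^R_i:R\in\gp,\,i=1,\dots,d^2\}$ lies in $\qp$ as well.  Closedness and convexity of $\qp$ then give $\qp_0=\cc(\pr)\subseteq \qp$.  But $\qp_0$ is itself a qplex, i.e.\ a maximal germ (Theorem~\ref{tm:qplexPolarity}), and $\qp$ is a germ containing it, so by maximality $\qp=\qp_0$.  Combining steps two and three proves the theorem.

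The only delicate point in this argument is making sure we really can invoke maximality on both sides: on the group side we use that $\qgp{\qp_0}$ is automatically stochastic (hence covered by the maximality clause for $\gp$), and on the set side we use that $\qp_0$ is automatically a qplex (hence covered by the fact that maximal germs admit no proper germ extensions).  Strong maximality was designed precisely to make both uses legitimate, so beyond citing the already-established structural results the argument is essentially bookkeeping; I do not expect any real obstacle.
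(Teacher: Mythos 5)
Your proof is correct, and its uniqueness half is essentially identical to the paper's: any qplex whose preservation group is $\gp$ must contain the basis distributions, hence the whole orbital germ $\pr$, hence $\cc(\pr)$, and then maximality of the germ $\cc(\pr)$ (which strong maximality guarantees is itself a qplex) forces equality. Where you diverge is the existence half. The paper simply cites Corollary~\ref{cor:maxStochQplex} -- which rests on the Zorn's-lemma construction of Theorem~\ref{thm:StochSgpContQgp} -- to get \emph{some} qplex with preservation group $\gp$, and only afterwards identifies it with $\cc(\pr)$. You instead verify directly that the explicit candidate $\qp_0=\cc(\pr)$ works: $\gp$-invariance of the orbital germ passes to its closed convex hull, Theorem~\ref{thm:RegSimpIsTPreserve} then gives $\gp\subseteq\qgp{\qp_0}$, and maximality of $\gp$ among stochastic subgroups (together with the fact that every preservation group is stochastic) upgrades this to equality. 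This buys you a proof that is self-contained given the strong-maximality hypothesis and does not route through the transfinite construction; the paper's version is shorter only because it reuses machinery already built for the weaker statement. One small bookkeeping point: the vectors $s^R_i$ are $R^{-1}e_i$ rather than $Re_i$ under the paper's index conventions, but since $\gp$ is a group the orbital germ is the $\gp$-orbit of the basis distributions either way, so your invariance claims $R\pr=\pr$ and $\pr\subseteq\qp$ stand.
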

\begin{proof}
We know from Corollary~\ref{cor:maxStochQplex} that there exists at least one \qplex\ $\qp$ such that $\gp=\qgp{\qp}$.  If $\qp$, $\qp'$ are \qplexes\ such that $\gp=\qgp{\qp} = \qgp{\qp'}$ then $\qp$, $\qp'$ must both contain $\cc(\pr)$, where $\pr$ is the orbital germ.  Since $\cc(\pr)$ is a qplex we must have $\qp = \cc(\pr) = \qp'$.
\end{proof}

This brings us back to the claim we made at the beginning of this section.

\begin{corollary}
If $\qp$ is a qplex whose extreme points form a single orbit under the action of the preservation group, then the preservation group of~$\qp$ is strongly maximal.
\end{corollary}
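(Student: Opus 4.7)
The plan is to verify the two requirements of strong maximality in turn: first, that the closed convex hull of the orbital germ of $\gp := \qgp{\qp}$ is a qplex (in fact equal to $\qp$); second, that $\gp$ is maximal as a stochastic subgroup of $\Ot(d^2)$. The key structural fact I will use in both steps is that the basis distributions $e_1,\dots,e_{d^2}$ belong to every qplex, have squared norm $\inprod{e_i}{e_i}=2/(d(d+1))$ by Eq.~(\ref{eq:basis-purity-special}), and therefore lie on the out-sphere $\so$. Since $\qp\subseteq \bo$, any point of $\qp$ lying on $\so$ is already an extreme point of $\bo$, and hence an extreme point of $\qp$. So the $e_i$ are extreme points of $\qp$.

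For the strong part, I would argue as follows. By hypothesis, the extreme points of $\qp$ form a single $\gp$-orbit. Applying $\gp$ to $e_1$ therefore sweeps out every extreme point, so the orbital germ $\pr=\{Re_i : R\in\gp, \ i=1,\dots,d^2\}$ is exactly the set of extreme points of $\qp$. Since $\qp$ is closed, convex and bounded (hence compact) in finite dimension, it equals the closed convex hull of its extreme points, giving $\cc(\pr)=\qp$. In particular $\cc(\pr)$ is a qplex, which is the ``strongly'' condition.

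For maximality, suppose $\gp'\subseteq\Ot(d^2)$ is a stochastic subgroup with $\gp'\supseteq\gp$. By Theorem~\ref{thm:StochSgpContQgp} there is a qplex $\qp'$ with $\gp'\subseteq\qgp{\qp'}$. Because $\qp'$ contains the basis distributions and is invariant under $\gp'$, it contains the orbital germ of $\gp'$, which contains the orbital germ $\pr$ of $\gp$. Hence $\qp=\cc(\pr)\subseteq\qp'$. But by Theorem~\ref{tm:qplexPolarity} every qplex is a maximal germ, so $\qp$ admits no proper germ extension; consequently $\qp'=\qp$, whence $\gp'\subseteq\qgp{\qp'}=\qgp{\qp}=\gp$. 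Together with $\gp'\supseteq\gp$ this yields $\gp'=\gp$, proving that $\gp$ is a maximal stochastic subgroup. Combining with the first paragraph, $\gp$ is strongly maximal.

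The main obstacle to watch is actually a subtle ordering point rather than a deep step: one must be careful that the orbital germ is taken before any convex closure, and that the ``$\qp'$ contains the orbital germ of $\gp'$'' step really follows from $\gp'\subseteq\qgp{\qp'}$ rather than the reverse. Once the two containments $\pr\subseteq\pr'\subseteq\qp'$ are lined up with the maximal-germ property of $\qp$, the rest is just bookkeeping with the definitions; no new geometric input beyond the previously established facts is required.
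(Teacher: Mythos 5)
Your proof is correct and follows essentially the same route as the paper's: identify the orbital germ with the set of extreme points (so its closed convex hull is $\qp$ itself, giving the ``strong'' part), then use the maximal-germ property of qplexes to force any qplex whose preservation group contains $\gp$ to coincide with $\qp$. You are somewhat more explicit than the paper in two places --- justifying that the $e_i$ are extreme points via the out-sphere, and invoking Theorem~\ref{thm:StochSgpContQgp} to convert a hypothetical larger stochastic subgroup $\gp'$ into a qplex $\qp'$ so that the containment argument closes the maximality claim --- but these are exactly the steps the paper leaves implicit, not a different argument.
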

\begin{proof}
Let $\qp$ be a qplex and $\gp$ be its preservation group.  Assume that the extremal points form a single orbit under the action of~$\gp$.  The basis distributions are among the extremal points, so all extremal points are on the same orbit as any basis distribution.  In other words, the orbital germ is the set of extreme points.  Suppose that $\qp'$ is a qplex whose preservation group contains $\gp$.  Then $\qp'$ contains all the extremal points of~$\qp$, and thus, $\qp'$ contains $\qp$.  But a qplex is a maximal germ, so we must have $\qp' = \qp$.
\end{proof}

 \section{Characterizing \qplexes\ isomorphic to quantum state space}
 \label{sec:character}
We are, of course, most interested in \qplexes\ corresponding to SIC measurements.  In this section, we will define what it means for a \qplex\ to be isomorphic to quantum state space.  We will prove that if $\qp$ is a \qplex\ isomorphic to quantum state space, then its preservation group is isomorphic to the projective extended unitary group, essentially the group of all unitaries and anti-unitaries with phase factors quotiented out.  Then, we will establish the converse: If the preservation group of a qplex is isomorphic to the projective extended unitary group, then that qplex is isomorphic to quantum state space.  This result indicates one way of recovering quantum theory from the urgleichung.

\begin{definition}
\label{def:qstpIsomorphism}
Let $\herm$ be the space of Hermitian operators on $d$-dimensional Hilbert space  and let $\dens$ be the space of density matrices.  We will say that a \qplex\ $\qp$ is isomorphic to quantum state space if there exists an $\fd{R}$-linear bijection $f\colon \herm \to \fd{R}^{d^2}$  such that
\begin{enumerate}
\item $\qp = f(\dens)$.
\item For all $\rho$, $\rho'\in \dens$
\ea{
\bigl< f(\rho), f(\rho')\bigr>=\frac{ \Tr(\rho \rho') +1 }{d(d+1)}.
\label{eq:qstIsoProp2}
}
\end{enumerate}
A qplex that is isomorphic to quantum state space will be designated a Hilbert qplex.
\end{definition}
It is straightforward to verify that definitions~\ref{def:qplexIsomorphism} and~\ref{def:qstpIsomorphism} are consistent, in the sense that if $\qp$ is a Hilbert qplex, and if $\qp'$ is any other \qplex, then $\qp'$ is a Hilbert qplex if and only if it is isomorphic to $\qp$ in the sense of definition~\ref{def:qplexIsomorphism}.
\begin{theorem}
Let $\qp$ be a \qplex.  Then a map $f\colon \dens \to \qp$ is an isomorphism of quantum state space onto $\qp$ if and only if there is a SIC $\Pi_j$ such that
\ea{
(f(\rho))(j) = \frac{1}{d} \Tr(\rho \Pi_j)
\label{eq:frhojTermsSIC}
}
for all $j$ and all $\rho\in \dens$.
\end{theorem}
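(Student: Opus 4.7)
The plan is to use the basis distributions $e_j$ to pin down the form of $f$ and then verify the SIC overlap condition~(\ref{eq:SICOverlaps}) by matching two expressions for $\inprod{e_k}{e_l}$.

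For the ``only if'' direction, suppose $f$ is an isomorphism.  Because $e_j\in\qp=f(\dens)$, the preimages $\Pi_j:=f^{-1}(e_j)$ are well-defined density operators.  Using $e_k(i)=\frac{1}{d+1}(\delta_{ik}+\frac{1}{d})$, one has
\be
\inprod{f(\rho)}{e_k} \;=\; \frac{1}{d+1}\!\left(f(\rho)(k)+\frac{1}{d}\right),
\ee
while applying Eq.~(\ref{eq:qstIsoProp2}) with $\rho'=\Pi_k$ gives
\be
\inprod{f(\rho)}{e_k} \;=\; \frac{\Tr(\rho\Pi_k)+1}{d(d+1)}.
\ee
Equating these immediately yields $f(\rho)(k)=\frac{1}{d}\Tr(\rho\Pi_k)$, which is Eq.~(\ref{eq:frhojTermsSIC}).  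To verify that $\{\Pi_j\}$ is a SIC, apply Eq.~(\ref{eq:qstIsoProp2}) once more, now with $\rho=\Pi_k$ and $\rho'=\Pi_l$, and compare with the value $\inprod{e_k}{e_l}=\frac{d\delta_{kl}+d+2}{d(d+1)^2}$ recorded in Eq.~(\ref{eq:basis-purity-special}); a one-line rearrangement gives $\Tr(\Pi_k\Pi_l)=\frac{d\delta_{kl}+1}{d+1}$, which is Eq.~(\ref{eq:SICOverlaps}).  Setting $k=l$ yields $\Tr(\Pi_k^2)=1$, so each $\Pi_k$ is a rank-one projector, and hence $\{\Pi_j\}$ is a SIC.

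For the converse, assume $f$ is defined by Eq.~(\ref{eq:frhojTermsSIC}) for some SIC $\{\Pi_j\}$.  Linearity is immediate, and bijectivity of the linear extension $\herm\to\fd{R}^{d^2}$ follows from linear independence of the $\Pi_j$ (itself a consequence of the SIC overlap condition).  The inner-product property~(\ref{eq:qstIsoProp2}) follows from the 2-design identity $\sum_j\Pi_j\otimes\Pi_j=\frac{d}{d+1}(I+F)$, where $F$ is the swap operator; this gives
\be
\sum_j \Tr(A\Pi_j)\Tr(B\Pi_j) \;=\; \frac{d}{d+1}\bigl(\Tr(A)\Tr(B)+\Tr(AB)\bigr),
\ee
and specializing to $A=\rho$, $B=\rho'$ then dividing by $d^2$ reproduces Eq.~(\ref{eq:qstIsoProp2}).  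Finally, $f(\dens)=\qp$ holds because $f(\dens)$ contains the basis distributions (take $\rho=\Pi_j$), lies in the probability simplex by positivity of $\Tr(\rho\Pi_j)$, and satisfies the fundamental inequalities by virtue of the inner-product identity just established --- so $f(\dens)$ is itself a qplex, contained in $\qp$ by hypothesis, and maximality of $\qp$ forces equality.

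The main obstacle is conceptual rather than technical: one must recognize that the basis distributions, combined with the inner-product condition~(\ref{eq:qstIsoProp2}), already force the $\Pi_j$ into place as the preimages of the $e_j$.  Once this is seen, the SIC overlap condition falls out of a single comparison with Eq.~(\ref{eq:basis-purity-special}), and both directions collapse to short calculations.
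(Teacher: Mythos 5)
Your forward direction is correct and is essentially the paper's argument: define $\Pi_j = f^{-1}(e_j)$, read off Eq.~(\ref{eq:frhojTermsSIC}) by comparing the two expressions for $\inprod{f(\rho)}{e_k}$, and get the SIC overlaps from Eq.~(\ref{eq:basis-purity-special}); your extra observation that $\Tr(\Pi_k)=\Tr(\Pi_k^2)=1$ forces rank one is a point the paper leaves implicit. In the converse, your 2-design route to Eq.~(\ref{eq:qstIsoProp2}) is a legitimate alternative to the paper's more elementary derivation from the reconstruction formula~(\ref{eq:rhoTermsProbs}), though the identity $\sum_j\Pi_j\otimes\Pi_j=\tfrac{d}{d+1}(I+F)$ itself deserves at least a one-line justification from the overlap condition (e.g.\ the difference of the two sides is Hermitian with vanishing Hilbert--Schmidt norm).

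The genuine gap is in the last step of the converse. You establish that $f(\dens)$ contains the basis distributions, lies in the probability simplex, and satisfies the fundamental inequalities --- but that only makes $f(\dens)$ a \emph{germ}, not a qplex. A germ contained in a maximal germ need not equal it: the basis simplex $\bs$ is a germ contained in every qplex and equals none of them. So ``maximality of $\qp$ forces equality'' does not follow from what you have shown; what is needed is maximality (equivalently, self-polarity) of $f(\dens)$ itself, i.e.\ the nontrivial fact that for every point of $\ps\cap\bo$ outside the image of quantum state space there is a quantum state violating the lower bound against it. This is exactly the point the paper disposes of by citing prior work (the statement ``any quantum state space in SIC representation is a qplex''). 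Your proof needs either that citation or an argument for self-polarity; once $f(\dens)$ is known to be a maximal germ, its containment in the germ $\qp$ does force $f(\dens)=\qp$ (note it is maximality of $f(\dens)$, not of $\qp$, that does the work here).
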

\begin{remark}
Thus, to each isomorphism of quantum state space onto $\qp$, there corresponds a unique SIC.  In particular a SIC exists in dimension $d$ if and only if a Hilbert qplex exists in dimension $d$.
\end{remark}
\begin{proof}
Suppose $f\colon S\to \qp$ is an isomorphism.  Define
\be
\Pi_j = f^{-1}(e_j).
\ee
Then
\ea{
\Tr(\Pi_j\Pi_k) & = d(d+1)\la e_j, e_k \ra -1
=\frac{d\delta_{jk} + 1}{d+1}.
}
So $\Pi_j$ is a SIC.  Moreover, for all $\rho\in \dens$, and all $j$,
\ea{
\frac{1}{d} \Tr(\rho \Pi_j) &= (d+1) \la f(\rho), e_j\ra -\frac{1}{d}
= (f(\rho))(j).
}
Suppose, on the other hand, $f\colon S\to \qp$ is a map for which Eq.~(\ref{eq:frhojTermsSIC}) is satisfied for some SIC $\Pi_j$.  Then we can extend $f$ to a linear bijection of $\herm$ onto $\fd{R}^{d^2}$.  We know from prior work~\cite{fuchs2009, appleby2011, fuchs2013} that  $f(S)$ is a \qplex.  Since it is contained in $\qp$ we must have $f(S) = \qp$.  Moreover, since
\ea{
\rho = \sum_j \left( (d+1) (f(\rho))(j) -\frac{1}{d}\right) \Pi_j,
}
with a similar expression for $\rho'$, we have
\ea{
\Tr(\rho \rho') = d(d+1) \la f(\rho), f(\rho') \ra -1
}
from which Eq.~(\ref{eq:qstIsoProp2}) follows.
\end{proof}

One might wonder if other \qplexes, not isomorphic to $\qp$ (and we know that these exist, per Theorem~\ref{thm:infNonIsomorphic} and Appendix~\ref{sec:altQplex}), correspond to other informationally complete POVMs.  This is not the case.  It follows from the foregoing that there is no measurement which will take us from a \qplex\ of one kind to a \qplex\ of a different, nonisomorphic kind.

Knowing this, let us characterize the preservation group of a Hilbert \qplex\ $\qp$.  We define the extended unitary group, denoted $\EU(d)$,  to be the group consisting of all unitary and anti-unitary operators, and the projective extended unitary group, denoted $\PEU(d)$, to be the quotient $\EU(d)/\M(d)$, where  $\M(d)$ is the sub-group consisting of all unitaries of the form $e^{i\theta} I$, for some phase $e^{i\theta}$.
\begin{theorem}
\label{eq:qisoSGpPeU}
Let $\qp$ be a Hilbert qplex.  Then $\qgp{\qp}$ is isomorphic to $\PEU(d)$.
\end{theorem}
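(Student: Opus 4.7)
The plan is to transfer symmetries of $\qp$ across the SIC-induced isomorphism $f$ to inner-product-preserving maps on the space of density matrices, and then to invoke Wigner's theorem to identify them with (anti-)unitary conjugations modulo phase.

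Fix an isomorphism $f:\dens\to\qp$, which exists by the preceding theorem. By Theorem~\ref{tm:preservation-symmetry}, every $R\in\qgp{\qp}$ is an isometry of $\qp$ fixing $c$, so the pulled-back map $\Phi_R = f^{-1}\circ R\circ f$ is an affine bijection of $\dens$ onto itself sending $I/d$ to $I/d$. Combining Eq.~(\ref{eq:qstIsoProp2}) with the fact that $R$ preserves the Euclidean inner product, one obtains
\be
\Tr\bigl(\Phi_R(\rho)\Phi_R(\sigma)\bigr) = \Tr(\rho\sigma)
\ee
for all $\rho,\sigma\in\dens$. Because $\dens$ affinely spans $\herm$, $\Phi_R$ has a unique $\fd{R}$-linear extension to $\herm$ that preserves the Hilbert--Schmidt inner product and necessarily permutes the rank-one projectors (the extreme points of $\dens$) among themselves.

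The restriction of this extension to pure states is then a bijection of rays preserving transition probabilities $\lvert\langle\psi|\phi\rangle\rvert^2$. I would invoke Wigner's theorem to produce $V\in\EU(d)$ such that $\Phi_R(\rho) = V\rho V^{\dagger}$ for all $\rho\in\dens$, where the adjoint is interpreted in the appropriate sense in the anti-unitary case. The assignment $V\mapsto R$ is clearly a group homomorphism $\EU(d)\to\qgp{\qp}$. The converse direction---that every $V\in\EU(d)$ yields a qplex symmetry---is immediate, since conjugation by $V$ preserves trace, Hermiticity, positivity, and the Hilbert--Schmidt inner product, so the map $f\circ(V(\cdot)V^{\dagger})\circ f^{-1}$ is an orthogonal self-map of $\qp$. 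The kernel of $V\mapsto R$ consists of those $V$ acting trivially on every density matrix; unitary such $V$ must be scalars by Schur's lemma, and an anti-unitary cannot act as the identity by conjugation, so the kernel equals $\M(d)$. Hence $\qgp{\qp}\cong \EU(d)/\M(d) = \PEU(d)$.

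The only nontrivial step is the appeal to Wigner's theorem, and the main delicacy lies in organising the argument so that the unitary and anti-unitary branches are handled uniformly; once that is arranged and one checks that $\Phi_R$ really does restrict to a genuine ray-bijection on pure states preserving transition probabilities, everything else is essentially bookkeeping.
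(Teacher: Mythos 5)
Your proposal is correct and follows essentially the same route as the paper, whose entire proof of this theorem is the single line ``Straightforward consequence of Wigner's theorem'': you have simply supplied the details the authors leave implicit (transferring the symmetry across the SIC-induced isomorphism $f$, checking preservation of $\Tr(\rho\sigma)$ and hence of transition probabilities on pure states, invoking Wigner, and identifying the kernel of $V\mapsto R_V$ with $\M(d)$). No gaps; the kernel computation, including the observation that no anti-unitary conjugation can fix every density matrix for $d\ge 2$, is the only point one might have overlooked, and you handle it.
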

\begin{proof}
Straightforward consequence of Wigner's theorem~\cite{Wigner}.
\end{proof}

 We showed in  Theorem~\ref{eq:qisoSGpPeU} that if $\qp$ is a Hilbert qplex then $\qgp{\qp}$ is isomorphic to $\PEU(d)$.  Now, we will prove the converse:  If $\qgp{\qp}$ is isomorphic to $\PEU(d)$, then $\qp$ is a Hilbert qplex.  It turns out, in fact, that a weaker statement is true:  If $\qgp{\qp}$ contains a subgroup isomorphic to $\PU(d)$, then $\qp$ is a Hilbert qplex.

 In the Introduction we remarked on the need for an extra assumption,
 additional to the basic definition of a qplex, which will serve to
 uniquely pick out those \qplexes\ which correspond to quantum state
 space.  The theorem we will prove momentarily supplies us with one
 possible choice for this assumption.  As we remarked in the
 introduction, there may be others.

 As a by-product of this result we obtain a criterion for SIC existence:  Namely, a SIC exists in dimension $d$ if and only if $\PU(d)$ is isomorphic to a stochastic subgroup of $\Ot(d^2)$.  We proved this result by another method in a previous paper~\cite{GroupAlg}, but this is the route by which we were originally led to it. Indeed, it is hard to see why it should occur to anyone that stochastic subgroups of $\Ot(d^2)$ might be relevant to SIC existence if they were not aware of the role that such subgroups play in the theory of \qplexes.

The result depends on the following method for embedding a \qplex\ in operator space.  The question of whether a SIC exists in every dimension is very hard, and, indeed, is still unsolved.  But if one simply asks for a set of operators $\Pi_1, \dots , \Pi_{d^2}$ satisfying the equations
\ea{
\Tr(\Pi_j) &= 1,
\label{eq:quasiSICDef1}
\\
\Tr(\Pi_j \Pi_k) & = \frac{d\delta_{jk} +1}{d+1},
\label{eq:quasiSICDef2}
}
without imposing any further constraint---in particular,
without requiring that the $\Pi_j$ be positive semi-definite---then the problem becomes almost trivial.  To see this consider the real Lie algebra $\su(d)$ (i.e.\ the space of trace-zero Hermitian operators).  Equipped with the Hilbert--Schmidt inner product
\be
\la B, B'\ra  = \Tr(B B'),
\ee
this becomes a $(d^2-1)$-Euclidean space, so the existence of operators $B_1,\dots, B_{d^2}$, each of length 1, and forming the vertices of a regular simplex, is guaranteed.  These operators satisfy
\be
\Tr(B_j B_k)
=
\begin{cases}
1 \qquad & j=k;\\
-\frac{1}{d^2-1} \qquad & j\neq k.
\end{cases}
\ee
If we now define
\be
\Pi_j =\sqrt{\frac{d-1}{d}} B_j + \frac{1}{d} I,
\ee
then the $\Pi_j$ satisfy Eqs.~(\ref{eq:quasiSICDef1}) and~(\ref{eq:quasiSICDef2}).  We will refer to them as a quasi-SIC.

Now let $\qp$ be an arbitrary \qplex, and for each $q\in \qp$ define, by analogy with Eq.~(\ref{eq:rhoTermsProbs})
\ea{
\rho_q = \sum_j \left( (d+1) q(j) -\frac{1}{d}\right) \Pi_j.
}
If $\Pi_j$ really were a SIC, and if the $q(j)$ really were the outcome probabilities for a measurement with that SIC, then $\rho_q$ would be a density matrix. In general, however, neither of those conditions need hold true.  So, $\rho_q$ will typically not be positive semi-definite (though it will be trace-$1$).  We will refer to it as a quasi-density matrix.  It will also be convenient to define
\ea{
S_\qp = \{\rho_q\colon q \in \qp\}.
}
We will refer to $S_\qp$ as quasi-state space.
It is easily verified that
\be
0 \le \la \rho, \rho' \ra \le 1,
\label{eq:UrungleichungForQuasiStSp}
\ee
for all $\rho, \rho' \in S_\qp$, just as is the case for genuine density matrices.

We are now in a position to prove
\begin{theorem}
\label{thm:QStateCriterion}
Let $\qp$ be a \qplex.  Then the following statements are equivalent:
\begin{enumerate}
\item $\qgp{\qp}$ contains a subgroup isomorphic to $\PU(d)$.
\item $\qp$ is a Hilbert qplex.
\end{enumerate}
\end{theorem}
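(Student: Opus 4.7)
The direction $(2) \Rightarrow (1)$ is immediate: Theorem~\ref{eq:qisoSGpPeU} gives $\qgp{\qp} \cong \PEU(d)$, and $\PU(d)$ sits inside $\PEU(d)$ as an index-$2$ subgroup. For the converse, $(1) \Rightarrow (2)$, my plan is to transport the $\gp \cong \PU(d)$ action to the space of Hermitian operators via a quasi-SIC, recognize it there as a disguised form of the adjoint representation, and then straighten the quasi-SIC into a genuine SIC so that the preceding theorem applies.

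Concretely, I would pick an arbitrary quasi-SIC $\{\Pi_j\}$ and realize $\qp$ inside $\herm$ as the quasi-state space $S_{\qp} = \{\rho_q : q\in \qp\}$. Each $R\in \gp$ is an orthogonal transformation of $\fd{R}^{d^2}$ fixing $c$, and under the (scaled) isometric embedding $q\mapsto \rho_q$ it transports to a real-linear map on $\herm$ that preserves the Hilbert--Schmidt inner product and fixes $\rho_c = I/d$. Restricting to the trace-zero complement therefore yields a faithful orthogonal representation $\pi\colon \PU(d) \to \Ot(\su(d))$, where $\su(d)$ carries the Hilbert--Schmidt inner product.

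The next step is to compare $\pi$ with the standard adjoint representation $\mathrm{Ad}\colon \PU(d) \to \Ot(\su(d))$, $\mathrm{Ad}(U)\sigma = U\sigma U^\dagger$. Both are faithful real orthogonal representations of $\PU(d)$ of dimension $d^2-1$. For $d\geq 3$ the adjoint is, up to equivalence, the only irreducible real representation of this dimension, and irreducibility of $\pi$ follows from the fact that any nontrivial representation of $\PU(d)$ has dimension at least $d^2-1$. The case $d=2$ is handled directly via $\PU(2)\cong \SO(3)$. Thus there exists $V\in \Ot(\su(d))$ with $\pi(g) = V\,\mathrm{Ad}(g)\,V^{-1}$ for every $g \in \PU(d)$. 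Extending $V$ by $V(I) = I$ to an inner-product-preserving map on $\herm$ and setting $\Pi'_j = V^{-1}(\Pi_j)$ produces a new quasi-SIC on which $\gp$ now acts by ordinary unitary conjugation.

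In this straightened frame the $\Pi'_j$ lie on a single $\PU(d)$-orbit under conjugation, hence share a common spectrum; combined with the quasi-SIC constraints $\Tr(\Pi'_j) = \Tr((\Pi'_j)^2) = 1$, the common spectrum must be $(1,0,\dots,0)$, so each $\Pi'_j$ is a rank-one projector. Thus $\{\Pi'_j\}$ is a bona fide SIC, and the previous theorem identifies $\qp$ as a Hilbert qplex. The principal obstacle is the representation-theoretic step: identifying $\pi$ with $\mathrm{Ad}$ through a genuine element of $\Ot(\su(d))$ requires the classification of low-dimensional irreducible representations of $\PU(d)$, and some care to rule out the possibility that the intertwiner only exists as an abstract isomorphism modulo an outer automorphism of $\PU(d)$ that fails to be realized by an orthogonal map on $\su(d)$.
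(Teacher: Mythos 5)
Your overall architecture matches the paper's: embed $\qp$ in operator space via a quasi-SIC, use the uniqueness of the $(d^2-1)$-dimensional nontrivial irreducible representation of $\PU(d)$ to produce an orthogonal intertwiner that ``straightens'' the quasi-SIC so the group acts by genuine unitary conjugation, and then try to show the straightened $\Pi'_j$ are rank-one projectors. The first two stages are essentially the paper's appendix on unitary symmetry of quasi-state spaces, and the outer-automorphism worry you flag is not fatal: the conjugate-adjoint representation is intertwined with the adjoint by transposition of Hermitian matrices, which is itself an orthogonal map of $\su(d)$, so a genuine orthogonal intertwiner always exists.

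The final stage, however, contains a real gap. First, the relation you actually obtain is $U\Pi'_jU^\dagger=\sum_k R^U_{jk}\Pi'_k$ with $R^U$ a stochastic orthogonal matrix, \emph{not} a permutation; the conjugation orbit of $\Pi'_1$ consists of the quasi-states $\rho_{R^Ue_1}$ and in general does not contain $\Pi'_2$, so ``the $\Pi'_j$ lie on a single $\PU(d)$-orbit'' is unsupported. Second, and more seriously, even granting a common spectrum, the constraints $\Tr\Pi'_j=\Tr\bigl((\Pi'_j)^2\bigr)=1$ do \emph{not} force the spectrum $(1,0,\dots,0)$ for $d\ge 3$: the solution set of $\sum_i\lambda_i=\sum_i\lambda_i^2=1$ is a $(d-2)$-sphere, containing for instance $(2/d,\dots,2/d,\,2/d-1)$ and a continuum of other spectra. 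This is precisely where the paper must do real work: it applies the inequality $0\le\Tr(\rho_qU\rho_qU^\dagger)\le 1$ for \emph{all} unitaries $U$ and all $q$ on the out-sphere, chooses $U$ to realize the antitone pairing of the eigenvalues, invokes a lemma giving $\inprod{\lambda^{\uparrow}}{\lambda^{\downarrow}}\le 0$ with equality only when $d-1$ eigenvalues coincide, and then rules out the coexistence of the two surviving spectra by computing $\Tr\bigl(P\,((2/d)I-P)\bigr)=2/d-1<0$, violating the lower bound. Finally, even once the $\Pi'_j$ are known to form a SIC, you must still show $\qp$ is the \emph{entire} SIC probability set; the paper gets this from unitary transitivity on rank-one projectors plus the maximality of qplexes. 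Without these steps your argument does not close.
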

\begin{proof}
The implication $(2)\implies (1)$ is an immediate consequence of Theorem~\ref{eq:qisoSGpPeU}.  It remains to prove the implication $(1) \implies (2)$.

Let $\Pi_j$ be a quasi-SIC, and use this quasi-SIC to map the qplex $\qp$ into operator space, creating the quasi-state space $S_\qp$.  The fact that the \qplex{} $\qp$ contains a subgroup isomorphic to the projective unitary group $\PU(d)$ implies that the quasi-state space $S_\qp$ is invariant under unitary transformations.  That is, the projective unitary symmetry of one set carries over to the other.  This result is fairly natural; for completeness, we provide an explicit proof in Appendix~\ref{sec:UnitaryImplication}.

Suppose $q\in \qp\in \so$.  Then
\ea{
\Tr(\rho_q) = \Tr(\rho_q^2) = 1.
}
Also, it follows from Eq.~(\ref{eq:UrungleichungForQuasiStSp}) and unitary invariance of the quasi-state space that
\ea{
0 \le \Tr(\rho_q U\rho_qU^{\dagger}) \le 1.
}
for every unitary $U$.  By choosing $U$ to give the appropriate permutation of the eigenvalues we deduce that
\be
0 \le \sum_i \lambda^{\uparrow}_i \lambda^{\downarrow}_i \le 1,
\ee
where $\lambda^{\uparrow}_i$ (respectively $\lambda^{\downarrow}_i $) are the eigenvalues of $\rho_q$ arranged in increasing (respectively decreasing) order.

We now invoke a lemma proven in~\cite{GroupAlg}.  If $\lambda$ is a vector in $\mathbb{R}^d$ such that
\be
\sum_{j=0}^{d-1} \lambda_j = \sum_{j=0}^{d-1} \lambda_j^2 = 1,
\ee
then
\be
\inprod{\lambda^{\uparrow}}{\lambda^{\downarrow}} \leq 0.
\ee
The inequality is saturated if and only if $d-1$ entries in $\lambda$ are equal.  This can occur when
\be
\lambda^{\downarrow} = (1,0,\ldots,0),
\ee
or when
\be
\lambda^{\downarrow} = \left(\frac{2}{d},\ldots,\frac{2}{d},
 \frac{2}{d} - 1\right).
\ee

So we must have
\be
\sum_i \lambda^{\uparrow}_i \lambda^{\downarrow}_i =0.
\ee
Moreover, the possible solutions for the eigenvalue spectrum $\lambda^{\downarrow}$ imply that either $\rho_q  = P$ or $\rho_q = (2/d)I -P$ for some  rank-$1$ projector $P$.  If $d=2$, then $\rho_q$ is a rank-$1$ projector either way.  Otherwise, if $d>2$, suppose $q$, $q'\in \qp\in \so$ were such that $\rho_q = P$ and $ \rho_{q'} = (2/d)I - P'$  where $P$ and $P'$ are rank-$1$ projectors.  In that case there would be a unitary $U$ such that $UP'U^{\dagger} = P$, which would mean, by unitary invariance, that the quasi-state space contained both $P$ and $(2/d)I -P$.  But
\be
\Tr\left(P \left( \frac{2}{d}I - P\right) \right) = \frac{2}{d} - 1 < 1,
\ee
which contradicts  Eq.~(\ref{eq:UrungleichungForQuasiStSp}).  We conclude that if $d>2$ then, either $\rho_q$ is a rank-$1$ projector for all $q\in \qp$, or else $(2/d)I - \rho_q$ is a rank-$1$ projector for all $q\in \qp$.  In the latter case we may define a new quasi-SIC
\be
\tilde{\Pi}'_j = \frac{2}{d}I - \tilde{\Pi}_j.
\ee
One easily verifies that the new quasi-state space is also unitarily invariant.  Moreover, if we define
\ea{
\rho'_q = \sum_j \left( (d+1) q(j) -\frac{1}{d}\right) \tilde{\Pi}'_j,
}
then
\ea{
\rho'_q = \frac{2}{d} I - \rho_q,
}
implying that $\rho'_q$ is a rank-$1$ projector for all $q\in \qp\in \so$.  There is therefore no loss of generality in assuming that our original quasi-state space is such that $\rho_q$ is a rank-$1$ projector for all $q\in \qp\in\so$. Since
\be
\rho_{e_i} = \tilde{\Pi}_i,
\ee
this means in particular that the $\Pi_i$ are rank-$1$ projectors, and therefore constitute a genuine SIC.

Let us note that unitary invariance means that the set $\{\rho_q \colon q \in \qp\in \so\}$ does not merely consist  of rank-$1$ projectors; it actually comprises all the rank-$1$ projectors.   It follows, that if $\rho$ is an arbitrary density matrix, and if $q(j) = (1/d) \Tr(\rho \tilde{\Pi}_j)$, then $q$ is a convex combination of points in $\qp\in \so$, and therefore $q \in \qp$.  Since the SIC probabilities are a \qplex, it follows that $\qp$ does not contain any other points than these, and is therefore isomorphic to quantum state space as claimed.
\end{proof}

Let us observe that in proving this theorem we have incidentally shown that if there is  a \qplex\ $\qp$ which contains an isomorphic copy of $\PU(d)$, then a SIC exists in dimension $d$.  So the theorem has the following corollary:
\begin{corollary}
The following statements are equivalent:
\begin{enumerate}
\item  $\PU(d)$ is isomorphic to a stochastic subgroup of $\Ot(d^2)$.
\item A SIC exists in dimension $d$.
\end{enumerate}
\end{corollary}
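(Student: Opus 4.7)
The plan is to chain together the main theorems of the last two sections, without requiring any new calculation, so that the corollary reduces to two short deductions.

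For the direction (1)$\implies$(2), I would start by assuming we have a subgroup $\gp\subseteq\Ot(d^2)$ which is stochastic and is isomorphic to $\PU(d)$. Theorem~\ref{thm:StochSgpContQgp} provides a qplex $\qp$ with $\gp\subseteq\qgp{\qp}$, so the preservation group of $\qp$ contains a subgroup isomorphic to $\PU(d)$. Theorem~\ref{thm:QStateCriterion} then tells us that $\qp$ is a Hilbert qplex. By the remark following the theorem characterizing isomorphisms of quantum state space onto $\qp$ in terms of SICs, the existence of a Hilbert qplex in dimension $d$ is equivalent to the existence of a SIC in dimension $d$, so (2) follows.

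For the direction (2)$\implies$(1), I would suppose a SIC exists in dimension $d$. The same remark supplies us with a Hilbert qplex $\qp$. Theorem~\ref{eq:qisoSGpPeU} identifies $\qgp{\qp}$ with $\PEU(d)$. Since $\PU(d)=\U(d)/\M(d)$ sits inside $\PEU(d)=\EU(d)/\M(d)$ as the image of the unitary subgroup, we obtain a copy of $\PU(d)$ inside $\qgp{\qp}$. Finally, every preservation group is stochastic (this was noted immediately after the definition of stochastic subgroup, as a direct consequence of Theorem~\ref{thm:TPreserveMapTheorem1}), so this copy of $\PU(d)$ is a stochastic subgroup of $\Ot(d^2)$, establishing (1).

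I do not expect any real obstacle, since the substantive work has already been done: Theorem~\ref{thm:QStateCriterion} is the hard step in the (1)$\implies$(2) direction and Wigner's theorem (invoked in Theorem~\ref{eq:qisoSGpPeU}) is the hard step in the (2)$\implies$(1) direction. The only small point that needs care is the verification that $\PU(d)$ does embed as a subgroup of $\PEU(d)$ in the natural way; but this is immediate because $\M(d)\subseteq\U(d)\subseteq\EU(d)$ and $\U(d)\cap$ (anti-unitary coset) $=\emptyset$, so the quotient map restricted to $\U(d)$ has kernel exactly $\M(d)$ and image a subgroup of $\PEU(d)$ isomorphic to $\PU(d)$. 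Thus the proof is essentially a short combinatorial assembly of the results above.
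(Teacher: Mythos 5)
Your proposal is correct and follows essentially the same route as the paper: the $(1)\implies(2)$ direction chains Theorem~\ref{thm:StochSgpContQgp} with Theorem~\ref{thm:QStateCriterion} exactly as the authors do, and your $(2)\implies(1)$ direction merely unpacks what the paper compresses into ``an immediate consequence of Theorem~\ref{thm:QStateCriterion}'' (whose relevant half is itself proved via Theorem~\ref{eq:qisoSGpPeU} and the embedding of $\PU(d)$ in $\PEU(d)$, together with the fact that every preservation group, and hence every subgroup of one, is stochastic). No gaps.
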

\begin{proof}
The implication $(2)\implies (1)$ is an immediate consequence of Theorem~\ref{thm:QStateCriterion}.  To prove the implication $(1) \implies (2)$, let $\gp$ be a stochastic subgroup of $\Ot(d^2)$ which is isomorphic to $\PU(d)$.  It follows from Theorem~\ref{thm:StochSgpContQgp} that there exists a \qplex\ $\qp$ such that $\gp \subseteq \qgp{\qp}$.  In view of Theorem~\ref{thm:QStateCriterion} this implies $\qp$ is the set of outcome probabilities  for a SIC measurement, which means, in particular, that a SIC must exist in dimension $d$.
\end{proof}

\section{Discussion}
\label{sec:future}
Our investigation of qplexes exists in the context of many years' effort toward the goal of reconstructing quantum theory.  Early pioneers of the subject, like Birkhoff and von Neumann, sought a broader mathematical environment in which quantum theory could be seen to dwell.  This led to the subjects of quantum logic and Jordan algebras~\cite{Mccrimmon}.  However, despite the mathematical developments, the influence on physics---and, indeed, on the philosophy thereof---was rather subdued.  The instensely mathematical character of the work may have played a role in this.  Moreover, this work predated the invention and integration into physics of information theory, which turned out to be a boon to the reconstruction enterprise.  It also predated the theorems of Bell, Kochen and Specker~\cite{mermin1993, mermin1993-erratum}, and thus it could not benefit from their insight into what is robustly strange about quantum physics.

One might say that the ``modern age'' of quantum reconstructions was inaugurated by Rovelli in 1996.  He advocated a research program of deriving quantum theory from physical principles, in a manner analogous to the derivation of special relativity's mathematical formalism~\cite{Rovelli1996}.  During the same time period, one of the authors (CAF) also began advocating this project~\cite{fuchs2002, fuchs2010, transcript}.  An early success was Hardy's ``Quantum theory from five reasonable axioms''~\cite{Hardy01, Schack03}, which pointed out the importance of what we call a Bureau of Standards measurement~\cite[p.\ 368]{Fuchs2014}.

Looking over the papers produced in this ``modern age,'' one technical commonality worth remarking upon is the idea of building up the unitary (or projective unitary) group from a universal gate set~\cite{Masanes, HoehnWever}.  This is an idea from the field of quantum computation.  For example, it is known that any unitary operator can be broken down into a sequence of two-level unitaries, applied in succession~\cite[p.\ 188]{MikeAndIke}.  Also, given a collection of $N$ qubits, all the projective unitaries acting on their joint state space---that is, the group $\PU(2^N)$---can be synthesized using single-qubit unitaries and an entangling gate, like a Controlled \textsc{not} operation, that can be applied to any pair of qubits~\cite{Harrow}.  This suggests one way of making progress in the theory of qplexes, by replacing the unitarity assumption.

Recall that in any qplex, a set of mutually maximally distant points can have at most $d$ elements~\cite{appleby2011, transcript}.  Thus, although a qplex is originally defined as living within a $d^2$-dimensional space, in a sense it has an ``underlying dimensionality''~\cite{transcript} equal to~$d$.  Consider a qplex $\qp$, equipped with a set of $d$ mutually maximally distant pure states.  What if we require that any $d-1$ of those states defines a structure isomorphic to a smaller qplex?  Applying this recursively, we arrive eventually at the condition that any two maximally distant points define a set of probability distributions isomorphic to a qplex with $d = 2$, which is automatically a Hilbert qplex.  This is a strong condition, although it makes no direct mention of a particular symmetry group.  At the moment, we see no way to satisfy this condition other than having $\qp$ be a Hilbert qplex.

Alternatively, one can try to make progress by relaxing the unitarity assumption.  For example, instead of imposing a particular symmetry group, what if we seek the qplexes of maximal allowed symmetry?  Assuming that a SIC exists in dimension $d$, then a qplex in~$\ps_{d^2}$ can be at least as symmetric as a Hilbert qplex.  We conjecture that no qplex can be more symmetric than a Hilbert qplex, where we quantify the degree of symmetry by, for example, the dimension of the Lie group of qplex-preserving maps.  This conjecture leads to another:  We suspect that of all the qplexes of a given dimension, the Hilbert qplexes have maximal Euclidean volume.

Another outstanding question is, out of all the conceivable additions
one could make to probability theory in order to relate expectations
for different hypotheticals, why pick the urgleichung?  To our
knowledge, no one considered such a relation before quantum mechanics
and the SIC representation.  And yet, it is a comparatively mild
modification of the classical relationship.  This is particularly
evident when the measurement on the ground is modeled by a set of $d$
orthogonal projectors, \emph{i.e.,} when it is a von Neumann
measurement.  In that case,
\begin{equation}
q(j) = (d+1) \sum_i p(i) r(j|i) - 1.
\end{equation}
This is just a rescaling and shifting of the classical
formula~\cite{transcript}.

In Section~\ref{sec:basic}, we began with a general affine
relationship between Bureau of Standards probabilities and the
probabilities for other experiments.  By invoking a series of
assumptions, we narrowed the parameter values in the generalized
urgleichung down to those that occur in quantum theory.  (Our last
assumption, which fixed the upper bound at twice the lower bound, may
be related to the choice of complex numbers over real numbers and
quaternions for Hilbert-space coordinates~\cite{fuchs2009}.  For an
unexpected connection between SICs and the normed division algebras,
see~\cite{stacey-sporadic, stacey-hoggar}.)  This has
the appealing feature that a linear stretching is just about
the simplest deformation of the classical Law of Total Probability
that one can imagine.  However, this area is still, to a great extent,
unknown territory:  Why linearity?  Are qualitatively greater
departures from classicality mathematically possible?

Many of the quantum reconstruction efforts to date share the feature
that they make quantum physics as unremarkable as possible: While the
technical steps from axioms to theorems are unassailable, the choice
of axioms gives little insight into what is truly strange about
quantum phenomena.  To borrow a phrase from David Mermin, these
re-expressions tend to make quantum theory sound ``benignly
humdrum''~\cite{mermin-pillow}.

For example, should one aim to derive quantum theory from the fact
that quantum states cannot be cloned?  Arguably not: Even classical
distributions over phase space are uncloneable~\cite{caves1996}.  What
about quantum teleportation?  At root, teleportation is a protocol for
making information about one system instead relevant to another, and
it has exact analogues in classical statistical
theories~\cite{spekkens2007, bartlett2012, spekkens2014}.  In 2003,
Clifton, Bub and Halvorson~\cite{clifton2003} proposed a derivation of quantum theory
that started with $C^*$ algebras and then added, as postulates, some
results of quantum information science, such as the no-broadcasting
theorem~\cite{barnum1996}.  However, the no-broadcasting
theorem---despite its original
motivation~\cite[p.\ 2235]{Fuchs2014}---also applies in classical
statistical theories~\cite{spekkens2007, bartlett2012, spekkens2014},
and thus seems a poor foundation to build the quantum upon.  Overall,
it seems that choosing $C^*$ algebras for a starting point implicitly
does a great deal of the work already~\cite[p.\ 1125]{Fuchs2014}.

Similarly, a more recent derivation by Chiribella, D'Ariano and
Perinotti~\cite{chiribella11} invokes, at a key juncture, the
postulate that any mixed state can be treated as a marginal of a pure
state ascribed to a larger system.  This postulate, the purifiability
of mixed states, is an essential ingredient in their recovery of
quantum theory.  As with the examples above, however, it is also true
in classical statistical theories~\cite{spekkens2007, bartlett2012,
  spekkens2014, disilvestro2016}.  From that perspective, it is
consequently a less than fully compelling candidate for the essence of
quantumness.

By contrast, we have chosen as our starting point what we consider to
be the ``jugular vein'' of quantum strangeness: Theories of intrinsic
hidden variables do so remarkably badly at expressing the vitality of
quantum physics.  The urgleichung is our way of stating this physical
characteristic of the natural world in the language of probability.
Quantum states, it avers, are catalogues of expectations---but
\emph{not} expectations about hidden variables.  This view is in line
with ``participatory realist'' interpretations of quantum
mechanics~\cite{cabello2015, fuchs2016}, like QBism~\cite{fuchs2013,
  Voldemort, FMS-AJP} and related approaches~\cite{zeilinger2005, kofler2010,
  appleby2013}.
  
\section{Acknowledgements}

CAF thanks Ben Schumacher for helpful discussions and the Max Planck
Institute for Quantum Optics for a safe haven in which to work out
some of these radical ideas.  MA acknowledges support by the
Australian Research Council via EQuS project number CE11001013.  We
thank John DeBrota for comments.

\appendix
\section{A \qplex\ which does not correspond to a SIC measurement}
\label{sec:altQplex}

By definition, a qplex is a subset of the probability simplex $\Delta_{d^2}$ such that each pair of points within it satisfy the fundamental inequalities,
\begin{equation}
\frac{1}{d(d+1)} \leq \sum_i p(i)s(i) \leq \frac{2}{d(d+1)}.
\end{equation}
We can construct a qplex which is not isomorphic to quantum state space in the following way.  Begin with a set $A$ defined by the intersection of the probability simplex with the ball
\begin{equation}
\sum_i p(i)^2 \leq \frac{2}{d(d+1)}.
\end{equation}
Our plan is to trim this set down until it becomes a qplex.  First, we break $A$ into $d^2!$ regions, which we label $F_k$, for $k = 1,\ldots,d^2!$.  We define the region $F_1$ to be all probability vectors in the set $A$ whose entries appear in decreasing magnitude.  That is,
\begin{equation}
F_1 = \left\{ {p} : {p} \in A
              \hbox{ and } p(1) \geq p(2) \geq \cdots \geq p(d^2)
      \right\}.
\end{equation}
The region $F_1$ is consistent with the fundamental inequalities, because for every ${p} \in F_1$,
\begin{equation}
\inprod{p}{p} \geq \inprod{p}{c}
 \geq \frac{1}{d^2} > \frac{1}{d(d+1)}.
\end{equation}
We define the other regions $F_k$ analogously.  Because $k$ runs from 1 to $(d^2)!$, it labels the permutations in the symmetric group on $d^2$ elements.  Each $F_k$ consists of the vectors obtained by taking the vectors in $F_1$ and permuting the components according to permutation $k$.  All of the regions $F_k$ so defined will be internally consistent.

To obtain a qplex $\qp$, start with $F_1$ and include all the points from $F_1$ in $\qp$.  Then, take all the points from $F_2$ that are consistent with all the points in $F_1$, and include them in $\qp$.  Continue in this manner, adding the points in each $F_k$ that are consistent with every point added to $\qp$ so far.  The end result will be a qplex that is surely not isomorphic to quantum state space.

\section{Unitary Symmetry of Quasi-State Spaces}
\label{sec:UnitaryImplication}
Let $\Pi_j$ be a quasi-SIC, as defined in Eqs.~(\ref{eq:quasiSICDef1}) and (\ref{eq:quasiSICDef2}) of the main text.  For each $U\in\PU(d)$ we have a matrix $S^U_{jk}$ such that
\ea{
U\Pi_jU^{\dagger} = \sum_k S^U_{jk} \Pi_k.
}
The matrix is given explicitly by
\ea{
S^{U}_{jk} &= \frac{d+1}{d} \Tr\bigl( \Pi_k U \Pi_j U^{\dagger}\bigr) - \frac{1}{d},
}
from which one sees
\ea{
\sum_j S^{U}_{jk} &= 1, & \sum_k S^{U}_{jk} &= 1,
}
and
\ea{
\sum_k S^{U}_{ik}S^{U}_{jk} &= \frac{d+1}{d} \Tr\left(\left( \sum_kS^{U}_{ik}\Pi_k\right)  U\Pi_jU^{\dagger}\right)-\frac{1}{d}
\nonumber
\\
&= \delta_{ij}.
}
So $S^{U}_{ij}$ is an orthogonal matrix.

We now appeal to the assumption that $\qgp{\qp}$ contains a subgroup isomorphic to $\PU(d)$.  So for each $U\in \PU(d)$ there exists an orthogonal matrix $R^{U}_{jk} \in \qgp{\qp}$.  It can be proven that, up to equivalence, the adjoint representation of~$\PU(d)$ is the only nontrivial irreducible representation of~$\PU(d)$ having degree $d^2 - 1$ or smaller, when $d \geq 2$~\cite{GroupAlg}.  Thus, the two representations here must be equivalent, so that
\ea{
R^U = T S^U T^{-1}
\label{eq:equivRepEq}
}
for all $U$ and some fixed orthogonal matrix $T$.  Summing over $k$ on both sides of
\ea{
\sum_{j} R^U_{ij} T_{jk} = \sum_{j}T_{ij}S^U_{jk}
}
and appealing to the fact that the representations are irreducible on the subspace orthogonal to $c$ we deduce that
\ea{
\sum_{j} T_{ij} = t
}
for some constant $t$, independent of $i$.  Similarly
\ea{
\sum_{i} T_{ij} = s
}
for some constant $s$, independent of $i$.  Since
\ea{
d^2 t = \sum_{ij} T_{ij} = d^2 s,
}
we must in fact have $s=t$.  Multiplying both sides of
\ea{
\sum_j T_{ij} = t
}
by $T_{ik}$ and summing over $i$ we find
\be
1= t \sum_i T_{ik} = t^2.
\ee
So $t^2 = \pm 1$.  If $t=-1$ we can make the replacement $T\to -T$ without changing Eq.~(\ref{eq:equivRepEq}).  We may therefore assume, without loss of generality,
\be
\sum_j T_{ij} =\sum_j T_{ji} = 1
\ee
for all $i$.  It follows that, if we define
\ea{
\tilde{\Pi}_i &= \sum_{j} T_{ij} \Pi_j,
}
then the $\tilde{\Pi}_i$ are also a quasi-SIC.  Moreover
\ea{
U\tilde{\Pi}_i U^{\dagger} &=
  \sum_{j,k} T\vpu{U}_{ij} S^U_{jk} T^{\rm{T}}_{kl} \tilde{\Pi}_l
\nonumber
\\
&= \sum_{l} R^U_{il} \tilde{\Pi}_l.
}
Suppose we now use the $\tilde{\Pi}_i$ to map $\qp$ into operator space by defining
\be
\rho_q = \sum_j \left((d+1) q(j) -\frac{1}{d}\right) \tilde{\Pi}_j
\ee
for all $q\in \qp$.  It follows from the foregoing that, for all $q\in \qp$
\ea{
U\rho_q U^{\dagger} &= \rho_{q'},
}
where
\ea{
q'(j) =  \sum_k R^{U}_{kj}q(k)
}
is also in $\qp$.  It follows that the quasi-state space $\{\rho_q \colon q \in \qp\}$ is invariant under unitary transformations.

\section{An Alternate Route to the Fundamental Inequalities}

In the main text, we began with the urgleichung and eventually arrived at the fundamental inequalities
\begin{equation}
\frac{1}{d(d+1)} \leq \inprod{p}{s} \leq \frac{2}{d(d+1)},
\end{equation}
proving in Theorem~\ref{tm:qplexPolarity} that a self-polar subset of the out-ball $\bo$ is a maximal germ.  Because Theorem~\ref{tm:qplexPolarity} is an if-and-only-if result, it is natural to wonder if one could argue for the fundamental inequalities from a different premise, in which case self-polarity would be a consequence of assuming maximality.

One counterintuitive feature of quantum theory is that two quantum states can be perfectly distinguishable by a von Neumann measurement, yet less distinguishable by an informationally complete measurement~\cite{CFS2002, stacey-qutrit, stacey-hoggar}.  This runs counter to experience with classical probability and stochastic processes, which leads one to think of a non-IC measurement as a coarse-graining (or a convolution by some kernel) of an IC measurement.  If hypothesis $A$ is that the system is in region $A$ of phase space, and hypothesis $B$ is that the system is in region $B$, classical intuition says that hypothesis $A$ and $B$ being perfectly distinguishable means that their regions have no overlap.  Therefore, if we measure where the system is in phase space---the fundamental classical image of what an IC experiment can be---then some outcomes would be consistent with hypothesis $A$, some with hypothesis $B$, and none with both.

In quantum physics, two pure states being orthogonal means that the overlap of their SIC representations is minimal, but minimal is not zero.  If we regard two orthogonal states $\ket{0}$ and $\ket{1}$ as two hypotheses that Alice can entertain about how a system will behave, then there exists some measurement with the property that no outcome is compatible with both hypotheses.  Whatever the outcome of that experiment, one hypothesis or the other will be excluded~\cite{CFS2002}.  But the two hypotheses $\ket{0}$ and $\ket{1}$ have SIC representations $p_0$ and $p_1$, and $\inprod{p_0}{p_1} = 1/(d(d+1))$.  The measurement that defines the SIC representation, although informationally complete, does not itself automatically exclude either hypothesis, because some possible outcomes of it are consistent with both.

With this motivation, we derive quantum state space in the following way.  We again postulate a Bureau of Standards measurement, but we assume as little as possible about the meshing of probability distributions.  Instead of the urgleichung (\ref{eq:gen-urgleichung}), we merely postulate some functional relation~\cite{qbism-greeks},
\begin{equation}
q(j) = F\left(\{p(i), r(j|i): i = 1,\ldots,N\}\right),
\end{equation}
with the property that state vectors with nonzero overlap are incompatible hypotheses with respect to some measurement.  We assume, then, that the inner product of two state-space vectors is bounded below, and take this as an aspect of quantum strangeness.  Then, we assume that certainty is bounded.  This is less strange, since even classically, we can imagine a constraint that probability distributions can never get too focused.  These two postulates tell us that the inner product of two state vectors lies in the interval $[L, U]$.  Note that $U$, being an upper bound on $\inprod{p}{p}$, has an interpretation as an upper bound on an \emph{index of coincidence,} which is inversely related to the \emph{effective population size}~\cite{Leinster12, stacey-thesis, stacey-qutrit}.  Imagine an urn filled with marbles in $N$ different colors.  We draw a marble at random from the urn, note its color, replace it and draw at random again.  If all colors are equally probable, then the probability of obtaining the same color twice in succession is $1/N$.  More generally, if the colors are weighted by some probability vector $p$, then the probability of obtaining the same color twice---i.e., a ``coincidence'' of colors---is $\inprod{p}{p}$.  So, we can take the reciprocal of this quantity as the effective number of colors present.  Regarding the probability vector $p$ as a hypothesis about a system, the effective population size
\begin{equation}
N_{\rm eff}(p) = \frac{1}{\inprod{p}{p}}
\end{equation}
is the effective number of experiment outcomes that are compatible with that hypothesis.  Given two probability vectors $p$ and $s$, we can take
\begin{equation}
N_{\rm eff}(p,s) = N_{\rm eff}(p) N_{\rm eff}(s)\,\inprod{p}{s} = \frac{\inprod{p}{s}}{p^2 s^2}
\end{equation}
as the effective number of outcomes compatible with both hypotheses $p$ and $s$.

By following the logic in Section~\ref{sec:basic}, we can get an upper bound on the size of a Mutually Maximally Distant set.  If we postulate that this bound is saturated, we can relate $L$, $U$ and $N$ to the effective dimensionality:
\begin{equation}
d = 1 + \frac{U - 1/N}{1/N - L}.
\end{equation}
If we take $L = 0$ in the above expression, which we can heuristically
regard as going to the ``classical limit,'' then we end up with $d = NU$.  This
says that the total number of MMD states is the total size of the
sample space ($N$), divided by the area per state, i.e., the effective
population size $1/U$.

Instead of taking $L = 0$, if we choose---for whatever reason---that $L = U/2$ and that $N = d^2$, we get the familiar upper and lower bounds that define a germ.  Postulating that our state space is maximal then implies that it is self-polar.  Because the state space is contained within the probability simplex, it contains the polar of the probability simplex, which is the basis simplex.  By Theorem~\ref{tm:preservation-symmetry}, all the isometries of this set are specified by the regular simplices whose vertices are valid states lying on the out-sphere.

Suppose that $p$ and $s$ are two \emph{pure} states.  Then
\begin{equation}
N_{\rm eff}(p) = N_{\rm eff}(s) = \frac{d(d+1)}{2},
\end{equation}
and
\begin{equation}
N_{\rm eff}(p,s) = \frac{d^2(d+1)^2}{4} \inprod{p}{s} \geq \frac{d(d+1)}{4}.
\end{equation}
Thus, the fundamental inequalities imply that two hypotheses of maximal certainty can only disagree by so much that their overlap is \emph{half} the effective number of outcomes consistent with either hypothesis alone.

We note that Wootters~\cite{Wootters86}, Hardy~\cite{Hardy01} and others~\cite{CFS2001} have used various premises to argue for a relation of the form $N = d^2$.  It bears something of the flavor of a classical state space whose points are labeled by discretized position and momentum~\cite{spekkens2014, Weyl27}.  (And this resonates sympathetically with the fact that the Weyl--Heisenberg group, which is projectively equivalent to $\mathbb{Z}_d \times \mathbb{Z}_d$, is the canonical way to generate SICs~\cite{Zhu10, appleby2016}.) However, at the moment we find it neither an obvious choice nor a consequence of a uniquely compelling assumption.

\end{document}